 \numberwithin{equation}{section}
\newcommand{\hi}
\newcommand{\bmath}{\begin{equation}}
\newcommand{\emath}{\end{equation}}
\newcommand{\bmathnn}{\begin{eqnarray*}}
\newcommand{\emathnn}{\end{eqnarray*}}
\DeclareMathOperator*{\argmax}{arg\,max}
\DeclareMathOperator*{\argmin}{arg\,min}
\declaretheorem[numberwithin=section]{theorem}
\declaretheorem[sibling=theorem]{lemma}
\declaretheorem[sibling=theorem]{proposition}
\declaretheorem[sibling=theorem]{claim}
\declaretheorem[sibling=theorem]{corollary}
\declaretheorem[sibling=theorem]{definition}
\def\N{\ensuremath{\mathbb{N}}}
\def\R{\mathbb{R}}
\def\alg{\textup{ALG}}
\def\poolsize{\zeta}
\def\eps{\epsilon}
\def\optoff{\textup{OMN}}
\def\opton{\textup{OPT}}
\def\algGreedy{{\textup{Greedy}}}
\def\algcritical{{\textup{Patient}}}
\def\A{A}
\def\a{a}
\def\setminus{-}
\def\1{{\bf{1}}}
\def\b{b}
\renewcommand{\P}[1]{{\mathbb{P}}\left[#1\right]}
\newcommand{\PP}[2]{{\mathbb{P}}_{#1}\left[#2\right]}
\newcommand{\E}[1]{{\mathbb{E}}\left[#1\right]}
\newcommand{\EE}[2]{{\mathbb{E}}_{#1}\left[#2\right]}
\newcommand{\norm}[1]{\|#1\|}
\DeclareMathOperator{\unif}{unif}
\DeclareMathOperator{\mix}{mix}
\DeclareMathOperator{\TV}{TV}
\DeclareMathOperator{\polylog}{polylog}
\newcommand\loss[1]{\mathbf{L}(#1)}
\newcommand\W[1]{\mathbf{W}(#1)}
\newcommand\apatient[1]{\textup{Patient}(#1)}
\newcommand\amechpatient[1]{\textup{Patient-Mechanism}(#1)}
\newenvironment{proofof}[1]{{\medbreak\noindent \em Proof of #1.  }}{\hfill\qed\medbreak}
\newenvironment{proofoverview}{{\medbreak\noindent\em Proof Overview. }}{\hfill\qed\medbreak}
\def\C{\ensuremath{\mathcal{C}}}
\begin{document}

\title{Dynamic Matching Market Design\thanks{We thank Paul Milgrom and Alvin Roth for valuable comments and suggestions. We also thank Itai Ashlagi, Timothy Bresnahan,
Gabriel Carroll, Fuhito Kojima, Matthew Jackson, Muriel Niederle, Afshin Nikzad, Malwina Luczak, Michael Ostrovsky,  Bob Wilson, and Alex Wolitzky for their valuable comments, as well as several seminar participants for helpful suggestions. All errors remain our own.} }
\date{Draft: February 2014}

\author{Mohammad Akbarpour\thanks{Department of Economics, Stanford University. Email: \protect\url{mohamwad@stanford.edu}} \and Shengwu Li\thanks{Department of Economics, Stanford University. Email: \protect\url{shengwu@stanford.edu}} \and Shayan Oveis Gharan\thanks{Computer Science Division, U.C. Berkeley. Supported by a Miller fellowship. Email: \protect\url{oveisgharan@berkeley.edu}}}

\maketitle



\begin{abstract}


We introduce a simple benchmark model of dynamic matching in networked markets, where agents arrive and depart stochastically and the network of acceptable transactions among agents forms a random graph.
We analyze our model from three perspectives: waiting, optimization, and information. The main insight of our analysis is that waiting to thicken the market can be substantially more important than increasing the speed of transactions, and this is quite robust to the presence of waiting costs. From an optimization perspective, na\"{i}ve local algorithms, that choose the right time to match agents but do not exploit global network structure, can perform very close to optimal algorithms. From an information perspective, algorithms that employ even partial information on agents'  departure times perform substantially better than those that lack such information. To elicit agents' departure times, we design an incentive-compatible continuous-time dynamic mechanism without transfers.

\end{abstract}

\strut

{\footnotesize Keywords: Market Design, Matching, Networks, Continuous-time Markov Chains, Mechanism Design}

\strut

{\footnotesize JEL Classification Numbers: D47, C78, C60}

\thispagestyle{empty}
\newpage
\thispagestyle{empty}
\tableofcontents
\newpage
\onehalfspacing

\setcounter{page}{1}
\section{Introduction}
Economics has extensively studied the problem of matching in static settings (see \cite{GS62,CK81,KC82,RS92,Roth04,HM05,Roth07,HK10}).  In such settings, a social planner observes the set of agents, and their preferences over partners, or contracts, or sets thereof.  The planner's problem is to find a matching algorithm with desirable properties; \emph{e.g.} stability, efficiency, or strategy-proofness.  The algorithm is run, the matching is made, and the world ends. Some seasonal markets, such as school choice systems and the National Residencies Matching Program, are appropriately conceived as static assignment problems without inter-temporal spillovers.

There has been comparatively little study of dynamic matching, \emph{viz.} the problem of finding good algorithms for matching markets that evolve over time, where the planner must make allocations in the present while facing an uncertain future.  In such markets, agents arrive at different times, with needs that may be fulfilled by an appropriate transaction brokered by a social planner. If their needs are not met, they may depart or their needs may expire, removing the possibility of a future transaction.  Stochastic arrivals and departures are an institutional feature of many real world marketplaces. Some natural examples are:

\begin{itemize}
\item \textbf{Kidney exchange:} In paired kidney exchanges, patient-donor pairs arrive gradually over time. They stay in the market to find a compatible pair, and may leave the market if the patient's condition deteriorates to the point where kidney transplants become infeasible.
\item \textbf{Real estate markets:} In real estate markets, agents (buyers and sellers) arrive at the market non-simultaneously, trying to rent/buy a property with certain characteristics, or to find a renter with certain characteristics.  The central broker (say, a real estate agent) has substantial information about buyers and sellers' characteristics.  Potential landlords and potential renters may leave the market if, for example, they find a gainful trade elsewhere.
\item \textbf{Allocation of workers to time-sensitive tasks}:  Both within
firms, and in online labor markets such as oDesk, planners have to
allocate workers to tasks that are profitable to undertake.  Tasks
arrive continuously, and different workers are suited to different
tasks.  Tasks may expire, and workers may cease to be available.
\end{itemize}

In static matching markets, such as the canonical stable marriage problem introduced by Gale and Shapley \cite{GS62}, the social planner solves a one-time problem by choosing which agents to match.  In contrast, in a dynamic setting, the passage of time brings new possibilities for transactions and extinguishes existing possibilities.  For example, as shown in \autoref{fig:stablematching}, the planner can match $s_1$ to $b_1$ at time $t_0$, but then he cannot match $s_2$ and $b_2$ in the next period. In contrast, if he waits and postpones matching agents for one period, he can match $s_1$ to $b_2$ and $s_2$ to $b_1$, which is a better solution.
Hence, in a dynamic setting, the social planner confronts a new question: ``\emph{When} should agents be matched?'' The first goal of this paper is to answer this question.
One might think that this is a second-order question, in the sense that, as in the static setting, the first-order question for a social planner is \emph{which} agents to match. The second goal of this paper is to compare the relative importance of these two questions.
This helps us to think systematically about the trade-offs policymakers face in settings where the timing of transactions is a design parameter, rather than an exogenous constraint.

\begin{figure}
\centering
\begin{tikzpicture}
\begin{scope}[shift={(-3.5,0)}]
\node at (-1.5,0) (a) {\includegraphics[scale=0.08]{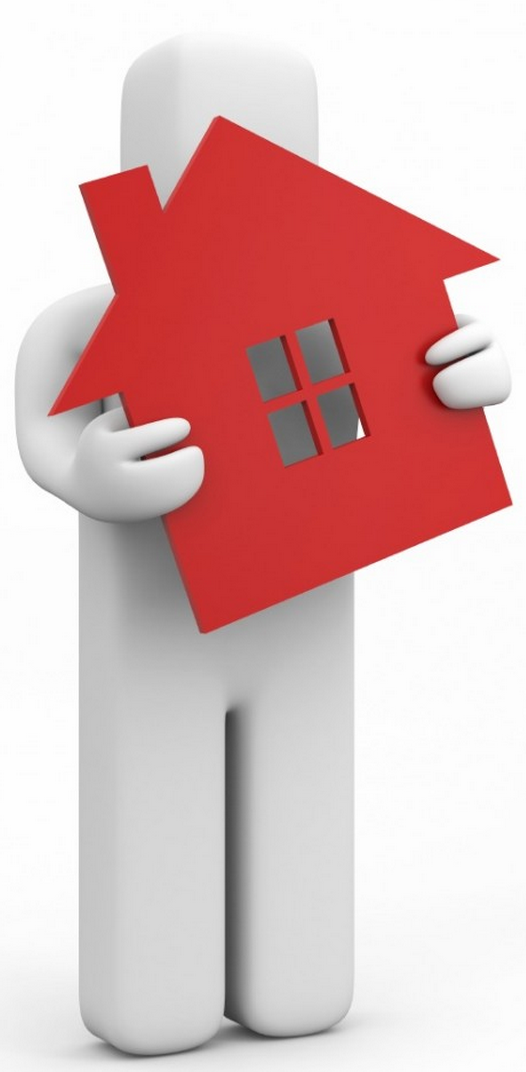}};
\node [anchor=north east, inner sep=10] at (a) {$s_1$};
\node at (1.5,0) (b) {\includegraphics[scale=0.12]{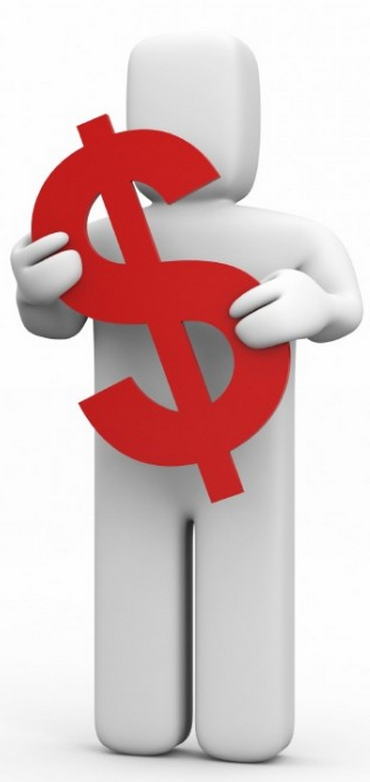}} edge (a);
\node [anchor=north west,inner sep=10] at (b) {$b_1$};
\node [anchor=south] at (0,-3) {Market at time $t_0$};
\end{scope}
\begin{scope}[shift={(3.5,0)}]
\node at (-0.5, 1) (a) {\includegraphics[scale=0.08]{seller.png}};
\node at (2.5, 1) (b) {\includegraphics[scale=0.12]{buyer.png}} edge (a);
\node at (-0.5, -1) (c) {\includegraphics[scale=0.08]{seller.png}} edge (b);
\node at (2.5, -1) (d) {\includegraphics[scale=0.12]{buyer.png}} edge (a);
\node [anchor=north east,inner sep=10] at (a) {$s_1$};
\node [anchor=north west,inner sep=10] at (b) {$b_1$};
\node [anchor=north east, inner sep=10] at (c) {$s_2$};
\node [anchor=north west, inner sep=10] at (d) {$b_2$};
\node [anchor=south] at (1,-3) {Market at time $t_0+1$};
\end{scope}
\end{tikzpicture}
\caption{A schematic real estate market, where there is a line between a seller and a buyer if and only if they are mutually interested in trade. In a static setting at time $t_0$, the optimum solution matches $s_1$ to $b_1$. But in a dynamic setting a new seller and a new buyer may arrive at the market in the next period, so the planner can wait and obtain a better solution by matching $s_1$ to $b_2$ and $s_2$ to $b_1$.}
\label{fig:stablematching}
\end{figure}

In this paper, we analyze a stylized but illuminating model of dynamic matching on networks. In our model, agents arrive at the market according to Poisson processes, and remain in the market for an interval drawn from an exponential distribution.  Abstracting from prices, our model uses simple preferences, where transactions between agents are either acceptable or unacceptable, generated according to a known distribution.  The network of acceptable transactions forms (under certain conditions) an Erd\H{o}s-R\'{e}nyi random graph.  Agents do not observe the set of acceptable transactions, and are reliant upon the planner to match them to each other.  We say that an agent \emph{perishes} if she leaves the market unmatched.  

The planner's problem is to design a matching algorithm; that is, at any point in time, to select a subset of acceptable transactions and broker those trades.  The planner observes the current set of agents and acceptable transactions, but has only probabilistic knowledge about the future.  The planner may have knowledge about which agents' needs are urgent, in the sense that he may know which agents will perish imminently if not matched.  
The goal of the planner  is to maximize the sum of the discounted utilities of all agents. In the important special case where the cost of waiting is zero, the planner's goal is equivalent to minimizing the proportion of agents who perish.   


We design two simple matching algorithms and a class of interpolating algorithms. The \emph{Greedy Algorithm} attempts to match each agent upon her arrival; it treats each instant as a static matching problem without regard for the future. The \emph{Patient Algorithm} attempts to match only urgent cases. The \emph{$Patient(\alpha)$ Algorithm}, is a class of variants of the Patient algorithm with a smaller average waiting time determined by a parameter $\alpha \geq 0$. This class is specifically appealing when dealing with discounting.  All these algorithms are \emph{local}, in the sense that they look only at the immediate neighbors of agents rather than at the global graph structure.  We compare the performance of these algorithms with two optimal benchmarks, one that has \emph{unlimited computational power} and one that has \emph{perfect foresight}.


We first consider the case of perfectly patient agents, and then show that our results are not substantially altered by discounting. Our results are as follows:  First, we show that waiting to thicken the market can dramatically reduce the rate of perishing compared to `optimal static' policies.  In particular, the fraction of perished agents under the Patient algorithm is exponentially smaller (in the average degree of agents) than the fraction under the Greedy algorithm.  This quantifies the value of waiting in dynamic matching markets.  For instance, consider a market where 1,000  new agents arrive each period, each of them stays for one period on average, and the probability of an acceptable transaction between two agents is $\frac{1}{50}$. Suppose that the market runs for 1,000 periods. Then the expected number of perished agents under the Greedy algorithm is at least 23,800, whereas the same number for the Patient algorithm is no more than $23$.


Our second result is that gains from detailed consideration of the graph structure are small compared to the gains from good timing decisions.  That is, we show that local algorithms that do not exploit the entire graph structure are close to the optimal benchmarks.  If the planner cannot identify urgent cases, then the Greedy algorithm is close to optimal.  Correspondingly, if the planner can identify urgent cases, then the Patient algorithm is close to optimal.  This is very fortunate, because the complexity of the graph structure makes it intractable to compute optimal solutions via standard dynamic programming techniques.  A corollary of this result is that urgency information and waiting are complementary:  It is only when the planner is informed about urgent cases that waiting can produce large gains.



Our third result is that our welfare comparisons are quite robust to the presence of waiting costs. In many real-world marketplaces, brokers employ variants of the Greedy Algorithm to minimize the waiting time of the agents.\footnote{For instance, our analysis of the Greedy Algorithm encompasses waiting list policies where brokers make transactions as soon as they are available, giving priority to agents who arrived earlier.}  The previous results indicate that a consequence of this strategy is a substantial increase in the fraction of perished agents. We show that if the discount rate is not ``too large'', then for a tuned value of $\alpha$, the $\apatient{\alpha}$ algorithm generates a higher social welfare than the Greedy algorithm. In other words, if agents are not too impatient, a matching algorithm which thickens the market by slightly increasing waiting times delivers higher social welfare than a matching algorithm which simply optimizes the speed of transactions.


Our forth result discusses the incentive-compatible implementation of the $\apatient{\alpha}$ algorithm. 
To implement this algorithm, the planner needs at least partial information about departure times of agents. 
If agents know the urgency of their needs, but the planner does not, they may have incentives to mis-report their departure times so as to hasten their match or to increase their probability of getting matched.  We show that if agents are not too impatient, a dynamic mechanism without transfers can elicit such information.  Specifically, we show that it is arbitrarily close to optimal for agents to report the truth in large markets.


%

\subsection{Related Work}\label{relatedworks}

There have been several studies on dynamic matching in the literature of Computer Science, Economics, and Operations Research that each fit a specific market-place, such as the real estate market, paired kidney exchange, or online advertising. But, to the best of our knowledge, no previous work has offered a general framework for dynamic matching in networked markets, and no previous work has considered stochastic departures.

Kurino \cite{Kur09} and Bloch and Houy \cite{Blo12} study an overlapping generations model of the housing market. In their models, agents have deterministic arrivals and departures. In addition, the housing side of the market is infinitely durable and static, and houses do not have preferences over agents.  In the same context, Leshno \cite{Les12} studies a one-sided dynamic housing allocation problem in which houses arrive stochastically over time. His model is based on two waiting lists and does not include a network structure. In addition, agents remain in the waiting list until they are assigned to a house; i.e., they do not perish.

In the context of live-donor kidney exchanges, \"Unver  \cite{Unv10} studies a dynamic matching model of the market in which agents have multiple types. In his model, agents never perish and he does not analyze the effects of the graph structure on optimal policies explicitly. In the Operations Research and Computer Science literatures, dynamic kidney matching has been extensively studied, see e.g.,  \cite{Zenios2002, Su2005, Awasthi2009, DPS12b}. Perhaps most related to our work is that of Ashlagi, Jaillet, and Manshadi \cite{AJM13} who construct a discrete-time finite-horizon model of dynamic kidney exchange. In their model, one new agent arrives at the pool at each period, but, unlike our model, agents who are in the pool neither perish, nor bear any waiting cost. Their model has two types of agents, one easy to match and one hard to match, which then creates a specific graph structure that fits well to the kidney market.


The problem of online matching has been extensively studied in the literature of online advertising.  In this setting, advertisements are static, but queries arrive adversarially or stochastically over time.  Unlike our model, queries persist in the market for exactly one period.  Karp, Vazirani and Vazirani \cite{karp1990optimal} introduced the problem and designed a randomized matching algorithm. Subsequently, the problem is considered under several arrival models with pre-specified budgets for the advertisers, \cite{mehta2007adwords,goel2008online,feldman2009online,MOS12}.

In contrast to dynamic matching, there are numerous investigations of dynamic auctions and dynamic mechanism design. Parkes and Singh \cite{parkes2003mdp} generalize the VCG mechanism to a dynamic setting. Athey and Segal  \cite{athey2007designing} construct efficient and incentive-compatible dynamic mechanisms for private information settings. Pai and Vohra  \cite{pai2013optimal} and Gallien \cite{gallien2006dynamic} extend Myerson's   optimal auction result \cite{myerson1981optimal} to dynamic environments. We refer interested readers to Parkes  \cite{Parkes2007} for a review of the dynamic mechanism design literature.


\medskip
The rest of the paper is organized as follows. 
\autoref{sec:model} introduces our dynamic matching market model and defines the objective. \autoref{sec:contributions} presents our main contributions; we recommend readers to take a look at this section to see a detailed description of our results without getting into the details of the proofs. Then in \autoref{sec:optimumbounds}, we analyze two optimal policies as benchmarks and provide analytical bounds on their performance. In \autoref{sec:markovchain} we model our algorithms as Markov Chains and bound the mixing time of the chains. 
\autoref{twoalgo} goes through a deep analysis of the Greedy algorithm, the Patient algorithm, and the $\apatient{\alpha}$ algorithm and bounds their performance. In \autoref{sec:welfare}, we take the waiting cost into account and bound the social welfare under different algorithms.  \autoref{mechanism} considers the case where the urgency of an agent's needs is private information, and exhibits a truthful direct revelation mechanism.   \autoref{conclusion} suggests generalizing extensions for the model and concludes.  

\section{The Model}\label{sec:model}
In this section, we provide a stochastic continuous-time model for a bilateral matching market that runs in the interval $[0,T]$. 
Agents arrive at the market at rate $m$ according to a Poisson process, i.e.,
in any interval $[t,t+1]$, $m$ new agents enter the market in expectation. Throughout the paper we assume $m\geq 1$. For $t\geq 0$, let $\A_t$ be the set of the agents in our market at time $t$, and let $Z_t:=|A_t|$.
We refer to $\A_t$ as the {\em pool} of the market.
We start by describing the evolution of $\A_t$ as a function of $t\in [0,T]$. Since we are interested in the limit behavior of $\A_t$, without loss of generality,
we may assume $\A_0=\emptyset$.
We use $\A^n_t$ to denote the set of agents who enter the market at time $t$.
Note that with probability 1, $|\A^n_t|\leq 1$.

An agent $\a\in\A_t$ leaves the market at time $t$, if either of the following two events occur at time~$t$:
\begin{itemize}
\item $\a$ is matched with another agent $\b\in \A_t$,
\item $\a$ becomes {\em critical}.
\end{itemize}

Each agent becomes critical according to an independent Poisson process with rate $\lambda$.  This implies that, if an agent $\a$ enters the market at time $t_0$, then she becomes critical at some time $t_0+X$ where $X$ is an exponential random variable with parameter $\lambda$. Therefore, for any matching algorithm, $\a$ leaves the market at some time $t_1$ where $t_0\leq t_1\leq t_0+X$.
The {\em sojourn} of $\a$ is the length of the interval that $\a$ is in the pool, i.e., $s(a):=t_1-t_0$.
An agent $\a$ {\em perishes} if $\a$ leaves the market unmatched.\footnote{The interpretation of perishing is grimly obvious in the case of kidney exchanges; but we intend this as a term of art.}
We use $\A^c_t\subseteq \cup_{0\leq \tau\leq t} A_\tau$ to denote the set of agents that are critical at time $t$. 
Note that for any $t\geq 0$, with probability 1, $|\A^c_t|\leq 1$.

An agent receives zero utility if she leaves the market unmatched. If she is matched,
she receives a utility of 1 discounted at rate $\delta$.
Formally,
$$ u(a) := \begin{cases}
e^{-\delta s(a)} & \text{if $a$ is matched}\\
0 & \text{otherwise.}
\end{cases}$$

Each pair of distinct agents regards the bilateral transaction between them as \emph{acceptable}
  with probability $d/m$, independent of any other pair of agents in the market.
  For any $0\leq t$, let $E_t \subseteq \A_t\times \A_t$ be the set of acceptable bilateral transactions between the agents in the market at time $t$, and let $G_t=(\A_t, E_t)$.
  Note that if $\a,\b\in \A_t$ and $\a,\b\in \A_{t'}$, then $(\a,\b)\in E_t$ if and only if $(\a,\b)\in E_{t'}$,
  \textit{i.e.} the acceptable bilateral transactions are persistent  throughout the process.
  For an agent $\a\in \A_t$ we use $N_t(\a)\subseteq \A_t$ to denote the set of neighbors of $\a$ in $G_t$.
  It follows that, if the planner does not match any agents, then for any fixed $t\geq 0$,
  $G_t$ is distributed as an Erd\"os-R\'eyni graph with parameter $d/m$ 
and $d$ is the average degree of the agents \cite{ER60}. To make our model interesting, throughout the paper we assume $0\leq d\leq m$.


Let $\A=\cup_{t\leq T} \A^n_{t}$, let $E\subseteq \A\times \A$ be the set of acceptable transactions between agents in $A$\footnote{Note that $E \supseteq \cup_{t\leq T} E_t$, and the two sets are not typically equal, since two agents may find it acceptable to transact, even though they are not in the pool at the same time because one of them was matched earlier.}, and let $G=(\A,E)$. Observe that any realization of the above stochastic process is uniquely defined given  variables $A^n_t,A^c_t$ for all $t\geq 0$ and the set of acceptable transactions, $E$.
We call a vector $(m,d,\lambda)$ a {\em dynamic matching market}. It turns out that without loss of generality (by normalizing $m$ and $d$) we can assume $\lambda=1$ (see \autoref{lem:equivalence} for details). So, throughout the paper, unless otherwise specified, we assume $\lambda=1$.

\paragraph{Online Algorithms.}
An online algorithm at any time $t_0$ only knows $G_t$ for $t\leq t_0$ and does not know anything about $G_{t'}$
for $t'>t_0$. We enrich our model by letting the online algorithm exploit the knowledge of critical agents at time $t$; nonetheless, we will extend several of our theorems to the case where the online algorithm
does not have this knowledge. As will become clear, this assumption has a significant impact on the performance of any online algorithm.  At any time $t\geq 0$ an online algorithm selects a possibly empty {\em matching} in $G_t$ (recall
that a set of edges $M_t\subseteq E_t$ is a matching if no two edges share the same endpoints).

We emphasize that the random sets $A_t,M_t,E_t,N_t$ and the random variable $Z_t$ are functions of the underlying matching algorithm.  We abuse the notation and do not include the name of the algorithm when we analyze these variables.



\paragraph{The Goal.} 
 The goal of the Planner is then to design an {\em online} algorithm that maximizes the social welfare, i.e., the sum of the utility of all agents in the market.  Let $\alg(T)$ be the set of matched agents by time $T$,
$$ \alg(T) := \{a\in A: a \text{ is matched by \alg}\}.$$
We may drop the $T$ in the notation $\alg(T)$ if it is clear from context. 

We define the social welfare of an online algorithm to be the  expectation of the average of the utility of all agents in the interval $[0,T]$:
$$\W{\alg} := \E{\frac{1}{mT} \sum_{a\in \alg(T)} e^{-\delta s(a)}} 
$$

The goal of the Planner is to choose an online algorithm that maximizes the welfare for large values of $T$ (see \autoref{thm:Greedyanal}, \autoref{thm:criticalanal}, and \autoref{thm:welfarecritical} for the dependency of our results to $T$).

For ease of exposition, we initially assume that $\delta = 0$, i.e. the \emph{waiting cost} is zero.  In this case, the goal of the Planner is to match the maximum number of agents, or equivalently to minimize the number of perished agents.  The {\em loss} of an online algorithm $\alg$ is defined as the ratio of the {\em expected}\footnote{We consider the expected value as a modeling choice. One may also be interested in objective functions that depend on the \emph{variance} of the performance, as well as the expected value. As will be seen later in the paper, the performance of our algorithms are highly concentrated around their expected value, which guarantees that the variance is very small in most of the cases.} number of perished agents to the expected size of $A$, 
$$ \loss{\alg} :=\frac{\E{|A\setminus \alg(T)\setminus A_T|}}{\E{|A|}} = \frac{\E{|A\setminus \alg(T)\setminus A_T|}}{mT}. $$


When we assume $\delta = 0$, we will use the $\mathbf{L}$ notation for the planner's loss function.  When we consider $\delta > 0$, we will use the $\mathbf{W}$ notation for social welfare.

Each  of the above optimization problems can be modeled as a Markov Decision Process (MDP)\footnote{We recommend \cite{Ber00} for background on Markov Decision Processes.} that is defined as follows.
The state space is the set of pairs $(H, B)$ where $H$ is any undirected graph of any size,
and if the algorithm knows the set of critical agents, $B$ is a set of at most one vertex of $H$ representing the corresponding critical agent.
The action space
for a given state is the set of matchings on the graph $H$.  Under this conception, an algorithm designer wants to minimize the loss or maximize the social welfare over a time period $T$.

Although this MDP has infinite number of states, with small error one can reduce the state space
to graphs of size at most $O(m)$.
Even in that case, this MDP has an exponential number of states in
$m$, since there are at least $2^{{m\choose 2}}/m!$ distinct graphs of size $m$\footnote{This lower bound is derived as follows:  When there are $m$ agents, there are $m\choose 2$ possible edges, each of which may be present or absent.  Some of these graphs may have the same structure but different agent indices.  A conservative lower bound is to divide by all possible re-labellings of the agents ($m!$).}, so for even moderately large markets\footnote{For instance, for $m=30$, there are more than $10^{98}$ states in the approximated MDP.}, we cannot apply tools from Dynamic Programming literature to find the optimum online matching algorithm.

\paragraph{Optimum Solutions.}
In many parts of this paper we compare the performance of an online algorithm to the performance of
an optimal \emph{omniscient} algorithm. Unlike any online algorithm, the omniscient algorithm
has full information about the future, i.e., it knows the full realization of the graph $G$.\footnote{In computer science, these are equivalently called \emph{offline} algorithms.}
Therefore, it can return the maximum matching in this graph as its output, and thus minimize the fraction of perished agents.
 Let $\optoff(T)$ be the set of matched agents in the {\em maximum} matching of $G$.
The loss function under the omnsicient algorithm at time $T$ is
$$ \loss{\optoff} := \frac{\E{|A\setminus \optoff(T)\setminus A_T|}}{mT} $$
Observe that for any online algorithm, \alg, and any realization of the probability space, we have $ |\alg(T)| \leq |\optoff(T)|$.\footnote{This follows from a straightforward revealed-preference argument: For any realization, the optimum offline policy has the information to replicate any given online policy, so it must do weakly better.}

It is also instructive to study the optimum online algorithm, an online algorithm with {\em unlimited}
computational power. By definition, an optimum online algorithm can solve the exponential-sized state
space Markov Decision Problem and return the corresponding matching. We consider
two different optimum online algorithms, $\opton^c$, the algorithm that knows the set of critical agents at time $t$ (with associated loss $\loss{\opton^c}$),
and $\opton$, the algorithm that does not know these sets (with associated loss $\loss{\opton}$). Let $\alg$ be the loss under any online algorithm that does not know the set of critical agents
at time $t$. It follows that
$$ \loss{\alg} \geq \loss{\opton}\geq \loss{\opton^c} \geq \loss{\optoff}.$$
Note that $|\alg|$ and $|\opton|$ are generally incomparable, and depend on the realization of $G$
we may even have $|\alg| > |\opton|$.  Similarly, let $\alg^c$ be the loss under any online algorithm that knows the set of critical agents
at time $t$. It follows that
$$ \loss{\alg^c} \geq \loss{\opton^c}\geq \loss{\optoff}.$$

\section{Our Contributions}\label{sec:contributions}
In this section, we present our main contributions. We provide high level intuition for the results, without getting into the details. The rest of the paper proves the results that we state here and adds detail. 

We first introduce two simple online algorithms 
and a class of interpolating algorithms. Then, we describe our results by comparing the performance of these algorithms with the optimum solutions that we described above. 

The first  algorithm is the Greedy algorithm, which mimics `match-as-you-go' algorithms used in many real marketplaces.  It delivers maximal matchings at any point in time, without regard for the future.

\begin{definition}[Greedy Algorithm:] If any new agent $\a$ enters the market at time $t$, then
	match her with an arbitrary agent in $N_t(\a)$ whenever $N_t(\a)\neq\emptyset$.
	We use $\loss{\algGreedy}$ and $\W{\algGreedy}$ to denote the loss and the social welfare under this algorithm, respectively.
\end{definition}
Note that	since $|\A^n_t|\leq 1$, in the above definition, we do not need to consider the case where more than one agent enters the market. Observe  that the graph $G_t$ in the Greedy algorithm is (almost) always an empty graph. 
By definition, the Greedy algorithm does not use any information about the set of critical agents.

The second algorithm is a simple online algorithm that preserves two essential characteristics
of $\opton^c$ when $\delta = 0$ (recall that $\opton^c$ is the optimum online algorithm with knowledge of the set of critical agents):
\begin{enumerate}[i)]
\item A pair of agents $\a,\b$ get matched in $\opton^c$
only if one of them is critical.
This property is called the \emph{rule of deferral match}: Since $\delta = 0$, if
$\a,\b$ can be matched and neither of them is critical we can wait and match them later.

\item If an agent $\a$ is critical at time $t$ and $N_t(\a)\neq\emptyset$ then $\opton^c$ matches $\a$.
This property is a corollary of the following simple fact: matching
a critical agent does not increase the number of perished agents in any online algorithm.
\end{enumerate}
Our second algorithm is designed to be the simplest possible online algorithm that satisfies both of the above properties.
\begin{definition}[Patient Algorithm] If an agent $\a$ becomes critical at time $t$, then match her uniformly at random
with an agent in $N_t(\a)$ whenever $N_t(\a)\neq \emptyset$.
We use $\loss{\algcritical}$ and $\W{\algcritical}$ to denote the loss and the social welfare under this algorithm, respectively.\end{definition}

Observe that unlike the Greedy algorithm, here we need access to the set of critical agents at time $t$. We do not intend the timing assumptions about critical agents to be interpreted literally.  An agent's point of perishing represents the point at which it ceases to be socially valuable to match that agent.  Letting the Planner observe the set of critical agents is a modeling convention that represents high-accuracy short-horizon information about agent departures.    An example of such information is the Model for End-Stage Liver Disease (MELD) score, which accurately predicts 3-month mortality among patients with chronic liver disease.  The US Organ Procurement and Transplantation Network gives priority to individuals with a higher MELD  score, following a broad medical consensus that liver donor allocation should be based on urgency of need and not substantially on waiting time. \cite{W03}  Note that the Patient algorithm exploits \emph{only} short-horizon information about urgent cases, as compared to the Omniscient algorithm which has full information of the future.

The third algorithm interpolates between the Greedy and the Patient algorithms. The idea of this algorithm is to
assign independent exponential clocks with rates $1/\alpha$ where $\alpha \in [0, \infty)$ to each agent $\a$. If agent $\a$'s exponential clock ticks, the market-maker attempts to match her. If she has no neighbors, then she remains in the pool until she gets critical, where the market-maker attempts to match her again. 

A technical difficulty with the above matching algorithm is that it is not memoryless; that is because when an agent gets critical and has no neighbors, she remains in the pool. 
Therefore, instead of the above algorithm, we study a slightly different matching algorithm (with a worse loss).  

\begin{definition}[The $\apatient{\alpha}$ algorithm]
Assign independent exponential clocks with rates $1/\alpha$ where $\alpha \in [0, \infty)$ to each agent $\a$. If agent $\a$'s exponential clock ticks or if an agent $\a$ becomes critical at time $t$, match her uniformly at random with an agent in $N_t(\a)$ whenever $N_t(\a)\neq \emptyset$. In both cases, if $N_t(a) = \emptyset$, treat that agent as if she has perished; i.e., never match her again.
We use $\loss{\apatient{\alpha}}$ and $\W{\apatient{\alpha}}$ to denote the loss and the social welfare under this algorithm, respectively.
\end{definition}

It is easy to see that an upper bound on the loss of the $\apatient{\alpha}$ algorithm is an upper bound on the loss of our desired interpolating algorithm.



By definition, an increase in $\alpha$ \emph{increases} the waiting time of the agents as their exponential clocks tick with a slower rate.\footnote{The use of exponential clocks is a modelling convention that enables us to reduce waiting times while retaining analytically tractable Markov properties.} In particular, $\apatient{\infty}$ algorithm is equivalent to the Patient algorithm, and $\apatient{0}$ is equivalent to the Greedy algorithm.


\medskip
In the rest of this section we describe our contributions. To avoid cumbersome notation, we state our results in the large-market long-horizon regime (i.e., as $m\to\infty$ and $T\to\infty$). In the formal versions we will explicitly study the dependency on $(m,T)$. For example,
 we show that the transition of the market to the steady state takes no more than $O(\log(m))$ time units.
In other words, many of the large time effects that we predict in our model can be seen in poly-logarithmic time in the size of the market.

\subsection{Value of Waiting}
What are the advantages of waiting to thicken the market, compared to implementing `static maximal' matches via the Greedy Algorithm? Our first family of results  study  the advantages of this deferral
and see under what circumstances deferral is (substantially) gainful.
First, we show that the Patient Algorithm 
strongly outperforms the Greedy Algorithm. 
More specifically, we show the number of perished agents in $\algcritical$  is exponentially (in $d$) smaller than the number of perished agents in $\algGreedy$.

\begin{theorem}\label{thm.inf1}
For $d\geq 2$, as $T,m\to\infty$,
$$\loss{\algcritical} \lesssim\footnote{We write $A\lesssim B$ to denote that $A\leq B$ up to lower order terms that depend on $m$ and $T$. In particular as $m,T\to\infty$ these terms go to zero.} (d+1)\cdot e^{-d/2}\cdot \loss{\algGreedy}. $$
\end{theorem}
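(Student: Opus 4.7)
The plan is to bound $\loss{\algGreedy}$ and $\loss{\algcritical}$ separately via steady-state analysis of the pool-size Markov chains set up in \autoref{sec:markovchain}, and then take the ratio. Under both algorithms the loss has a clean expression in terms of the stationary pool size $Z_t$. Under Greedy every pool agent who becomes critical simply perishes (the algorithm only matches arriving agents), so the perishing rate is $Z_t$ per unit time, giving $\loss{\algGreedy} \approx \E{Z_t}/m$. Under Patient only critical agents without an available neighbor perish, so $\loss{\algcritical} \approx \E{Z_t (1-d/m)^{Z_t-1}}/m$. In both expressions I would use the approximation $(1-d/m)^Z \approx e^{-Zd/m}$, valid up to a $(1+O(d/m))$ multiplicative factor.

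For the Greedy chain, balancing births ($+1$ per unmatched arrival) against deaths ($-1$ per matched arrival and per criticality event) yields $2m(1-d/m)^Z = m+Z$, i.e., $2 e^{-dg} = 1+g$ for $g:=\loss{\algGreedy}$. To show $\loss{\algGreedy} \geq 1/(2d+2)$ for $d \geq 2$, I would evaluate both sides at $g=1/(2d+2)$: since $d/(2d+2) \leq 1/2$, LHS $\geq 2e^{-1/2} > 1.21$, while RHS $\leq 1 + 1/6 < 1.17$. Because the LHS is strictly decreasing and the RHS strictly increasing in $g$, the unique crossing point must exceed $1/(2d+2)$.

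For the Patient chain, balancing births ($+1$ per arrival) against deaths ($-1$ per unmatched critical agent and $-2$ per matched critical agent) yields $m = 2Z - Z(1-d/m)^{Z-1}$. Since the second term on the right is nonnegative, this forces $Z \geq m/2$ in stationarity. Combined with $\loss{\algcritical} \approx (Z/m)\,e^{-Zd/m}$ and the fact that $\phi(z):=z e^{-zd}$ is maximized at $z=1/d\leq 1/2$ (so $\phi$ is decreasing on $[1/2,1]$ when $d \geq 2$), I get $\loss{\algcritical} \leq \phi(1/2) = (1/2)e^{-d/2}$. Dividing the two bounds gives $\loss{\algcritical}/\loss{\algGreedy} \leq (1/2)e^{-d/2}\cdot (2d+2) = (d+1)e^{-d/2}$, matching the claim exactly.

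The main technical obstacle is promoting these heuristic drift calculations into rigorous statements for finite $(m,T)$. I would need: concentration of $Z_t$ around its stationary mean so that $\E{f(Z_t)} \approx f(\E{Z_t})$ for $f(z)=z$ and $f(z)=ze^{-zd/m}$ (following from the Lyapunov and mixing-time bounds in \autoref{sec:markovchain}); control of the burn-in contribution over the first $O(\log m)$ time units (negligible as $T\to\infty$ and absorbed by the $\lesssim$ notation); and careful handling of $(1-d/m)^Z \approx e^{-Zd/m}$, which contributes only lower-order multiplicative error as $m\to\infty$.
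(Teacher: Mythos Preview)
Your proposal is correct and the Patient side matches the paper exactly: you show the pool under Patient concentrates at $Z\geq m/2$, then bound $\loss{\algcritical}\lesssim (1/2)e^{-d/2}$ via the monotonicity of $z\mapsto ze^{-zd}$ on $[1/2,1]$ for $d\geq 2$. The paper does the same (\autoref{prop:criticalconcentration}, \autoref{thm:criticalanal}).

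Where you diverge is the lower bound on $\loss{\algGreedy}$. The paper does \emph{not} lower-bound Greedy directly; instead it proves $\loss{\opton}\geq 1/(2d+1)$ for the optimum online algorithm without criticality information (\autoref{prop:opton}), and then invokes $\loss{\algGreedy}\geq\loss{\opton}$. That argument trades off two effects---a small pool means many agents have no neighbors during their sojourn, a large pool means many criticals perish---and does not look at the Greedy chain at all. You instead locate the zero of the Greedy drift $2e^{-dg}=1+g$ and show $g\geq 1/(2d+2)$; this is essentially the bound $k^*\geq m/(2d+1)$ that the paper establishes inside \autoref{prop:Greedyconcentration} but uses only for concentration, not for a loss lower bound. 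Your route is more elementary and entirely self-contained for this theorem, while the paper's route through $\opton$ is strictly stronger (it bounds \emph{every} algorithm lacking criticality information) and is what they later reuse for \autoref{thm:network}. Either way the constants land on $(d+1)e^{-d/2}$, and the technical gaps you flag (mixing-time burn-in, concentration to pass from $\E{f(Z)}$ to $f(\E{Z})$, the $(1-d/m)^Z$ versus $e^{-Zd/m}$ error) are exactly the ones the paper handles in \autoref{sec:markovchain} and \autoref{twoalgo}.
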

\begin{proofoverview} We show that for large enough values of $T$ and $m$, $(i)$ $\loss{\algcritical} \lesssim e^{-d/2}$ and $(ii)$  $\loss{\opton} \gtrsim 1/(2d+1)$. By the fact that  $\loss{\algGreedy} \geq \loss{\opton}$ (since the Greedy algorithm does not use information about critical agents) the theorem follows immediately. The key idea in proving both parts is to carefully study the distribution of the pool size, $Z_t$, under any of these algorithms.

For $(i)$, we show that pool size under the Patient algorithm is a Markov chain, it has a unique stationary distribution and it mixes rapidly to the stationary distribution (see \autoref{thm:stationarity}). This implies that for $t$ larger than mixing time, $Z_t$ is essentially distributed as the stationary distribution of the Markov Chain. We show that under the stationary distribution, with high probability, $Z_t \in [m/2,m]$. Therefore, any critical agent has no acceptable transactions with probability at most $(1 - d/m)^{m/2} \leq e^{-d/2}$. 
This proves $(i)$ (see \autoref{sec:critical}).

For $(ii)$, note that  any algorithm which lacks the information of critical agents, the expected perishing rate is equal to the pool size because any critical agent  perishes with probability one. Therefore, if the expected pool size is large, the perishing rate is very high and we are done. On the other hand, if the expected pool size is very low, the perishing rate is again very high because there will be many agents with \emph{no} acceptable transactions during their sojourn. We analyze the trade-off between the above two extreme cases and show that even if the pool size is optimally chosen, the loss cannot be less than what we claimed in $(ii)$ (see \autoref{prop:opton}). 
\end{proofoverview}


A market-maker may not be willing to implement the Patient algorithm for various reasons.
First, the cost of waiting is usually not equal to zero; in other words, agents prefer to be matched earlier (we discuss this cost in detail in \autoref{sec:contwelfare}).
Second, the market-maker may only have approximate knowledge of agents' criticality times.  Motivated by this, we study the fraction of perished agents for the interpolating $\apatient{\alpha}$ algorithm. 
The next result shows that when $\alpha$ is not `too small' (i.e., exponential clocks do not tick with a very fast rate), then $\apatient{\alpha}$ algorithm still (strongly) outperforms the Greedy algorithm. 
In other words, a little waiting is substantially better than not waiting at all. 

\begin{theorem}
\label{thm:inf3}
Let $\bar{\alpha} := 1/\alpha + 1$. For $d\geq 2$, as $T, m \to \infty$, 
\begin{eqnarray*}
\loss{\apatient{\alpha}} \lesssim (d+1) \cdot e^{-d/2\bar\alpha}\cdot \loss{\algGreedy}
\end{eqnarray*}
\end{theorem}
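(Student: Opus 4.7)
The plan is to mimic the two-part strategy of the proof of \autoref{thm.inf1}: first bound $\loss{\apatient{\alpha}}$ from above by $e^{-d/(2\bar\alpha)}$, then combine with the lower bound on $\loss{\algGreedy}$ (via $\loss{\algGreedy} \geq \loss{\opton} \gtrsim 1/(2d+1)$). The key new observation is that $\apatient{\alpha}$ reduces to the Patient algorithm after a time-rescaling. Under $\apatient{\alpha}$ each agent is ``processed'' --- either because her exponential clock ticks or because she becomes critical --- at total rate $\bar\alpha = 1/\alpha + 1$, and the action taken at a processing event is exactly the Patient action: match to a uniformly random neighbor if one exists, otherwise leave without a match. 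Hence $\apatient{\alpha}$ on the market $(m,d,1)$ is distributionally equivalent to Patient on the market $(m,d,\bar\alpha)$. Invoking the normalization \autoref{lem:equivalence}, this is in turn equivalent to running Patient on a rescaled market with arrival rate $m' = m/\bar\alpha$, criticality rate $1$, and average degree $d' = d/\bar\alpha$ (note that the edge probability $d/m = d'/m'$ is preserved).

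Given this reduction, I would re-run the analysis behind part $(i)$ of the proof of \autoref{thm.inf1} on the rescaled market. The pool-size process is a Markov chain (as in \autoref{sec:markovchain}), and the same drift/coupling argument shows that it mixes in time $O(\log m')$ and that its stationary distribution places mass $1-o(1)$ on the interval $[m'/2, m']$. Conditional on stationarity, a processing event at an agent $\a$ finds $\a$ with no acceptable neighbor with probability at most
\[
\left(1 - \frac{d}{m}\right)^{m'/2} \;=\; \left(1 - \frac{d}{m}\right)^{m/(2\bar\alpha)} \;\leq\; e^{-d/(2\bar\alpha)},
\]
which upper-bounds the per-agent probability of perishing up to an additive burn-in term of order $\log(m)/T = o(1)$ as $m,T\to\infty$. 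Hence $\loss{\apatient{\alpha}} \lesssim e^{-d/(2\bar\alpha)}$.

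To finish, note that the Greedy algorithm does not use information about critical agents, so $\loss{\algGreedy} \geq \loss{\opton} \gtrsim 1/(2d+1)$ by part $(ii)$ of the proof of \autoref{thm.inf1}. Dividing,
\[
\frac{\loss{\apatient{\alpha}}}{\loss{\algGreedy}} \;\lesssim\; (2d+1)\, e^{-d/(2\bar\alpha)} \;\lesssim\; (d+1)\, e^{-d/(2\bar\alpha)},
\]
which is the claimed bound.

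The main obstacle is the first step: verifying that the pool-size Markov chain under $\apatient{\alpha}$ really inherits the sharp concentration and rapid-mixing properties that were established for Patient, so that the critical-agent probability calculation can be made rigorous. The time-rescaling argument above makes this plausible, but one must check that \autoref{lem:equivalence} genuinely lifts the stationary and mixing estimates (not merely the distributional identity). A secondary subtlety is the regime $d/\bar\alpha < 2$, where \autoref{thm.inf1} does not literally apply to the rescaled market; there, the trivial bound $\loss{\apatient{\alpha}} \leq 1$ combined with $\loss{\algGreedy} \gtrsim 1/(2d+1)$ still delivers the stated estimate up to the same $(d+1)$ prefactor, but this edge case needs to be handled separately.
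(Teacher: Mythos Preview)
Your proposal is correct and follows essentially the same route as the paper: reduce $\apatient{\alpha}$ on $(m,d,1)$ to Patient on $(m,d,\bar\alpha)$ via Poisson additivity, invoke the market-equivalence \autoref{lem:equivalence} to pass to Patient on $(m/\bar\alpha,d/\bar\alpha,1)$, apply the Patient concentration/upper bound there, and combine with the $\loss{\opton}\gtrsim 1/(2d+1)$ lower bound on Greedy. One small slip: the final step $(2d+1)e^{-d/(2\bar\alpha)} \lesssim (d+1)e^{-d/(2\bar\alpha)}$ is not literally valid under the paper's $\lesssim$ convention; the missing factor of $\tfrac12$ comes from the sharper Patient bound $\loss{\algcritical}\lesssim \tfrac12 e^{-d/2}$ (i.e.\ $\max_{z\in[1/2,1]} z e^{-zd}=\tfrac12 e^{-d/2}$ for $d\geq 2$), which you should carry through.
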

\begin{proofoverview} First, we study a slightly modified (and worse) algorithm.
In this algorithm when an agent's exponential clock ticks and she has no neighbors, the algorithm treats her as a perished agent, i.e., she will never be matched. 
Now, by the additivity of the Poisson process, the loss of  this modified algorithm in a $(m,d,1)$ matching market
is equal to the loss of the Patient algorithm in a $(m,d,\bar\alpha)$ matching market, where $\bar\alpha=1/\alpha+1$. 

The next step is the key idea of the proof: we show that a matching market $(m, d, \bar\alpha)$ is \emph{equivalent} to a matching market $(m/\bar\alpha, d/\bar\alpha, 1)$ in the sense that any quantity in these two markets is the same up to a time scale (see \autoref{def:equivalence}). By this fact, the loss of the Patient algorithm
on a $(m,d,\bar\alpha)$ matching market at time $T$ is equal to the loss of Patient algorithm on a $(m/\bar\alpha,d/\bar\alpha,1)$ market at time $\bar\alpha T$. But, we have already upper bounded the latter in our previous results. 
\end{proofoverview}

A numerical example clarifies the significance of this result. 
Consider the case of a kidney exchange market, where $1000$ new patients arrive to the market every year, their average sojourn is $1$ year and they can exchange kidneys with a random patient with probability $\frac{2}{100}$; that is, $d = 20$. The above result for the $\apatient{\alpha}$ algorithm suggests that 
the market-maker can promise to match agents in less than 6 months (in expectation) while the fraction of perished agents is at most 13\% of the Greedy algorithm.

\subsection{Value of Optimization}
Our second family of results study the value of utilizing the underlying network structure.
Let us first give a simple example to show that by exploiting the underlying structure of the network one can improve the expected number of matches. Let $G_t$ be the graph shown in \autoref{fig:arbitraryexample},
and let $\a_2\in\A^c_t.$ Observe that it is strictly better to match $\a_2$ to $\a_1$ as opposed to $\a_3$. So, an algorithm that utilizes the global structure of the underlying network can do strictly better
than an algorithm that decides only based on the local neighborhood of the agents.
Note that Patient and Greedy algorithms make decisions that depend only on the immediate
neighbors of the  agent they are trying to match, so  they cannot differentiate between $a_1$ and $a_3$ in this example.  By contrast, the optimal solutions can use their
unlimited computation power to exploit the structure of the network.

\begin{figure}
\centering
\begin{tikzpicture}
\tikzstyle{every node}=[draw,circle];
\foreach \i/\j in {1/1,2/1,3/2,4/3}{
\path (2*\i,0) node (\i) {$\a_\i$} edge (\j);
}
\end{tikzpicture}
\caption{If $a_2$ gets critical in the above graph it is strictly better to match him to $a_1$ as opposed to $a_3$}
\label{fig:arbitraryexample}
\end{figure}
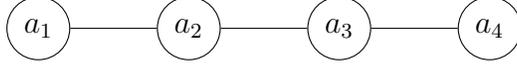

In the following theorem we show that our two algorithms that only depend on the local neighborhood of the agents perform close to the the optimum solutions that make use of the entire  network. 
In particular, among all online algorithms that do not employ 
the criticality information, the Greedy algorithm is one of the best in the sense that its loss is close to $\opton$.
Furthermore, among those algorithms that {\em do} employ the criticality information, the Patient Algorithms is one of the best in the sense that its loss is close to $\optoff$ (and thus $\opton^c$).  Note that $\loss{\algGreedy}$ and $ \loss{\opton}$ are both \emph{fractionally} small in $d$, whereas $\loss{\algcritical}$ and $\loss{\optoff}$ are \emph{exponentially} small in $d$.

\begin{theorem}
\label{thm:network}
For $d\geq 2$, as $T\to\infty$,
\begin{eqnarray*}
\loss{\algGreedy} &\lesssim& (2+1/d)\log{2}\cdot \loss{\opton} \lesssim \frac{\log(2)}{d},\\
\loss{\algcritical} &\lesssim&  \sqrt{\loss{\optoff}\cdot (d+1)/2} \lesssim \frac12\cdot e^{-d/2}.
\end{eqnarray*}
\end{theorem}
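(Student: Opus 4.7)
My plan treats both lines as parallel three-step arguments: an absolute upper bound on the online algorithm, an absolute lower bound on the benchmark, and simple arithmetic to combine. For the Greedy line, I would extract $\loss{\algGreedy}\lesssim\log 2/d$ from the Markov-chain analysis of Greedy planned for \autoref{twoalgo}: the pool size $Z_t$ has transitions $Z\to Z+1$ at rate $m(1-d/m)^Z$ and $Z\to Z-1$ at rate $m(1-(1-d/m)^Z)+Z$, whose drift vanishes at $Z\approx m\log 2/(d+1)$, and the steady-state loss is $\E{Z_t}/m$. The lower bound $\loss{\opton}\gtrsim 1/(2d+1)$ is already established in the proof overview of \autoref{thm.inf1}(ii), via the two-regime argument that any algorithm ignorant of criticality either loses too many pool-critical agents or fails to match new arrivals. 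Dividing the two bounds gives $\loss{\algGreedy}/\loss{\opton}\leq(\log 2/d)(2d+1)=(2+1/d)\log 2$, the first half of the chain; the right end $\log 2/d$ is just the absolute bound again.

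For the Patient line, the upper bound $\loss{\algcritical}\lesssim\tfrac12 e^{-d/2}$ comes similarly from the Markov-chain analysis: the stationary pool size concentrates in $[m/2,\,m/(2-e^{-d})]$, so a critical agent fails to find a neighbor with probability at most $(1-d/m)^{Z-1}\leq e^{-d/2}$; multiplying by the per-unit-time criticality rate $Z$ and dividing by the arrival rate $m$ gives the claim. The genuinely new ingredient is the lower bound on $\loss{\optoff}$. I would prove it by counting \emph{isolated} agents, those having no compatible partner whose potential sojourn overlaps theirs; such agents cannot be matched by any algorithm, including $\optoff$. Conditioning on $X_a=x$, the compatible partners with overlapping potential sojourn form a thinned Poisson process of intensity $d$ on $[t_a,t_a+x]$ (any compatible arrival in this window automatically overlaps) and $d\cdot e^{u}$ at time $t_a+u$ for $u<0$ (earlier compatible agents require $X_b>-u$ for their sojourn to reach $t_a$). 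The total mean is $d(1+x)$, so $\PP{}{\text{isolated}\mid X_a=x}=e^{-d(1+x)}$, and integrating $x\sim\textup{Exp}(1)$ yields $\PP{}{\text{isolated}}=e^{-d}/(d+1)$. Hence $\loss{\optoff}\gtrsim e^{-d}/(d+1)$. Combining: $\loss{\algcritical}^2\lesssim e^{-d}/4\leq e^{-d}/2\leq\loss{\optoff}(d+1)/2$, so $\loss{\algcritical}\lesssim\sqrt{\loss{\optoff}(d+1)/2}$; the right end $\tfrac12 e^{-d/2}$ is the absolute upper bound on $\loss{\algcritical}$ re-displayed to close the chain.

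I anticipate two obstacles. First, both online-algorithm upper bounds rest on steady-state statements that rely on the polylogarithmic mixing-time estimates of \autoref{sec:markovchain}, together with control over the $O(\log m)$ boundary transients at the endpoints of $[0,T]$. Second and conceptually more delicate is the isolation lower bound on $\loss{\optoff}$: the forward- and backward-in-time contributions to the thinned Poisson intensity must be combined without double-counting, and one must check that restricting attention to isolated agents --- rather than the full maximum-matching deficit --- still yields a bound tight enough (up to the $(d+1)/2$ factor) to dominate $\loss{\algcritical}^2$. Once these two ingredients are in place, the theorem's chains follow by direct arithmetic.
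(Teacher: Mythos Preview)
Your proposal is correct and follows essentially the same four-ingredient structure as the paper: upper bounds on $\loss{\algGreedy}$ and $\loss{\algcritical}$ from the Markov-chain analyses of \autoref{twoalgo}, lower bounds on $\loss{\opton}$ and $\loss{\optoff}$ from \autoref{sec:optimumbounds}, then the arithmetic you describe. The only minor variation is in the $\loss{\optoff}$ computation: you run a direct thinned-Poisson argument on potential sojourns to get $\P{\text{isolated}}=e^{-d}/(d+1)$, whereas the paper conditions on the OMN pool size at arrival and applies Jensen with the crude bound $\poolsize\leq m$; both routes land on the same $e^{-d}/(d+1)$, and your version is arguably cleaner since it sidesteps any dependence on the specific algorithm's pool.
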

\begin{proofoverview} We prove that for large values of $T$ and $m$, $(i)$ $\loss{\algGreedy} \leq \log(2)/d$, and $(ii)$ $\loss{\optoff} \gtrsim e^{-d}/(d+1)$. Note that the theorem follows from these and our results in \autoref{thm.inf1} (that $\loss{\algcritical} \lesssim e^{-d/2}/2$ and  $\loss{\opton} \gtrsim 1/(2d+1)$).

For $(i)$, similar to the Patient algorithm we show that the pool size under the Greedy algorithm is a Markov chain, has a unique stationary distribution and mixes rapidly. Under the stationary distribution,  $Z_t\in [0,\log(2)m/d]$ with high probability. Since $G_t$ under the Greedy algorithm is almost always an \emph{empty} graph,  all critical agents perish, so the perishing rate is at most $\log(2)/d$. 

For $(ii)$, we simply lower bound the fraction of agents who arrive the market at some point in time and have no acceptable transactions to any agent during their sojourn (see \autoref{prop:optoff}). 
\end{proofoverview}

This constitutes an answer to the ``when to match versus whom to match'' question.
Consider the following scenario:
A market maker has limited resources and wants to implement an online matching algorithm. He can employ expert algorithm designers for his particular market and purchase supercomputers to truly exploit the structure of the network. On the other hand, he can fund investigations of the agents' departure probabilities, and then use the $\apatient{\alpha}$ algorithm with a tuned value of $\alpha$. The above result suggests that it is more advantageous that he spends the money on the latter.

A corollary of this result is that waiting and criticality information are complements.  If the Planner does not know about urgent cases, then the Planner cannot substantially reduce losses by waiting: $\loss{\algGreedy}$ is close to $\loss{\opton}$.  If the Planner is not willing to wait, then knowing about urgent cases yields no improvement:  Under the Greedy algorithm, the market is almost always an empty graph, so it cannot address the needs of critical agents.  It is only when waiting and criticality information are combined that losses can be dramatically reduced.

\subsection{Welfare Under Discounting}
\label{sec:contwelfare}

In many practical applications, market brokers employ variants of the Greedy algorithm in order to minimize the waiting time of the agents.
It is now clear that the immediate consequence of this strategy is  a substantial increase in the fraction of perished agents. 
In this part we want to account for the cost of waiting and study online algorithms that optimize social welfare. 

It is clear that if the agents are very impatient (i.e., they have very high waiting costs) the market-maker  is better off implementing the Greedy algorithm. 
On the other hand,  if agents are perfectly patient (i.e., the cost of waiting is zero) the Patient algorithm is substantially better.
Therefore, a natural welfare economics question is as follows:  For which values of $\delta$ is the Patient algorithm (or $\apatient{\alpha}$ algorithm) socially preferred to the Greedy algorithm?

%

Our next result studies social welfare under the Patient, $\apatient{\alpha}$ and Greedy Algorithms.
We show that for small enough $\delta$, there exists a $\apatient{\alpha}$ Algorithm that is socially preferable to the Greedy Algorithm.

\begin{theorem}\label{thm:welfare}
For any $0 \leq \delta \leq \frac{1}{6\log(d)}$ and $d\geq 7$ there exists an $\alpha \geq 0$ such that as $m,T \to \infty,$
$$ \W{\apatient{\alpha}} \geq \W{\algGreedy}. $$
In particular, for $\delta \leq 1/2d$ we have
$$ \W{\algcritical} \geq \W{\algGreedy}.$$
\end{theorem}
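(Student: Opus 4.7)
The plan is to sandwich both welfare quantities in terms of the corresponding losses and average sojourns, and then invoke the loss bounds of \autoref{thm.inf1} and \autoref{thm:inf3}. Since $e^{-\delta s(a)}\le 1$ for every matched agent, I immediately get $\W{\algGreedy}\le 1-\loss{\algGreedy}$, and the proof overview of \autoref{thm.inf1} supplies $\loss{\algGreedy}\ge \loss{\opton}\gtrsim 1/(2d+1)$ (a slightly sharper $\log 2/(d+1)$ is available from the stationary analysis of Greedy's pool Markov chain in \autoref{sec:markovchain}). In the other direction, the elementary inequality $e^{-\delta s}\ge 1-\delta s$ together with Little's Law yields
\[
\W{\apatient{\alpha}}\;\ge\;\bigl(1-\loss{\apatient{\alpha}}\bigr)-\delta\cdot\frac{\E{\int_0^T Z_t\,dt}}{mT},
\]
so I need a sharp bound on the time-averaged pool size.

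The key quantitative input is a stationary balance argument for $\apatient{\alpha}$. Writing $\bar\alpha := 1/\alpha + 1$, balancing the arrival rate $m$ against the departure rate (clock ticks plus criticality, with each match removing two agents whenever the tagged agent has a neighbor) shows that the stationary mean pool size $Z^*$ satisfies $Z^*\bar\alpha\bigl(2-e^{-dZ^*/m}\bigr)=m$, whence $Z^*/m\le \tfrac{1}{2\bar\alpha}\bigl(1+e^{-d/(2\bar\alpha)}\bigr)$; combined with the mixing-time bound of \autoref{sec:markovchain} this upgrades to $\E{\int_0^T Z_t\,dt}/(mT)\le 1/(2\bar\alpha)+o(1)$. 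Plugging this in together with the loss estimate $\loss{\apatient{\alpha}}\lesssim e^{-d/(2\bar\alpha)}$ of \autoref{thm:inf3} gives
\[
\W{\apatient{\alpha}}\;\gtrsim\;1-e^{-d/(2\bar\alpha)}-\frac{\delta}{2\bar\alpha},
\]
so $\W{\apatient{\alpha}}\ge\W{\algGreedy}$ reduces to finding $\bar\alpha\ge 1$ with $e^{-d/(2\bar\alpha)}+\delta/(2\bar\alpha)\le \loss{\algGreedy}$.

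For the Patient case ($\alpha=\infty$, $\bar\alpha=1$) this inequality becomes $e^{-d/2}+\delta/2\le\loss{\algGreedy}$; using $\loss{\algGreedy}\gtrsim 1/(2d+1)$ one checks directly that the inequality holds whenever $\delta\le 1/(2d)$ and $d\ge 7$, which proves the ``in particular'' clause. For the general interpolating algorithm I minimize the left-hand side over $\bar\alpha\ge 1$: elementary calculus gives the interior critical point $\bar\alpha^{\ast}=d/(2\log(d/\delta))$ with minimal value $(\delta/d)\bigl(1+\log(d/\delta)\bigr)$; when $\delta\le 1/(6\log d)$ one verifies $\delta\bigl(1+\log(d/\delta)\bigr)\le 1/2$, putting this optimum below the required threshold. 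When $\bar\alpha^{\ast}<1$ the unconstrained optimizer is infeasible and I fall back to $\bar\alpha=1$, where the Patient analysis applies.

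The main obstacle is producing the tight factor of $\tfrac12$ in the sojourn bound. The naive estimate $s(a)\le X_a\sim\mathrm{Exp}(1)$ yields only $\bar S\le 1$, which weakens the admissible $\delta$ by a factor of two and fails to reach the $d\ge 7$ threshold. Getting the correct $1/(2\bar\alpha)$ scaling relies on the stationary-distribution analysis of \autoref{sec:markovchain}; for the $\apatient{\alpha}$ version it is cleanest to invoke the rescaling equivalence $(m,d,\bar\alpha)\leftrightarrow(m/\bar\alpha,d/\bar\alpha,1)$ from \autoref{def:equivalence}, so that the Patient steady-state computation transfers essentially verbatim.
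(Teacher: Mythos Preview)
Your approach is valid but genuinely different from the paper's. The paper does \emph{not} linearize the discount factor; instead it introduces the potential $X_t:=\sum_{a\in A_t}e^{-\delta(t-t_a(a))}$, derives the differential inequality $\tfrac{d}{dt}\E{X_t}\ge m-(\delta+2)\E{X_t}$, integrates to $\E{X_t}\ge \tfrac{m}{\delta+2}(1-e^{-(\delta+2)t})$, and combines this with the pool-size concentration of \autoref{prop:criticalconcentration} to obtain $\W{\algcritical}\ge\tfrac{2}{2+\delta}(1-e^{-d/2})$ (then rescales via \autoref{lem:equivalence} for $\apatient{\alpha}$). Your route---$e^{-\delta s}\ge 1-\delta s$ together with the identity $\sum_a s(a)=\int_0^T Z_t\,dt$---is more elementary and avoids the ODE entirely; the price is a slightly looser constant, your $1-e^{-d/2}-\delta/2$ versus the paper's $\tfrac{2(1-e^{-d/2})}{2+\delta}$, the two differing only by $O(\delta e^{-d/2})$.

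Two slips to fix. First, the parenthetical ``a slightly sharper $\log 2/(d+1)$'' is backwards: $\log(2)/d$ is the paper's \emph{upper} bound on $\loss{\algGreedy}$ (\autoref{thm:Greedyanal}), not a lower bound, so it cannot tighten your upper bound on $\W{\algGreedy}$; only $\loss{\opton}\ge 1/(2d+1)$ is available in that direction. Second, your claim $\E{\int_0^T Z_t\,dt}/(mT)\le 1/(2\bar\alpha)+o(1)$ overreaches: your own balance equation gives $Z^*/m\le\tfrac{1}{2\bar\alpha}(1+e^{-d/(2\bar\alpha)})$, and the factor $e^{-d/(2\bar\alpha)}$ does not vanish as $m,T\to\infty$. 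Carrying it through only adds a harmless $O(\delta e^{-d/(2\bar\alpha)}/\bar\alpha)$ term, but it should be tracked.

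Finally, your end-game case split has a small hole. When the unconstrained optimizer $\bar\alpha^{\ast}=d/(2\log(d/\delta))<1$ you fall back to $\bar\alpha=1$ and invoke the Patient clause, which you have established only for $\delta\le 1/(2d)$. But $\bar\alpha^{\ast}<1$ means $\delta<de^{-d/2}$, and for $d$ roughly between $7$ and $10$ there is a sliver with $1/(2d)<\delta\le 1/(6\log d)$ and $\delta<de^{-d/2}$ simultaneously, covered by neither branch. This is not a structural obstruction---the slack in the Patient inequality $e^{-d/2}+\delta/2\le 1/(2d+1)$ can be re-examined directly on that sliver, or one can borrow the paper's marginally sharper $\tfrac{2}{2+\delta}(1-e^{-d/2})$---but the plan as written does not close it.
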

\begin{proofoverview} We show that  $(i)$ $\W{\algcritical} \geq \frac{2}{2+\delta}(1 - e^{-d/2})$ and $(ii)$ $\W{\algGreedy} \leq 1-\frac{1}{2d+1}$. As a corollary of our market equivalence result, $(i)$ shows that for $\bar\alpha=1/\alpha+1$, $\W{\apatient{\alpha}} \geq \frac{2}{2+\delta/\bar\alpha}(1 - e^{-d/2\bar\alpha})$.

(ii) can be proved easily just by noting that $1/(2d+1)$ fraction of agents perish in the Greedy algorithm. Therefore, even if all of the matched agents receive a utility of 1 the social welfare is no more than $1-1/(2d+1)$.

The proof of (i) is more involved. The idea is to define a random variable $X_t$ representing the \emph{potential}
utility of the agents in the pool at time $t$, i.e., if all of  agents of $A_t$ get matched immediately, then they receive 
a total utility of $X_t$. It turns out that $X_t$ can be estimated with a small error by studying the evolution of the system
through a differential equation. Given $X_t$ the expected utility of an agent that is matched at time $t$ is exactly $X_t/Z_t$. 
Using our concentration results on $Z_t$, we
can easily compute the expected utility of the agents that are matched in any interval $[t,t+dt]$.
Integrating over all $t\in[0,T]$ proves the claim.
\end{proofoverview}

A numerical example illustrates these magnitudes.  Consider a barter market, where $1000$ new traders arrive at the market every week, their average sojourn is one week, and there is a satisfactory trade between two random agents in the market with probability $\frac{2}{100}$; that is, $d = 10$. Then \autoref{thm:welfare} implies that if the cost associated with waiting for one week is less than $10\%$ of the surplus from a typical trade, then the $\apatient{\alpha}$ algorithm is socially preferred to the Greedy algorithm.

\subsection{Value of Information and Incentive-Compatible Mechanisms}
\label{subsec:contmechanism}

It follows from \autoref{thm:network} that having even short-horizon information about agents' departure times is very valuable, because it substantially reduces perishings.
In particular, since $\loss{\opton} \geq 1/(2d+1)$, under any matching algorithm that does not have any information about agents' departure times, at least $1/(2d+1)$ fraction of the agents will perish.

In many settings, it is plausible that agents have privileged insight into their own departure timings.  In general, agents may have incentives to misreport whether they are critical, in order to increase their chance of getting matched or to decrease their waiting time. We exhibit a truthful mechanism without transfers that elicits such information from agents. 

We assume that agents are fully rational and know the underlying parameters, but they do not observe the actual realization of the stochastic process.  That is, agents observe whether they are critical, but do not observe $G_t$, while the Planner observes $G_t$ but does not observe which agents are critical. Consequently, agents' strategies are independent of the realized sample path. Our results are sensitive to this assumption\footnote{This assumption is plausible in many settings; generally, centralized brokers know more about the current state of the market than individual traders.  Indeed, frequently agents approach centralized brokers because they do not know who is available to trade with them.}; for instance, if the agent knew that she had a neighbor, or knew that the pool at that moment was very large, she would have an incentive under our mechanism to falsely report that she was critical.

The truthful mechanism, $\amechpatient{\alpha}$, is described below. 
\begin{definition}[$\amechpatient{\alpha}$]
Assign independent exponential clocks with rate $1/\alpha$ to each agent $a$, where $\alpha\in[0,\infty)$. If agent's exponential clock ticks or if she  reports becoming critical,
the market-maker attempts to match her to a random neighbor. If the agent has no neighbors, the market-maker treats her as if she has perished, i.e., she will never be matched again.
\end{definition}

Each agent $a$ selects a mixed strategy by choosing a function $c_a(.)$; at the interval $[t,t+dt]$ after her arrival, if she is not yet critical she reports becoming critical with rate $c_a(t) dt$, and when she truly gets critical she reports that immediately.
Our main result in this section asserts that if agents are not too impatient, then the $\amechpatient{\alpha}$ is (incentive-compatible) implementable in the sense that the truthful strategy profile is an $\eps$-mixed strategy Nash equilibrium.\footnote{For a rigorous definition of an $\eps$-mixed strategy Nash equilibrium, see \autoref{def:epsnash}.}


\begin{theorem}
Suppose that the market is at stationary and $d = \polylog(m)$\footnote{$\polylog(m)$ denotes any polynomial function of $\log(m)$.}. Let $\bar\alpha = 1/\alpha + 1$ and $\beta = \bar\alpha(1-d/m)^{m/\bar\alpha}$. Then, for $0 \leq \delta \leq \beta$,  $c_a(t)=0$ for all $a,t$ (i.e., truthful strategy profile) is an $\eps$-mixed strategy Nash equilibrium for $\amechpatient{\alpha}$, where $\eps \to 0$ as $m \to \infty$.

In particular, if $d \geq 2$ and $0 \leq \delta \leq e^{-d/2}$, the truthful strategy profile is an $\eps$-mixed strategy Nash equilibrium for $\amechpatient{\infty}$, where $\eps \to 0$ as $m \to \infty$.
\end{theorem}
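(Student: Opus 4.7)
The plan is to compare, from the perspective of a single agent $a$, the expected utility of truthful reporting with that of an arbitrary alternative strategy $c_a(\cdot)$, assuming every other agent reports truthfully. Under universal truthful play, $\amechpatient{\alpha}$ is observationally equivalent to $\apatient{\alpha}$, so by the mixing/stationarity results of \autoref{sec:markovchain} the pool size $Z_t$ concentrates, whp, inside the interval guaranteed by the market equivalence $(m,d,\bar\alpha)\sim(m/\bar\alpha,d/\bar\alpha,1)$; in particular $Z_t\leq m/\bar\alpha$ up to $o(1)$ slack. Conditional on $Z_t=z$, the pool graph is a $\cG(z,d/m)$ Erd\H{o}s--R\'enyi graph, so the probability that $a$ has at least one neighbor is $p:=1-(1-d/m)^{Z_t-1}$. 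A single agent's unilateral deviation perturbs aggregate dynamics by only $O(1/m)$, which I absorb into the $\eps$ error.

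Next I would compute the matching rates driving $a$'s fate. Her private clock ticks at rate $\bar\alpha$, and on a tick she is matched iff she has a neighbor (prob. $p$). She is also selected by other agents whose clocks tick; by mass balance the total selection events happen at rate $\bar\alpha Z p$ across the pool, which by exchangeability distributes symmetrically to give an expected per-agent selection rate $r\approx\bar\alpha p$ (concentrated in the Erd\H{o}s--R\'enyi regime via a second-moment argument). Under a constant deviation $c_a\equiv c\geq 0$, the lying clock adds rate $c$ to her own clock and behaves identically to it (matched with prob. $p$, else perish), so a Bellman-type calculation with memoryless exponential clocks and discount $\delta$ yields
\[
V_a(c)\;\approx\;\frac{(\bar\alpha+c)\,p+r}{\bar\alpha+c+r+\delta},
\qquad
\frac{\partial V_a}{\partial c}\bigg|_{c=0}\;\propto\;\delta\,p-r(1-p).
\]
Because $V_a$ is a monotone rational function of $c$, it suffices to check the sign at $c=0$; an analogous pointwise comparison using memorylessness handles time-varying strategies $c_a(t)$, since the same inequality must hold at every instant. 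Hence truth-telling is a best response globally whenever $\delta p\leq r(1-p)$, which reduces via $r\approx\bar\alpha p$ to $\delta\leq\bar\alpha(1-p)$.

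To close the argument I would invoke the conservative bound $Z_t\leq m/\bar\alpha$ to obtain $1-p\geq(1-d/m)^{m/\bar\alpha-1}\geq(1-d/m)^{m/\bar\alpha}$, so that the hypothesis $\delta\leq\beta=\bar\alpha(1-d/m)^{m/\bar\alpha}$ implies $\delta\leq\bar\alpha(1-p)$ up to $o(1)$. The sharper threshold $\delta\leq e^{-d/2}$ in the $\amechpatient{\infty}$ case ($\bar\alpha=1$) would follow by substituting the tighter Patient-algorithm steady state $Z_t\approx m/2$, valid when $d\geq 2$ so that $(1-d/m)^{Z_t-1}\to 0$ and the mass-balance equation forces $Z_t$ to its lower end. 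The $\eps$ error then combines (i) non-stationarity before the polylog mixing time, (ii) concentration of $Z_t$, $p$, and $r$ around their stationary means, and (iii) the $O(1/m)$ single-deviator perturbation, each of which is $o_m(1)$ under the hypothesis $d=\polylog(m)$.

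I expect the main obstacle to be step (ii), specifically the concentration of $r=\sum_{b\in N_t(a)}\bar\alpha/|N_t(b)|$ around $\bar\alpha p$. Unlike $p$, this quantity depends on the degrees of $a$'s neighbors, which are correlated through the shared random graph and further biased by the matching dynamics: the conditional distribution of the surviving pool, after deletions caused by past matches, is not exactly $\cG(Z_t,d/m)$. A careful exchangeability or coupling argument comparing the post-matching pool to a freshly sampled Erd\H{o}s--R\'enyi graph of the same size will be required to argue that neighbor-degree biases are small enough to preserve $r\approx\bar\alpha p$ with high probability.
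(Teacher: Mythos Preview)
Your Bellman-style shortcut lands on the right inequality $\delta\le \bar\alpha(1-p)$, and the paper's proof ultimately reduces to the same comparison.  But the packaging differs, and you have the hard step and the easy step reversed.

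Your stated obstacle, the concentration of $r=\sum_{b\in N_t(a)}\bar\alpha/|N_t(b)|$, is not a difficulty at all.  \autoref{Z_suff_stat} says that under the Patient dynamics, conditional on $Z_t$ (and on $a\in A_t$), the graph $G_t$ is exactly Erd\H{o}s--R\'enyi; the algorithm has never inspected any edge among the surviving agents, so no post-matching bias arises.  Then pure exchangeability gives that when some $b\neq a$ becomes critical it is matched to $a$ with probability $(1-(1-d/m)^{Z_t-1})/(Z_t-1)$, so summing over the $Z_t-1$ candidates yields $r=\bar\alpha\,(1-(1-d/m)^{Z_t-1})$ \emph{exactly in expectation given $Z_t$}; see \autoref{cl:mechanismmatcht}.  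No second-moment or neighbor-degree argument is needed.

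The genuine obstacle, which the paper flags explicitly and you do not address, is that your closed form treats $p$ as a constant while in fact it is $1-(1-d/m)^{Z_t-1}$ with $Z_t$ random, and the relevant law is $Z_t$ \emph{conditioned on} $a\in A_t$, which is not the stationary law.  The paper handles this by the crude inequality $\P{Z_t\in\cdot\mid a\in A_t}\le \P{Z_t\in\cdot}/\P{a\in A_t}$ together with the concentration bound \autoref{prop:criticalconcentration}, and then controls $\P{a\in A_t}$ on a window $[0,t^*]$ with $t^*=O(\log(m/\beta))$ via a differential inequality (equations \eqref{eq:mechanismmatchupper}--\eqref{eq:ptupper}).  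Only after this does the optimization over $c(\cdot)$ become a clean calculus problem; the paper finishes with a variational ``push mass to the right'' perturbation rather than your derivative-at-$c=0$ argument, but both work once the conditioning has been tamed.  If you rewrite your sketch to replace the $r$-concentration worry with this conditional-law step, the rest of your plan goes through.
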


\begin{proofoverview}
There is a hidden obstacle in proving that truthful reporting is incentive-compatible:  Even if one assumes that the market is in a stationary distribution at the point an agent enters, the agent's beliefs about pool size may change as time passes.  In particular, an agent makes inferences about the current distribution of pool size, \emph{conditional} on not having been matched yet, and this conditional distribution is different from the stationary distribution. This makes it difficult to compute the payoffs from deviations from truthful reporting.  We tackle this problem by using the concentration bounds from \autoref{prop:criticalconcentration}, and focusing on $\eps$-mixed strategy Nash equilibrium, which allows small deviations from full optimality.

The intuition behind this proof is that an agent can be matched in one of two ways under $\amechpatient{\infty}$: Either she becomes critical, and has a neighbor, or one of her neighbors becomes critical, and is matched to her.  By symmetry, the chance of either happening is the same, because with probability 1, every matched pair consists of one critical agent and one non-critical agent.  When an agent declares that she is critical, she is taking her chance that she has a neighbor in the pool right now.  By contrast, if she waits, there is some probability that another agent will become critical and be matched to her.   
Consequently, for small $\delta$, agents will opt to wait.
\end{proofoverview}

\subsection{Technical Contributions}
As alluded to above, most of our results follow from concentration results on the distribution
of the pool size for each of the online algorithms, \autoref{prop:Greedyconcentration}, \autoref{prop:criticalconcentration}. In this last part we describe  ideas behind these crucial results. 

For analyzing many classes of stochastic processes one needs to prove concentration bounds on  functions defined on the underlying process by means of Central limit
theorem, Chernoff bounds or Azuma-Hoeffding bounds. In our case many of these tools fail. This is because we are interested in proving that for any large  time $t$, a given function is concentrated in an interval whose size depend only on $d,m$ and not $t$. Since $t$ can be significantly larger
than $d,m$ a direct proof fails. 

In contrast we observe that $Z_t$ is  a Markov Chain for a large class of online algorithms. Building
on this observation, first we show that the underlying Markov Chain has a unique stationary distribution and it {\em mixes} rapidly. Then we use the stationary distribution of the Markov Chain to prove our concentration bounds.

However, that is not the end of the story. We do not a closed form expression for the stationary distribution of the chain, because we are dealing with an infinite state space continuous time Markov Chain where the transition rates are complex functions of the states.
Instead, we use the following trick. Suppose we want to prove that $Z_t$ is contained in an interval $[k^*-f(m,d), k^*+f(m,d)]$ for some $k^*\in\N$
with high probability,
where $f(m,d)$ is a function of $m,d$ that does not depend on $t$. 
We consider a sequence of pairs of states $P_1:=(k^*-1,k^*+1), P_2:=(k^*-2,k^*+2),$ etc. We show that  if the Markov Chain
is at any of the states of $P_i$, it is more likely (by an additive function of $m,d$) that it jumps to a state of $P_{i-1}$ as opposed to $P_{i+1}$. 
Using {\em balance equations} and simple algebraic manipulations, this implies that
 the  probability of states in $P_i$ {\em geometrically} decrease as $i$ increases. 
In other words $Z^*$ is concentrated in a small interval around $k^*$.
We believe that this technique can be used in studying other complex stochastic processes.

\section{Performance of the Optimum Solutions}
\label{sec:optimumbounds}
In this section we lower-bound the loss of the optimum solutions in terms of $d$. In particular, we prove the following theorems.

\begin{theorem}
\label{prop:opton}
If $ m > 10d$, then for any $T>0$
$$  \loss{\opton} \geq \frac{1}{2d+1 + d^2/m}.$$
\end{theorem}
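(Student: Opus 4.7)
The plan is to formalize the trade-off sketched in the proof overview: under $\opton$, either the expected pool size $\E{Z_t}$ is large, in which case critical agents perish at a high rate, or it is small, in which case the matching rate is throttled by a lack of compatible partners; in either regime a constant fraction of agents is lost. I will carry this out in three steps.

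First (perishing rate $=$ pool size): because $\opton$ does not observe criticality, its matching decisions are independent of the residual lifetimes of pool agents, and each pool agent becomes critical at rate $\lambda=1$ independently. Hence the instantaneous perishing rate at time $t$ is exactly $Z_t$, so
$$\E{|A\setminus\opton(T)\setminus A_T|}=\int_0^T\E{Z_t}\,dt,\qquad\loss{\opton}=\frac{1}{mT}\int_0^T\E{Z_t}\,dt.$$

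Second (flow balance plus compatibility bound): writing $M(T)$ for the expected number of matched pairs by time $T$, the flow-conservation identity arrivals $=$ matched $+$ perished $+$ leftover pool reads $mT=2M(T)+\int_0^T\E{Z_t}\,dt+\E{Z_T}$. To upper bound $M(T)$, I attribute each matched pair to its later-arriving endpoint; at her arrival her partner was in the pool and compatible, so
$$M(T)\le\int_0^T m\cdot\E{1-(1-d/m)^{Z_t}}\,dt\le d\int_0^T\E{Z_t}\,dt$$
by Bernoulli's inequality $1-(1-d/m)^n\le nd/m$. Plugging this into the flow identity yields $mT\le(2d+1)\int_0^T\E{Z_t}\,dt+\E{Z_T}$, i.e.
$$\loss{\opton}\ge\frac{1-\E{Z_T}/(mT)}{2d+1}.$$

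Third (finishing), to upgrade the clean asymptotic $1/(2d+1)$ to the stated bound $1/(2d+1+d^2/m)$ uniform in $T>0$, I would (a) bound the boundary term $\E{Z_T}\le m$ by coupling $Z_t$ with a pure-birth-and-death process of arrival rate $m$ and death rate $\lambda=1$, and (b) retain the next term in the alternating expansion of $1-(1-d/m)^{Z_t}$ to obtain a sharper matching-rate bound, using the hypothesis $m>10d$ to keep the remainder at order $d^2/m$. The main obstacle lies in (b): the second-order correction involves $\E{Z_t^2}$ rather than $\E{Z_t}$, so closing the resulting inequality requires a variance-type argument or the second-moment concentration results developed later in the paper; once these are in hand, a straightforward algebraic rearrangement yields the claimed bound.
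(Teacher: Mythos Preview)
Your Steps~1--2 are correct and yield the asymptotic bound $\loss{\opton}\ge 1/(2d+1)$ as $T\to\infty$, via a genuinely different route from the paper. After establishing $\loss{\opton}=\zeta/m$ (your Step~1, with $\zeta:=\frac{1}{T}\int_0^T\E{Z_t}\,dt$), the paper does \emph{not} use flow balance; instead it derives a second lower bound on $\loss{\opton}$ by estimating the probability that a random agent is isolated throughout her sojourn (no edge to the pool at her arrival and no edge to any subsequent arrival). Applying Jensen's inequality to $(1-d/m)^{Z_{t_0}}$ and to $(1-d/m)^{|A^n_{t_0,t_0+t}|}$, and using $1-d/m\ge e^{-d/m-d^2/m^2}$ (this is where the hypothesis $m>10d$ enters), the paper obtains
\[
\loss{\opton}\ \ge\ \frac{1-(\zeta d/m)(1+d/m)}{1+d+d^2/m},
\]
a bound that is \emph{decreasing} in $\zeta$. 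Taking the maximum of this with $\zeta/m$ and minimizing over $\zeta$ gives the stated constant. Your flow-balance argument is more direct and avoids Jensen; the paper's argument has the structural advantage that both of its lower bounds depend only on $\zeta$, so the minimax delivers a constant with no residual $T$-dependent boundary term.

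Your Step~3 contains a confusion. The target $1/(2d+1+d^2/m)$ is \emph{weaker} than your $1/(2d+1)$ (its denominator is larger), so the $d^2/m$ is slack incurred by the paper's Jensen step, not something you need to manufacture. Sub-step~(b)---retaining the next term in the expansion of $1-(1-d/m)^{Z_t}$ and invoking second moments of $Z_t$---is therefore unnecessary and aimed in the wrong direction: sharpening your upper bound on the matching rate can only \emph{increase} your lower bound on $\loss{\opton}$, not introduce a $+\,d^2/m$ in the denominator. The only genuine gap in your argument is the boundary term in~(a): with $\E{Z_T}\le m$ your inequality reads $\loss{\opton}\ge(1-1/T)/(2d+1)$, which recovers the stated bound only once $T$ is of order $m/d$ or larger. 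Getting uniformity in $T$ along your lines would require a different device to absorb $\E{Z_T}$, not a second-moment correction.
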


\begin{theorem}
\label{prop:optoff}
If $m>10d$, then for any $T>0$,
$$ \loss{\optoff} \geq \frac{e^{-d-d^2/m}}{d+1+d^2/m}$$
\end{theorem}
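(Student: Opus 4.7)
The plan is to lower bound $\loss{\optoff}$ by counting agents who are guaranteed to perish under any algorithm. Define the \emph{feasibility graph} $H=(A,E_H)$ by $(a,b)\in E_H$ iff $(a,b)\in E$ and the exogenous maximal sojourn intervals $[t_a, t_a+X_a]$ and $[t_b, t_b+X_b]$ intersect, where $X_a$ denotes the $\operatorname{Exp}(1)$ gap between arrival and criticality. Because $\optoff$ can only match an acceptable pair when the two agents are simultaneously present, every vertex isolated in $H$ must perish. Thus $\loss{\optoff}$ is bounded below by the expected fraction of agents that are isolated in $H$ and whose criticality time lies in $[0,T]$.

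First I would fix an agent $a$ with arrival time $t_0\in(0,T)$ and $X_a=x$, and compute the probability of isolation. By Poisson thinning (marking each arrival by its independent exponential sojourn and filtering on overlap), the number $N$ of other agents in $A$ whose maximal sojourn intersects $[t_0,t_0+x]$ is Poisson with mean
\[
\lambda(t_0,x)\;=\;m\bigl(1-e^{-t_0}\bigr)\;+\;m\cdot\min(x,T-t_0)\;\le\;m(1+x),
\]
the two terms counting overlapping arrivals before and after $t_0$. Given $N$, each of these agents is acceptable to $a$ independently with probability $d/m$, so the conditional probability that none is acceptable equals $(1-d/m)^N$. Since $y\mapsto(1-d/m)^y$ is convex, Jensen's inequality together with the bound on $\lambda$ yields
\[
\PP{}{a\text{ isolated in }H\mid t_0,x}\;\ge\;(1-d/m)^{\lambda(t_0,x)}\;\ge\;(1-d/m)^{m(1+x)}.
\]

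Next I would integrate against the $\operatorname{Exp}(1)$ law of $X_a$. For $t_0$ well inside $(0,T)$ this produces
\[
\int_0^\infty e^{-x}(1-d/m)^{m(1+x)}\,dx\;=\;\frac{(1-d/m)^m}{1-m\log(1-d/m)}\;=\;\frac{e^{-\beta}}{1+\beta},
\]
where $\beta:=-m\log(1-d/m)$. The elementary bound $\log(1-y)\ge -y-y^2$ for $y\le 1/2$ (applicable since $m>10d$) gives $\beta\in[d,\,d+d^2/m]$, and because $y\mapsto e^{-y}/(1+y)$ is decreasing this produces the claimed lower bound $\frac{e^{-d-d^2/m}}{d+1+d^2/m}$. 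Averaging over $t_0\in[0,T]$ and multiplying by the arrival rate converts this per-agent bound into a bound on $\loss{\optoff}$.

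The main obstacle is verifying that the bound is uniform in $T$: restricting the integration to $X_a\le T-t_0$ (so that $a$ actually perishes by time $T$) weakens the inner integral when $t_0$ is close to $T$, and the $(1-e^{-t_0})$ factor weakens it when $t_0$ is close to $0$. Absorbing both boundary losses into the stated constants is the most delicate step; my approach would be to swap the order of integration in the double integral over $(t_0,x)$ and argue that the resulting $(T-x)/T$ factor contributes only $O(1/T)$ corrections that can be folded into the finite-$m$ corrections already present in $e^{-d-d^2/m}/(d+1+d^2/m)$.
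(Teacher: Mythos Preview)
Your approach is essentially the paper's: lower-bound $\loss{\optoff}$ by the probability that a random agent has no acceptable partner among the agents whose sojourns overlap hers, apply Jensen to pass to $(1-d/m)^{\text{mean}}$, integrate against the $\operatorname{Exp}(1)$ sojourn law, and use $1-d/m\ge e^{-d/m-d^2/m^2}$ (valid since $m>10d$). The only bookkeeping difference is that the paper bounds the count of overlapping agents by $Z_{t_0}+|A^n_{t_0,t_0+t}|$ (the pool under $\optoff$ at arrival plus subsequent arrivals) and then invokes $\zeta:=\EE{t_0\sim\unif[0,T]}{Z_{t_0}}\le m$ from \autoref{fact:expectedpoolsize}, whereas you do direct Poisson thinning; both reduce to bounding $\int_0^\infty e^{-t}(1-d/m)^{m(1+t)}\,dt$ from below and arrive at the same expression.

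On the boundary issue you flag: the paper does not address it either---it passes directly from $\P{N(a)=\emptyset}$ to $\loss{\optoff}$ without tracking whether the isolated agent actually becomes critical before $T$. So the ``for any $T>0$'' in the statement is not established by the argument as written (and cannot be literally, since for very small $T$ the loss of any algorithm is $O(T)$). Your instinct that the discrepancy is an $O(1/T)$ correction is correct for large $T$, but it cannot be absorbed into the $d^2/m$ slack, which is $T$-independent; the honest reading of both proofs is that they deliver the bound in the regime $T\to\infty$.
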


Before proving the above theorems, it is useful to study the evolution of the system in the case where no agents are ever matched, i.e., the online algorithm does nothing. We later use this analysis in this section, as well as \autoref{sec:markovchain} and \autoref{twoalgo}.

We adopt the notation $\tilde{A}_t$ and $\tilde{Z}_t$ to denote the agents in the pool and the pool size in this case. Observe that by definition for any matching algorithm and any realization of the process,
\begin{equation}
\label{eq:ZtZ}
Z_t \leq \tilde{Z_t}.
\end{equation}
Using the above equation, in the following fact we show that for any matching algorithm $\E{Z_t}\leq m$.
\begin{proposition}
\label{fact:expectedpoolsize}
For any $t_0\geq 0$,
$$ \E{\tilde{Z}_{t_0}}  = (1-e^{-t_0})m.$$
\end{proposition}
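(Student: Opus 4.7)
The plan is to set up and solve a simple first-order linear ODE for $\E{\tilde Z_t}$. Since no agent is ever matched, the only way $\tilde Z_t$ changes is (i) a new arrival increases it by $1$ (at rate $m$, by the Poisson arrival assumption) or (ii) an existing agent in the pool becomes critical and leaves (each at independent rate $\lambda=1$, so total departure rate is $\tilde Z_t$). Writing the Kolmogorov forward / infinitesimal generator argument, I would argue
\[
\frac{d}{dt}\E{\tilde Z_t} \;=\; m \;-\; \E{\tilde Z_t},
\]
with initial condition $\E{\tilde Z_0}=0$ coming from $A_0=\emptyset$. Solving this linear ODE gives $\E{\tilde Z_t} = (1-e^{-t})m$, as claimed.

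To make the generator step rigorous I would note that in an infinitesimal interval $[t,t+h]$, the expected increment is $m h + o(h)$ from arrivals and $-\E{\tilde Z_t}\, h + o(h)$ from criticalities (using that, conditional on $\tilde Z_t = k$, the probability that at least one of the $k$ agents becomes critical in $[t,t+h]$ is $kh+o(h)$, and multiple events have probability $o(h)$). Taking $h\to 0$ yields the ODE.

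Alternatively, and perhaps more cleanly, I would give a direct Poisson construction. Every agent who arrived at some time $s\le t$ is still in the pool at time $t$ iff her independent Exp$(1)$ criticality clock has not yet fired, which happens with probability $e^{-(t-s)}$. By Campbell's theorem applied to the Poisson arrival process of rate $m$,
\[
\E{\tilde Z_t} \;=\; \int_0^t m\cdot e^{-(t-s)}\, ds \;=\; m\bigl(1-e^{-t}\bigr),
\]
recovering the same formula and giving a self-contained proof without invoking Markov-chain machinery.

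There is no real obstacle here; the statement is essentially the mean of an $M/M/\infty$ queue started empty, and both approaches above are routine. The only thing to be careful about is the justification that arrivals and criticality clocks decouple (so Campbell's theorem applies directly), which is immediate because the criticality clocks are, by assumption, independent of the arrival process and of each other.
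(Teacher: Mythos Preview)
Your proposal is correct. Your second argument (via Campbell's theorem) is essentially the paper's proof written in compressed form: the paper conditions on the number $K$ of arrivals in $[0,t_0]$, uses that arrival times are uniform on $[0,t_0]$ given $K$, computes the survival probability $\frac{1}{t_0}\int_0^{t_0} e^{-(t_0-t)}\,dt=(1-e^{-t_0})/t_0$ for a single arrival, and then takes $\E{K}=mt_0$ to get the result --- which is exactly the integral $\int_0^{t_0} m\,e^{-(t_0-s)}\,ds$ you wrote down. Your first argument via the generator/ODE is a genuinely different route (the standard $M/M/\infty$ mean equation) that the paper does not use; it is arguably cleaner since it avoids any appeal to the conditional-uniformity property of Poisson arrivals, at the cost of needing the infinitesimal-increment justification you sketched.
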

\begin{proof}
Let $\a_1,\ldots,\a_K$ be the set of agents who enter the market in the interval $[0,t_0]$ where $K$ is a random variable and the index is arbitrary and not by order of entry.  Let $X_{a_i}$ be the random variable indicating that $\a_i\in \tilde{A}_{t_0}$. Then,
$$ \P{X_{\a_i}} = \int_{t=0}^{t_0} \frac{1}{t_0} \P{s(a)\geq t_0-t} dt = \frac{1}{t_0}\int_{t=0}^{t_0} e^{t-t_0} dt = \frac{1-e^{-t_0}}{t_0}. $$
Therefore, $\E{\tilde{Z}_{t_0}} = \E{\E{\sum_{i=1}^K X_{a_i}|K}} = \E{\frac{K}{t_0}(1-e^{-t_0})} = (1-e^{-t_0})m.$
\end{proof}

\subsection{Loss of the Optimum Online Algorithm}

In this section, we prove \autoref{prop:opton}.
Let $\poolsize$ be the expected pool size of the $\opton$,
$$\poolsize:=\EE{t\sim \unif[0,T]}{Z_t}$$
Since $\opton$ does not know the $\A^c_t$ we can assume that each critical agent perishes with probability 1. Therefore,
\begin{equation}
\label{eq:averageloweropton}
\loss{\opton} = \frac{1}{m\cdot T} \E{\int_{t=0}^T Z_t dt} = \frac{\poolsize T}{m T} = \poolsize/m.
\end{equation}
To finish the proof we need to lower bound $\poolsize$ by $m/(2d+1)$. We
provide an indirect proof by showing a lower-bound on $\loss{\opton}$ which in turn
lower-bounds $\poolsize$.

Our idea is to lower-bound the probability that an agent does not have any acceptable transactions throughout her sojourn, and this directly gives a lower-bound on $\loss{\opton}$.
Fix an agent $\a\in A$. Conditioned on  $\a\in A$, $\a$ enters the market at a time $t_0 \sim \unif[0,T]$, and
$s(\a)=t$.
Therefore,
\begin{eqnarray*}
 \P{N(\a) = \emptyset} &=& \int_{t=0}^{\infty} \P{s(a) = t} \cdot \E{(1-d/m)^{|\A_{t_0}|}} \cdot \E{(1-d/m)^{ |\A^n_{t_0,t+t_0}|}}dt\\
 &\geq & \int_{t=0}^\infty e^{-t} \cdot (1-d/m)^{\E{Z_{t_0}}} \cdot (1-d/m)^{\E{|A^n_{t_0,t+t_0}|}}dt\\
 &=&  \int_{t=0}^\infty e^{-t} \cdot (1-d/m)^{\poolsize} \cdot (1-d/m)^{mt} dt
 \end{eqnarray*}
where the second inequality follows by the Jensen's inequality.
Since $d/m<1/10$, $1-d/m \geq e^{-d/m -d^2/m^2}$,
\begin{equation}
\label{eq:optonlowerprob}
\loss{\opton} \geq \P{N(a) =\emptyset} \geq e^{-\poolsize(d/m+d^2/m^2)} \int_{t=0}^\infty e^{-t(1+d+d^2/m)}dt \geq \frac{1-\poolsize (1+d/m)d/m}{1+d+d^2/m}
\end{equation}

Putting \eqref{eq:averageloweropton} and \eqref{eq:optonlowerprob} together, for $\beta:=\poolsize d/m$ we get
$$ \loss{\opton} \geq \max\{ \frac{1-\beta(1+d/m)}{1+d+d^2/m}, \frac{\beta}{d}\} \geq \frac{1}{2d+1+d^2/m}$$
where  the last inequality follows by letting $\beta=\frac{d}{2d+1+d^2/m}$ be the minimizer of the middle expression.

\subsection{Loss of the Omniscient Algorithm}

In this section, we prove \autoref{prop:optoff}. This demonstrates that, in the high-information setting, \emph{no} policy can yield a faster-than-exponential decrease in losses, as a function of the average degree of each agent.

The proof is very similar to \autoref{prop:opton}. Let $\poolsize$ be the expected pool size of the $\optoff$,
$$ \poolsize:=\EE{t\sim\unif[0,T]}{Z_t}.$$
By \eqref{eq:ZtZ} and \autoref{fact:expectedpoolsize},
$$\poolsize\leq \EE{t\sim \unif[0,T]}{\tilde{Z}_t} \leq  m.$$
Note that \eqref{eq:averageloweropton} does
not hold in this case because the optimum offline algorithm knows the set of critical agents at time $t$.

Now, fix an agent $\a\in \A$, and let us lower-bound the probability that $N(\a)=\emptyset$.
Say $\a$ enters the market at time $t_0\sim\unif[0,T]$ and $s(\a)=t$, then
\begin{eqnarray*}
\P{N(\a)=\emptyset} &=& \int_{t=0}^\infty \P{s(a)=t}\cdot \E{(1-d/m)^{Z_{t_0}}} \cdot \E{(1-d/m)^{|A^n_{t_0,t+t_0}|}} dt \\
&\geq & \int_{t=0}^\infty e^{-t} (1-d/m)^{\poolsize+mt} dt \geq \frac{e^{-\poolsize(1+d/m)d/m}}{1+d+d^2/m}
\geq \frac{e^{-d-d^2/m}}{1+d+d^2/m}.
\end{eqnarray*}
where the first inequality uses the Jensen's inequality and the second inequality uses
the fact that $d/m < 1/10$.

\section{Modeling an Online Algorithm as a Markov Chain}\label{sec:markovchain}
\subsection{Background}
In this section we show that in both of the Patient and Greedy algorithms the random processes $Z_t$ are Markovian, have unique stationary distributions and mix rapidly.
Before getting into the details we provide a brief overview on continuous time Markov Chains. We refer interested readers to \cite{Nor98,LPW06} for detailed discussions.

Let $Z_t$ be a continuous time Markov Chain on the non-negative integers ($\mathbb{N}$) that starts from state zero.
For any two states $i,j\in\N$, we assume that the rate of going from $i$ to $j$ is $r_{i\to j} \geq 0$.
The rate matrix $Q \in \mathbb{N}\times\mathbb{N}$ is defined as follows,
\begin{equation*}
Q(i,j):=\begin{cases}
r_{i\to j} & \text{if $i\neq j$,}\\
\sum_{k\neq i} -r_{i\to k} & \text{otherwise.}
\end{cases}
\end{equation*}
Note that, by definition, the sum of the entries in each row of $Q$ is zero.
It turns out that (see e.g., \cite[Theorem 2.1.1]{Nor98}) the transition probability in $t$ units of time is,
$$ e^{tQ} = \sum_{i=0}^\infty \frac{t^i Q^i}{i!}.$$
Let $P_t:=e^{tQ}$ be the transition probability matrix of the Markov Chain in $t$ time units. 
It follows that,
\begin{equation}
\label{eq:dtzt}
 \frac{d}{dt} P_t = P_t Q.
 \end{equation}
 In particular, in any infinitesimal time step $dt$, the chain moves based on $Q\cdot dt$. 
 
A Markov Chain is irreducible if for any pair of states $i,j\in\N$, $j$ is reachable from $i$ with a non-zero probability. 
Fix a state $i\geq 0$, and suppose that $Z_{t_0}=i$, and let $T_1$
be the first jump out of $i$ (note that  $T_1$ is distributed as an exponential random variable).
State $i$ is {\em positive recurrent} iff
\begin{equation}
	\E{\inf \{ t \geq T_1: Z_t = i\} | Z_{t_0}=i} < \infty
\end{equation}
It follows by the ergodic theorem that a (continuous time) Markov Chain has a unique stationary distribution if and only if it has a positive recurrent state \cite[Theorem 3.8.1]{Nor98}. 
Let  $\pi:\N\to\R_+$ be  the stationary distribution of a Markov chain.
It follows by the definition that for any $t\geq 0$, $P_t = \pi P_t$. 
The {\em balance equations} of a Markov chain  say that for any $S\subseteq \N$,
\begin{equation}
\label{eq:balanceequations}
 \sum_{i\in S,j\notin S} \pi(i) r_{i\to j} = \sum_{i\in S,j\notin S} \pi(j) r_{j\to i}.
 \end{equation}

Let $z_t(.)$ be the distribution of $Z_t$ at time $t\geq 0$, i.e., $z_t(i) := \P{Z_t=i}$ for any integer $i\geq 0$.
For any $\eps>0$, we define the mixing time (in total variation distance) of this Markov Chain as follows,
\begin{equation}
\label{eq:mixingtimedef}
 \tau_{\mix}(\eps) = \inf\Big\{t: \norm{z_t - \pi}_{\TV} := \sum_{k=0}^\infty |\pi(k) - z_t(k)| \leq \eps\Big\}.
 \end{equation}

\subsection{Markov Chain Characterization}
The following is the main theorem of this section. We show that this Markov Chain mixes in time $O(\log(m))$.
\begin{theorem}
\label{thm:stationarity}
For the Patient and Greedy algorithms and any $0 \leq t_0 < t_1$,
$$ \P{Z_{t_1} | Z_t \text{ for } {0\leq t<t_1}} = \P{Z_{t_1} | Z_t \text{ for } t_0 \leq t < t_1}.  $$
The corresponding Markov Chains have unique stationary distributions
 and mix in
time $O(\log(m)\log(1/\eps))$ in total variation distance,
$$\tau_{\mix}(\eps) \leq O(\log(m)\log(1/\eps)).$$
\end{theorem}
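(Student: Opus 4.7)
My plan is to handle the three claims in order: the Markov property, the existence and uniqueness of the stationary distribution, and the $O(\log(m)\log(1/\eps))$ mixing-time bound.

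For the Markov property I would identify the infinitesimal transition rates out of each state $k$ for both algorithms and show they depend only on $k$. For \algGreedy{} I would first establish by induction over events the invariant that the induced subgraph on the pool is always empty: a new arrival with at least one existing neighbor is immediately matched and both leave, and an old pool member likewise exits the moment a later arrival brings in an edge incident to it. This makes the only transitions out of state $k$: $k\to k+1$ at rate $m(1-d/m)^k$ and $k\to k-1$ at rate $m(1-(1-d/m)^k)+k$. For \algcritical{} the analogous invariant, proved again by induction over transitions, is that conditional on $Z_t=k$ and on the entire history, the induced subgraph on the current pool is distributed as $G(k,d/m)$: every past match decision depended only on edges incident to agents who have since departed, and by mutual independence of edges in the underlying Erd\H{o}s--R\'enyi graph this conditioning does not change the law of edges among the currently surviving pool members. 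The rates $k\to k+1$, $k\to k-1$, $k\to k-2$ are then $m$, $k(1-d/m)^{k-1}$, and $k(1-(1-d/m)^{k-1})$ respectively. Since all rates depend only on $k$, the displayed identity in the theorem is the standard Markov property for a pure-jump continuous-time process.

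For existence and uniqueness of $\pi$, both chains are irreducible on $\mathbb{N}$ since any state is reachable from any other via sequences of arrivals and criticality events. For positive recurrence I would apply Foster's drift criterion with the linear Lyapunov function $V(k)=k$: the upward rate at $k$ is at most $m$, while the contribution of the criticality clocks alone to the expected downward rate in $V$ is at least $k$ in both algorithms, so $(QV)(k)\leq m-k$. Outside the finite set $\{0,\ldots,2m\}$ this is at most $-m$, which by the ergodic theorem cited in Subsection 4.1 yields a unique stationary distribution $\pi$.

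For the mixing-time bound I would use a coupling argument. Couple two copies $Z_t,Z'_t$ started from two arbitrary initial pools by a grand coupling: they share the same Poisson arrival stream and the same underlying random graph on all post-time-$0$ agents, with independent rate-$1$ criticality clocks attached to every agent. Every initial agent of either copy departs (by matching or perishing) by time $t$ with probability at least $1-e^{-t}$, so choosing $t=C\log(m)$ makes every initial agent gone in both copies with probability $\geq 1-1/m$. Once the initial pools have fully dissolved, both copies see identical subsequent dynamics on the same set of new agents, so they coincide modulo the transient imbalance created while initial agents were leaving; the drift inequality $(QV)(k)\leq m-k$ together with the bound $Z_t\leq \tilde Z_t$ from \eqref{eq:ZtZ} controls that residual imbalance. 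This gives coalescence in time $O(\log m)$ with constant error probability, and sub-multiplicativity of total-variation distance to stationarity boosts the bound to $\tau_{\mix}(\eps)=O(\log(m)\log(1/\eps))$.

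The main obstacle will be making the coupling rigorous: a naive monotone coupling of $Z$ alone fails because the Patient chain can decrease by $2$ whenever a critical agent is matched, and the matched partner may be a different agent in the two copies, which de-synchronizes the aggregate states. I would handle this by performing the coupling at the level of individual agents (who arrives when, whose criticality clock ticks, and who is whose neighbor) rather than at the aggregate pool-size level, and then invoke the fact that once every time-$0$ agent has departed, any asymmetry introduced during that transient has been fully dissipated and the two copies evolve identically thereafter.
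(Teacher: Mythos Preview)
Your treatment of the Markov property is essentially the paper's argument (the paper packages the key invariants as \autoref{Z_suff_stat}). Your existence and uniqueness argument differs from the paper's: they show state $0$ is positive recurrent by dominating $Z_t$ with the pure birth--death process $\tilde Z_t$ of \eqref{eq:emptyrates}, whereas you use Foster's drift criterion with $V(k)=k$. Both are valid; your route is shorter and does not rely on the explicit birth--death calculation.

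The mixing-time argument has a genuine gap. Your grand coupling shares arrivals, edges, and criticality clocks, and you then assert that once all time-$0$ agents have departed ``any asymmetry introduced during that transient has been fully dissipated and the two copies evolve identically thereafter.'' That is not true. While initial agents are still present, a post-time-$0$ arrival $a$ can be matched in one copy (to an initial agent, or to a new agent present only in that copy) yet remain unmatched in the other copy's pool. When the last initial agent departs, the two pools are therefore \emph{different subsets} of the post-time-$0$ agents, and the subsequent dynamics are \emph{not} identical: a later arrival may find a neighbor in one pool but not the other, or may be matched to different partners in the two copies, perpetuating (and possibly enlarging) the discrepancy indefinitely. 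The drift bound $(QV)(k)\leq m-k$ gives positive recurrence but does not convert this agent-level asymmetry into a coalescence-time estimate.

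The paper's coupling is different and is exactly the idea your final paragraph reaches toward but does not execute. For \algGreedy{} they relabel so that the smaller pool's agents are always a prefix $a_1,\ldots,a_y$ of the larger pool's $b_1,\ldots,b_z$, share the edge coins so that whenever an arrival has a neighbor in the smaller pool it automatically has one in the larger pool, and share criticality clocks on the common prefix; this forces $|Y_t-Z_t|$ to be monotone non-increasing and reduces coalescence to the maximum of $|Y_0-Z_0|$ rate-$1$ exponentials. For \algcritical{} they use the same monotone construction when the gap exceeds $1$, but when the gap equals $1$ they deliberately \emph{decouple} the two arrival streams so that the gap performs a random walk on $\{0,1,2\}$ hitting $0$ in $O(1)$ expected time; without this twist the gap can oscillate between $1$ and $2$ forever (consider $d=m$, $y=m/2$, $z=m/2+1$). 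Finally, both arguments need a tail bound on $|Y_0-Z_0|$ when $Y_0\sim\pi$, which the paper supplies via the concentration estimates of \autoref{prop:Greedyconcentration} and \autoref{prop:criticalconcentration}; your proposal never invokes these.
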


First, we argue that $Z_t$ is Markovian for the Patient and Greedy algorithms.  This follows from the following simple observation.

\begin{proposition}
\label{Z_suff_stat}
Under either of Greedy or Patient algorithms, for any $t\geq 0$, conditioned on $Z_t$, the distribution of $G_t$ is uniquely defined. So, given $Z_t$, $G_t$ is conditionally independent of $Z_{t'}$ for $t'<t$.
\end{proposition}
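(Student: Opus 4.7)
The plan is to prove the claim separately for each of the two algorithms, using a principle of deferred decisions on the edge realizations: an edge is only ``revealed'' to the algorithm when the algorithm actually needs to inspect it, and any edge not yet revealed by time $t$ retains its prior Bernoulli$(d/m)$ distribution, independent of the past and independent of all other unrevealed edges, conditional on the full algorithmic history $\mathcal{H}_t$. Given this, it suffices to identify which edges among agents currently in the pool have been inspected up to time $t$, and to conclude that the conditional distribution of $G_t$ given $\mathcal{H}_t$ depends only on $Z_t$.

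For the Greedy algorithm, I would show by induction on the sequence of arrival and criticality events that the graph on the pool is always empty. When an agent $a$ arrives at time $s$, Greedy must inspect every edge from $a$ to the current pool in order to decide whether to match; if any such edge is present, $a$ and a neighbor leave, and if none is present, $a$ joins the pool while all these edges are now known to be absent. Criticality events have no effect under Greedy. Hence after each event the induced subgraph on the pool is empty, so given $Z_t = k$, the graph $G_t$ is deterministically the empty graph on $k$ vertices, which is trivially independent of the past given $Z_t$.

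For the Patient algorithm, the key observation is that the algorithm inspects the edges incident to an agent $c$ only at the moment $c$ becomes critical, because only then does it need to determine $N_t(c)$ in order to match $c$ or mark her perished. In either case, $c$ leaves the pool at that instant (together with a uniformly chosen neighbor, if $N_t(c)\neq\emptyset$). Consequently, every edge revealed by time $t$ is incident to at least one agent who has already left the pool. In particular, for any pair $a, b \in A_t$, the edge $(a, b)$ has never been inspected. By the deferred-decisions principle, conditional on $\mathcal{H}_t$ the restriction of $G_t$ to $A_t$ is distributed as an Erd\H{o}s-R\'enyi graph on $Z_t$ vertices with parameter $d/m$. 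Since this conditional distribution depends on $\mathcal{H}_t$ only through $Z_t$, the proposition follows.

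The main subtlety I expect is formalizing the deferred-decisions argument for the Patient algorithm: the event $\{a,b \in A_t\}$ is a complicated function of many past events involving edges incident to $a$ and $b$ (for instance, any agent $c$ who became critical while $a$ was in the pool must have either had no edge to $a$ or, if she did, randomly chosen a different neighbor). The clean way to handle this is to construct the process on an enlarged probability space in which each potential edge $(x,y)$ has its own independent Bernoulli$(d/m)$ coin, and that coin is only ``flipped and revealed'' when the algorithm first queries the edge; one then verifies that the resulting process is distributionally equivalent to the original and that on the event $\{a,b\in A_t\}$ the coin for $(a,b)$ is indeed still unflipped, so its conditional distribution remains Bernoulli$(d/m)$.
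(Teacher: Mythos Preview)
Your proposal is correct and follows essentially the same approach as the paper. The paper's own proof is a terse two-paragraph version of your argument: for Greedy it simply asserts that $|E_t|=0$ always, so $G_t$ is the empty graph on $Z_t$ vertices; for Patient it observes that the algorithm ``never looks at the edges between non-critical agents,'' so conditioned on $Z_t$ the graph is Erd\H{o}s--R\'enyi with parameter $d/m$. Your write-up is a more careful elaboration of precisely these two points, and your discussion of the deferred-decisions subtlety fills in what the paper leaves implicit.
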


\begin{proof}
Under the Greedy algorithm, at  any time $t\geq 0$,
$|E_t|=0$.
Therefore, conditioned on $Z_t$, $G_t$ is an empty graph with $|Z_t|$ vertices.

On the other hand, the Patient algorithm's never looks at the edges between non-critical agents, so the algorithm is oblivious to these edges.
It follows that under the Patient algorithm, for any $t \geq 0$, conditioned on $Z_{t}$,   $G_t$ is an Erd\"os-R\'eyni random graph with $|Z_t|$ vertices and parameter $d/m$.
\end{proof}

It follows from the above proposition that 
$Z_t$ is Markovian under the Greedy Algorithm and the  Patient Algorithm.

\subsection{Stationary Distributions: Existence and Uniqueness}

In this part we show that the Markov Chain on $Z_t$ has a unique stationary distribution under each of the Greedy and Patient algorithms.
In the last section we proved that $Z_t$ is indeed a Markov chain on non-negative integers ($\N$)
that starts from state zero.  

First, we show that the Markov Chain is irreducible.
First note that every state $i>0$ is reachable from state $0$ with a non-zero probability. It is sufficient that $i$ agents arrive at the market with no acceptable bilateral transactions.
On the other hand, state $0$ is reachable from any $i>0$ with a non-zero probability. It is sufficient that all of the $i$ agents in the pool become critical and no new agents arrive at the market.
So $Z_t$ is an {\em irreducible} Markov Chain.

 Therefore, by the ergodic theorem  it has a unique stationary distribution if and only if it has a positive recurrent state \cite[Theorem 3.8.1]{Nor98}. 
In the rest of the proof we show that the state zero is positive recurrent.
By \eqref{eq:ZtZ} $Z_t=0$ if $\tilde{Z}_t=0$. So, it is sufficient to show
\begin{equation}
\label{eq:positiverecurrentempty}
 \E{\inf\{ t\geq T_1: \tilde{Z}_t= 0\} | \tilde{Z}_{t_0}=0} < \infty.
 \end{equation}
It follows that $\tilde{Z}_t$ is just a continuous time birth-death process on $\N$ with
the following transition rates,
\begin{equation}
\label{eq:emptyrates} \tilde{r}_{k\to k+1} = m \text{ and } \tilde{r}_{k\to k-1} := k
\end{equation}
It is well known (see e.g. 
\cite[p.~249-250]{GS92}) that $\tilde{Z}_t$ has a stationary distribution if and only if
$$ \sum_{k=1}^\infty \frac{\tilde{r}_{0\to 1} \tilde{r}_{1\to 2} \ldots \tilde{r}_{k-1\to k}}{\tilde{r}_{1\to 0}\ldots \tilde{r}_{k\to k-1}} < \infty.$$
Using \eqref{eq:emptyrates} we have
\begin{equation*}
\sum_{k=1}^\infty \frac{\tilde{r}_{0\to 1} \tilde{r}_{1\to 2} \ldots \tilde{r}_{k-1\to k}}{\tilde{r}_{1\to 0}\ldots \tilde{r}_{k\to k-1}} = \sum_{k=1}^\infty \frac{m^k}{k!} = e^m - 1 < \infty
\end{equation*}
Therefore, the birth death process has a unique stationary distribution and the state $0$ is positive recurrent. This proves \eqref{eq:positiverecurrentempty} so $Z_t$ is an ergodic Markov Chain.

\subsection{Upper bounding the Mixing Time}
In this part we complete the proof of \autoref{thm:stationarity} and we upper-bound
the mixing of Markov Chain $Z_t$ for the Greedy and Patient algorithms. 
Let $\pi(.)$ be the stationary distribution of the Markov Chain.
\subsubsection{Mixing time of the Greedy Algorithm}
We start with the Greedy algorithm. We use the {\em coupling} technique (see \cite[Chapter 5]{LPW06}). Suppose
we have two Markov Chains $Y_t, Z_t$ (with different starting distributions) 
each running the Greedy algorithm. 
We define  a joint Markov Chain $(Y_t,Z_t)_{t=0}^\infty$ with the property that projecting on either of $Y_t$ and $Z_t$ we see the stochastic process of Greedy algorithm, and that
they stay together at all times after their first simultaneous visit to a single state, i.e.,
$$ \text{if } Y_{t_0}=Z_{t_0}, \text{ then } Y_t=Z_t \text{ for } t\geq t_0.$$

Next we define the joint chain. We define this  chain such that for any $t\geq t_0$, $|Y_t-Z_t| \leq |Y_{t_0}-Z_{t_0}|$.
Assume that $Y_{t_0}=y, Z_{t_0}=z$ at some time $t_0\geq 0$, for $y,z\in\N$. Without loss of generality assume $y < z$ (note that if $y=z$ there is nothing to define). Consider any arbitrary labeling of the agents in the first pool with $a_1,\ldots,a_y$, and in the second pool with $b_1,\ldots,b_z$. Define $z+1$ independent exponential clocks such that the first $z$ clocks have rate 1, and the last one has rate $m$. 
If the $i$-th clock ticks for $1\leq i\leq y$, then both of $a_i$ and $b_i$ become critical (recall that in the Greedy algorithm the critical agent leaves the market right away). If $y< i\leq z$, then  $b_i$ becomes critical, and if $i=z+1$ new agents $a_{y+1}, b_{z+1}$ arrive to the markets. In the latter case we need to draw  edges between the new agents and those currently in the pool. 
We use $z$ independent coins each with parameter $d/m$. We use the first $y$ coins to decide simultaneously on the potential transactions $(a_i,a_{y+1})$ and $(b_i,b_{z+1})$ for $1\leq i\leq y$, and the last $z-y$ coins for the rest.
This implies that for any $1\leq i\leq y$, $(a_i,a_{y+1})$ is an acceptable transaction iff $(b_i,b_{z+1})$ is acceptable. 
Observe that if $a_{y+1}$ has at least one acceptable transaction then so has $b_{z+1}$
but the converse does not necessarily hold.

It follows from the above construction that $|Y_t-Z_t|$ is a non-increasing function of $t$. 
Furthermore, this value decreases when either of the agents $b_{y+1},\ldots,b_{z}$ become critical (we note that this value may also decrease when a new agent arrives but we do not exploit this situation here).
Now suppose $|Y_0-Z_0|=k$. It follows that the two chains arrive to the same state when all of the $k$ agents that are not in common become critical. This has the same distribution as the maximum of $k$ independent exponential random variables with rate 1. Let $E_k$ be a random variable that is the maximum of $k$ independent exponentials of rate 1. For any $t\geq 0$,
$$ \P{Z_t \neq Y_t} \leq \P{E_{|Y_0-Z_0|}\leq  t} = (1-e^{-t})^{|Y_0-Z_0|}.$$

Now, we are ready to bound the mixing time of the Greedy algorithm. Let $z_t(.)$
be the distribution of the pool size at time $t$ when there is no agent in the pool at time $0$
and let $\pi(.)$ be the stationary distribution. 
Fix $0<\eps<1/4$, and let $\beta\geq 0$ be a parameter that we fix later.
Let $(Y_t,Z_t)$ be the joint Markov chain that we constructed above where $Y_t$ is started at the stationary distribution and $Z_t$ is started at state zero. 
Then,
\begin{eqnarray*}
\norm{z_t - \pi}_{\TV} \leq \P{Y_t\neq Z_t}& =& \sum_{i=0}^\infty \pi(i) \P{Y_t\neq Z_t | Y_0=i}\\
&\leq& \sum_{i=0}^\infty \pi(i) \P{E_i > t}\\
&\leq& \sum_{i=0}^{\beta m/ d} (1-(1-e^{-t})^{\beta m/d}) + \sum_{i=\beta m/ d}^\infty \pi(i) 
\leq  \frac{\beta^2 m^2}{d^2} e^{-t} + 2 e^{-m(\beta-1)^2/2d}
\end{eqnarray*}
where the last inequality follows by equation \eqref{eq:bernoulli} and \autoref{prop:Greedyconcentration}.
Letting $\beta = 1+\sqrt{2\log(2/\eps)}$ and $t=2\log(\beta m/d)\cdot \log(2/\eps)$ 
we get $\norm{z_t-\pi}_{\TV}\leq \eps$ that proves the theorem.

\subsubsection{Mixing time of the Patient Algorithm}
It remains to bound the mixing time of the Patient algorithm. 
The construction of the joint Markov Chain is very similar to the above construction except some caveats. Again, suppose $Y_{t_0}=y$ and $Z_{t_0}=z$ for $y,z\in\N$ and $t_0\geq 0$
and that $y<z$. Let $a_1,\ldots,a_y$ and $b_1,\ldots,b_z$ be a labeling of the agents. 
We consider two cases.
\begin{enumerate}[{Case} 1)]
\item $z>y+1$. In this case the construction is essentially the same as the Greedy algorithm.
The only difference is that we toss random coins to decide on acceptable bilateral transactions at the time that an agent becomes critical (and not at the time of arrival). It follows that when new agents arrive the size of each of the pools increase by 1 (so the difference remains unchanged). If any of the agents $b_{y+1},\ldots,b_z$ become critical then the size of second pool decrease  by 1 or 2 and so is the difference of the pool sizes. 
\item $z=y+1$. In this case we define a slightly different coupling. This is because, for some parameters and starting values, the Markov chains may not visit the same state for a long time for the coupling defined in Case 1 . If $z \gg m/d$, then with a high probability any critical agent gets matched. Therefore, the magnitude of $|Z_t-Y_t|$ does not quickly decrease (for a concrete example, consider the case where  $d=m$, $y=m/2$ and $z=m/2+1$).
Therefore, in this case we change the coupling.
We use $z+2$ independent clocks where the first $z$ are the same as before, i.e., they have rate 1 and when the $i$-th clock ticks $b_i$ (and $a_i$ if $i\leq y$) become critical.
The last two clocks have rate $m$, when the $z+1$-st clock ticks a new agent arrives to the first pool and when $z+2$-nd one ticks a new agent arrives to the second pool.
\end{enumerate}
Let $|Y_0-Z_0|=k$. 
By  the above construction  $|Y_t-Z_t|$ is a decreasing function of $t$ unless $|Y_t-Z_t|=1$. In the latter case this difference goes to zero if a new agent arrives to the smaller pool and it increases if a new agent arrives to the bigger pool. 
Let $\tau$ be the first time $t$ where $|Y_t-Z_t|=1$. Similar to the Greedy algorithm, the event $|Y_t-Z_t|=1$ occurs if the second to maximum of $k$ independent exponential random variables with rate 1 is at most t. Therefore,
$$ \P{\tau \leq t} \leq \P{E_k\leq t} \leq (1-e^{-t})^k$$

Now, suppose $t\geq \tau$; we  need to bound the time it takes to make the difference zero. First, note that after time $\tau$ the difference is never more than 2. Let $X_t$ be the (continuous time) Markov Chain illustrated in \autoref{fig:3statechain} and suppose $X_0=1$.
Using $m\geq 1$, it is easy to see that if $X_t=0$ for some $t\geq 0$, then $|Y_{t+\tau}-Z_{t+\tau}|=0$ (but the converse is not necessarily true). It is a simple exercise that for $t\geq 8$,
\begin{equation}
\label{eq:3statechain}
\P{X_t \neq 0} = \sum_{k=0}^\infty \frac{e^{-t} t^k}{k!} 2^{-k/2}\leq \sum_{k=0}^{t/4} \frac{e^{-t} t^k}{k!} + 2^{-t/8} \leq 2^{-t/4} + 2^{-t/8}. 
\end{equation}

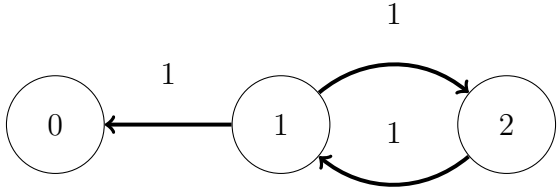
\begin{figure}[t]
\centering
\begin{tikzpicture}[inner sep=0, minimum size=13mm]
\tikzstyle{every node}=[circle,draw];
\node at (0,0) (a) {$0$};
\node at (3,0) (b) {$1$};
\node at (6,0) (c) {$2$};
\draw[->,line width=1.5pt]  (b)  edge node [above,draw=none] {$1$} (a)
(c) [bend left=40] edge node [above=0.5,draw=none] {$1$} (b)
(b) [bend left=40] edge node [above,draw=none] {$1$} (c);
\end{tikzpicture}
\caption{A three state Markov Chain used for analyzing the mixing time of the Patient algorithm.}
\label{fig:3statechain}
\end{figure}
Now, we are ready to upper-bound the mixing time of the Patient algorithm.
Let $z_t(.)$ be the distribution of the pool size at time $t$ where there is no agent at time 0,
and let $\pi(.)$ be the stationary distribution. Fix $\eps >0$, and let $\beta \geq 2$ be a parameter that we fix later. 
Let $(Y_t,Z_t)$ be the joint chain that we constructed above where $Y_t$ is started at the 
stationary distribution and $Z_t$ is started at state zero.
\begin{eqnarray*}
\norm{z_t-\pi}_{\TV} \leq \P{Z_t\neq Y_t} &\leq& \P{\tau \leq t/2} + \P{X_t\leq t/2}\\
&\leq & \sum_{i=0}^\infty \pi(i) \P{\tau\leq t/2 | Y_0=i} + 2^{-t/8+1} \\
&\leq & 2^{-t/8+1}+ \sum_{i=0}^\infty \pi(i) (1-(1-e^{-t/2})^i) \\
&\leq & 2^{-t/8+1} + \sum_{i=0}^{\beta m} (i t/2) + \sum_{i=\beta m}^\infty \pi(i) 
\leq 2^{-t/8+1} + \frac{\beta^2 m^2 t}{2} + 6 e^{-(\beta-1)m/3}.
\end{eqnarray*}
where in the second to last equation we used equation \eqref{eq:bernoulli} and in the last equation we used \autoref{prop:criticalconcentration}.
Letting $\beta=10$ and $t=8\log(m)\log(4/\eps)$ implies that $\norm{z_t-\pi}_{\TV}\leq \eps$ which proves \autoref{thm:stationarity}.

\section{Performance Analysis}\label{twoalgo}
In this section we upper bound $\loss{\algGreedy}$ and $\loss{\algcritical}$ as a function of $d$, 
and we upper bound $\loss{\apatient{\alpha}}$ as a function of $d$ and $\alpha$.
We prove the following three theorems.
\begin{theorem}
\label{thm:Greedyanal}
For any $\eps \geq 0$ and $T>0$,
\begin{eqnarray}
\label{eq:Greedyupper}
\loss{\algGreedy} \leq   \frac{\log(2)}{d} + \frac{\tau_{\mix}(\eps)}{T} + 6\eps  + O\Big(\frac{\log(m/d)}{\sqrt{dm}}\Big),
\end{eqnarray}
where $\tau_{\mix}(\eps) \leq 2\log(m/d)\log(2/\eps)$.
\end{theorem}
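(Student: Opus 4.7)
\begin{proofoverview}
The plan rests on the structural observation that under $\algGreedy$ the pool is almost surely an empty graph (\autoref{Z_suff_stat}), so once an agent joins the pool she can be removed only in two ways: she becomes critical (and perishes), or a later arrival happens to be acceptable to her (and they are matched). In particular, \emph{every} perishing coincides with some agent in the pool hitting her criticality clock. Since these clocks are i.i.d.\ rate-$1$ Poisson, the instantaneous perishing rate at time $t$ equals $Z_t$, so the expected number of perished agents by time $T$ is $\int_0^T \E{Z_t}\,dt$. Thus
\[
\loss{\algGreedy} \;=\; \frac{1}{mT}\int_0^T \E{Z_t}\,dt,
\]
and the entire task reduces to upper-bounding $\E{Z_t}$.

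I would split the integral at $t=\tau_{\mix}(\eps)$. On the initial interval I would use the trivial domination $Z_t \leq \tilde{Z}_t$ together with \autoref{fact:expectedpoolsize} to get $\E{Z_t}\leq m$, contributing at most $\tau_{\mix}(\eps)/T$ to the loss. On the tail interval $t\geq \tau_{\mix}(\eps)$, \autoref{thm:stationarity} guarantees that the law $z_t(\cdot)$ of $Z_t$ is within total variation $\eps$ of the unique stationary distribution $\pi$ of the Greedy chain; so the work reduces to controlling $\E_{\pi}[Z]$, plus a clean way to cash in the TV bound.

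The key input is the concentration statement \autoref{prop:Greedyconcentration}, which asserts that under $\pi$ the pool size is tightly concentrated around $k^{\ast} := \frac{\log 2}{d}\,m$. Heuristically this is the balance point of the birth-death rates $r_{k\to k+1} = m(1-d/m)^k$ and $r_{k\to k-1} = k + m\bigl(1-(1-d/m)^k\bigr)$: setting the two equal and using $k\ll m$ yields $(1-d/m)^{k^{\ast}}\approx 1/2$, i.e.\ $k^{\ast}\approx (\log 2)\, m/d$. The concentration width of order $\sqrt{m/d}\cdot \log(m/d)$ is exactly what produces the $O(\log(m/d)/\sqrt{dm})$ error term once divided by $m$.

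To convert TV closeness into a bound on $\E{Z_t}$ I would truncate at a threshold $R = k^{\ast} + c\sqrt{m/d}\log(m/d)$ where \autoref{prop:Greedyconcentration} gives $\pi(Z>R)\leq \eps$; on $\{Z\leq R\}$ the TV bound costs at most $\eps \cdot R$, and on $\{Z>R\}$ the contribution is controlled by $\pi(Z>R)+\eps \leq 2\eps$ times the crude bound $\E{Z_t\mid Z_t>R}\leq m$ (again via $Z_t\leq \tilde{Z}_t$). Collecting the terms yields the $6\eps$ summand. The only subtle point—and what I expect to be the main obstacle—is precisely this step: $Z_t$ is unbounded, so a naive $\|z_t-\pi\|_{\TV}$ estimate does not directly control $|\E{Z_t}-\E_\pi[Z]|$, and it is the concentration of $\pi$ that lets the TV bound bite. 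Adding up the three contributions $\tau_{\mix}(\eps)/T$, $\log(2)/d$ from $\E_\pi[Z]/m$, $6\eps$ from the TV step, and $O(\log(m/d)/\sqrt{dm})$ from the concentration width proves \eqref{eq:Greedyupper}; the quantitative mixing estimate $\tau_{\mix}(\eps)\leq 2\log(m/d)\log(2/\eps)$ is the bound already established in the proof of \autoref{thm:stationarity}.
\end{proofoverview}
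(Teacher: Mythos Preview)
Your high-level plan coincides with the paper's: write $\loss{\algGreedy} = \frac{1}{mT}\int_0^T \E{Z_t}\,dt$, split at $\tau_{\mix}(\eps)$, bound the burn-in by $m\cdot\tau_{\mix}(\eps)$, and then reduce to $\EE{Z\sim\pi}{Z}$, which is controlled via \autoref{prop:Greedyconcentration}. The step that needs repair is exactly the one you flag as ``the main obstacle'', the TV-to-expectation conversion.

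Your truncation at $R = k^{\ast} + c\sqrt{m/d}\,\log(m/d)$ does not work as written, for two reasons. First, your $R$ is fixed independently of $\eps$, so ``$\pi(Z>R)+\eps\leq 2\eps$'' fails whenever $\eps$ is smaller than the (fixed) tail $\pi(Z>R)$; the theorem must hold for every $\eps>0$. Second, and more importantly, the bound $\E{Z_t\mid Z_t>R}\leq m$ does \emph{not} follow from $Z_t\leq \tilde Z_t$: pointwise domination gives only $\E{Z_t\mid Z_t>R}\leq \E{\tilde Z_t\mid Z_t>R}$, and the conditioning event here is $\{Z_t>R\}$, not $\{\tilde Z_t>R\}$, so nothing forces this conditional mean below $m$. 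Total variation closeness by itself places no restriction on \emph{where} the $\eps$ of discrepancy in $z_t$ sits, so $z_t$ could in principle carry mass well above $m$; the product $\P{Z_t>R}\cdot\E{Z_t\mid Z_t>R}$ is then not controlled by $2\eps\cdot m$.

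The paper's fix is to truncate at $6m$ rather than at $R$. Writing $\E{Z_t}=\sum_{i\geq 1}\P{Z_t\geq i}$, one bounds the first $6m$ summands by $\sum_{i\leq 6m}(\PP{\pi}{Z\geq i}+\eps)\leq \EE{Z\sim\pi}{Z}+6m\eps$, and controls the far tail $\sum_{i>6m}\P{Z_t\geq i}$ using the explicit Poisson-type bound $\P{Z_t\geq i}\leq \P{\tilde Z_t\geq i}\leq \sum_{\ell\geq i} m^\ell/\ell!$, which sums to $O(2^{-6m})$. This cleanly decouples the TV step (cost $6m\eps$, hence $6\eps$ after dividing by $m$) from the concentration step; the latter is then a separate lemma showing $\EE{Z\sim\pi}{Z}\leq k^{\ast}+O(\sqrt{m/d}\,\log(m/d))$, which upon division by $m$ yields the $\log(2)/d$ term plus the $O(\log(m/d)/\sqrt{dm})$ error.
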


\begin{theorem}
\label{thm:criticalanal}
For any $\eps > 0$ and $T>0$,
\begin{eqnarray}
\label{eq:criticalupper}
\loss{\algcritical} \leq \max_{z\in [1/2,1]}\Big(z + \tilde{O}(1/\sqrt{m})\Big)e^{-zd} + \frac{\tau_{\mix}(\eps)}{T} + \frac{\eps m}{d^2} + 2/m,
\end{eqnarray}
where $\tau_{\mix}(\eps) \leq 8\log(m)\log(4/\eps)$.
\end{theorem}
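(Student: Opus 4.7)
The plan is to express the loss in closed form using the Erd\H{o}s-R\'{e}nyi structure preserved by the Patient algorithm, split the time integral into a burn-in phase and a near-stationary phase, and bound the near-stationary integrand using concentration of the stationary pool-size distribution.

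By \autoref{Z_suff_stat}, under $\algcritical$ the graph $G_t$ conditional on $Z_t = z$ is Erd\H{o}s-R\'{e}nyi on $z$ vertices with edge probability $d/m$. Therefore, when an agent becomes critical at time $t$, the probability that she has no neighbor (and hence perishes) equals exactly $(1-d/m)^{Z_t - 1}$. Since each of the $Z_t$ pool members becomes critical independently at rate $1$, the instantaneous perishing rate is $Z_t (1-d/m)^{Z_t - 1}$, yielding
\begin{equation*}
\loss{\algcritical} = \frac{1}{mT}\, \E{\int_0^T Z_t (1-d/m)^{Z_t - 1}\, dt}.
\end{equation*}

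I then split this integral at the mixing time $\tau_{\mix}(\eps)$. For $t \leq \tau_{\mix}(\eps)$, I use $Z_t \leq \tilde{Z}_t$ (not matching can only enlarge the pool) together with $\E{\tilde{Z}_t} \leq m$ from \autoref{fact:expectedpoolsize} to bound the burn-in contribution by $\tau_{\mix}(\eps)/T$. For $t \geq \tau_{\mix}(\eps)$, \autoref{thm:stationarity} guarantees $\|z_t - \pi\|_{\TV} \leq \eps$, where $\pi$ is the stationary distribution, so $\E{f(Z_t)}$ differs from $\EE{\pi}{f}$ by at most $\eps \cdot \|f\|_\infty$, where $f(z) := z(1-d/m)^{z-1}/m$. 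Tracking the maximum of $f$ together with a crude worst-case bound on the tail of $Z_t$ produces the $\eps m/d^2$ slack, and the $2/m$ term absorbs the $O(1/m)$ correction from $(1-d/m)^{Z_t - 1} = (1+O(1/m))(1-d/m)^{Z_t}$. To bound $\EE{\pi}{f}$, I invoke \autoref{prop:criticalconcentration}, which states that under $\pi$ the pool size lies in a window of width $\tilde{O}(\sqrt{m})$ around some $k^* \in [m/2, m]$ with probability $1 - o(1)$; on this window $(1-d/m)^{Z-1} \leq (1+O(1/m))\, e^{-Zd/m}$, so writing $z := Z/m$ and maximizing over $z \in [1/2, 1]$ produces the $\max_{z \in [1/2,1]}(z + \tilde{O}(1/\sqrt{m}))\, e^{-zd}$ term.

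The principal obstacle is proving \autoref{prop:criticalconcentration}. The stationary distribution $\pi$ has no closed form, because transitions out of state $k$ under $\algcritical$ are: up by one at rate $m$, down by one at rate $k(1-d/m)^{k-1}$ (a critical agent with no neighbors), and down by two at rate $k(1 - (1-d/m)^{k-1})$ (a critical agent matched to a random neighbor)---rates that depend nonlinearly on $k$ through the factor $(1-d/m)^{k-1}$. Following the strategy sketched in the Technical Contributions section, the plan is to identify a central state $k^* \in [m/2, m]$ at which the upward and downward drifts balance, then apply the balance equations to the nested pairs $P_i := (k^* - i, k^* + i)$ to show that $\pi(k^* \pm i)$ decays geometrically in $i$ once $i$ exceeds a threshold of order $\sqrt{m\log m}$. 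Given this concentration, the remainder of the argument is a routine integration and tail estimate.
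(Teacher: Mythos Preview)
Your proposal is correct and follows essentially the same route as the paper: express the loss as $\frac{1}{mT}\int_0^T \E{Z_t(1-d/m)^{Z_t-1}}\,dt$, split at $\tau_{\mix}(\eps)$ with a trivial burn-in bound, replace the post-mixing integrand by its stationary expectation (the paper gets the $\eps m/d^2$ slack via the pointwise bound $z_t(i)\le\pi(i)+\eps$ together with $\sum_i i(1-d/m)^{i-1}=m^2/d^2$, rather than $\|f\|_\infty$, but either works), and then control $\EE{\pi}{Z(1-d/m)^{Z-1}}$ via the concentration of $\pi$ around $k^*\in[m/2,m]$ established by the balance-equation ratio argument you sketch. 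One small inaccuracy: the $2/m$ term in the theorem actually arises from the tail contributions outside the concentration window in \autoref{prop:criticalexpectation}, not from the $(1-d/m)^{Z-1}$ versus $(1-d/m)^Z$ shift (that shift is a multiplicative $1+O(d/m)$ factor absorbed into the $\tilde O$).
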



\begin{theorem}
\label{thm:alphaanal}
Let $\bar{\alpha} := 1/\alpha + 1$. For any $\eps > 0$ and $T>0$,
\begin{eqnarray*}
\label{eq:alphaupper}
\loss{\apatient{\alpha}} \leq \max_{z\in [1/2,1]}\Big(z + \tilde{O}(\sqrt{\bar{\alpha} /m})\Big)e^{-zd/\bar{\alpha}} + \frac{\tau_{\mix}(\eps)}{\bar\alpha T} + \frac{\eps m\bar{\alpha} }{d^2} + 2\bar{\alpha} /m,
\end{eqnarray*}
where $\tau_{\mix}(\eps) \leq 8\log(m/\bar\alpha)\log(4/\eps).$
\end{theorem}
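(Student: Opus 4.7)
My plan is to prove \autoref{thm:alphaanal} by a two-step reduction to \autoref{thm:criticalanal}, exactly following the outline sketched in the proof overview of \autoref{thm:inf3}.

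Step 1 (superposition). In $\apatient{\alpha}$ run on the market $(m,d,1)$, every agent $a$ triggers a matching attempt upon the first occurrence of either an independent exponential clock of rate $1/\alpha$ or her own criticality event of rate $1$, and is declared perished on that attempt if $N_t(a)=\emptyset$. By the superposition property of independent Poisson processes, the union of these two events is a single Poisson process of rate $\bar\alpha=1/\alpha+1$ per agent, and the perishing rule depends only on $N_t(a)$ at the attempt time. Since arrivals, acceptable transactions, and the matching rule are otherwise identical, the joint law of (arrivals, attempt times, perishing events) under $\apatient{\alpha}$ on $(m,d,1)$ coincides exactly with that of $\algcritical$ on the market $(m,d,\bar\alpha)$. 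In particular,
\[
\loss{\apatient{\alpha}\text{ on }(m,d,1),T} \;=\; \loss{\algcritical\text{ on }(m,d,\bar\alpha),T}.
\]

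Step 2 (market equivalence). I will invoke \autoref{lem:equivalence} to rescale time by a factor of $\bar\alpha$: speeding up time by $\bar\alpha$ turns the arrival rate $m$ into $m/\bar\alpha$, the criticality rate $\bar\alpha$ into $1$, and preserves the edge probability $d/m=(d/\bar\alpha)/(m/\bar\alpha)$. Thus the $(m,d,\bar\alpha)$ market over time horizon $T$ is equivalent, with the identical loss functional, to the $(m/\bar\alpha,d/\bar\alpha,1)$ market over time horizon $\bar\alpha T$. Combining with Step 1,
\[
\loss{\apatient{\alpha}\text{ on }(m,d,1),T} \;=\; \loss{\algcritical\text{ on }(m/\bar\alpha,d/\bar\alpha,1),\bar\alpha T}.
\]

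Step 3 (substitute into \autoref{thm:criticalanal}). I apply \autoref{thm:criticalanal} to the right-hand side with parameters $m'=m/\bar\alpha$, $d'=d/\bar\alpha$, $T'=\bar\alpha T$. The leading term becomes $\max_{z\in[1/2,1]}(z+\tilde O(1/\sqrt{m/\bar\alpha}))e^{-zd/\bar\alpha}=\max_{z}(z+\tilde O(\sqrt{\bar\alpha/m}))e^{-zd/\bar\alpha}$; the mixing term becomes $\tau_{\mix}(\eps)/(\bar\alpha T)$ with $\tau_{\mix}(\eps)\leq 8\log(m/\bar\alpha)\log(4/\eps)$; the third term becomes $\eps\cdot(m/\bar\alpha)/(d/\bar\alpha)^2=\eps m\bar\alpha/d^2$; and the last term becomes $2/(m/\bar\alpha)=2\bar\alpha/m$. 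Summing yields precisely the bound stated in \autoref{thm:alphaanal}.

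The only conceptually delicate step is Step 1: one has to verify that the definition of $\apatient{\alpha}$ (which explicitly treats an agent as perished if her exponential clock ticks at a moment when $N_t(a)=\emptyset$) gives a process that agrees in distribution, and not merely stochastically dominates, the Patient algorithm on a faster-criticality market. This is what makes the ``declare perished on the first attempt'' convention important — without it the two processes would differ, since a retained agent could be matched later. The rest is a pure rescaling exercise, and checking that the parameter substitutions cleanly produce each of the four terms in the claimed bound (including the $\tau_{\mix}$ bound inherited from \autoref{thm:stationarity}) is a routine computation that I would do in sequence.
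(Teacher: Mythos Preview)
Your proposal is correct and follows essentially the same approach as the paper's proof: superposition of the two Poisson clocks to reduce $\apatient{\alpha}$ on $(m,d,1)$ to $\algcritical$ on $(m,d,\bar\alpha)$, then the market equivalence \autoref{lem:equivalence} to pass to $(m/\bar\alpha,d/\bar\alpha,1)$ at horizon $\bar\alpha T$, and finally a parameter substitution into \autoref{thm:criticalanal}. Your observation that the ``declare perished on first attempt'' convention is what makes Step~1 an exact distributional equality (rather than merely a domination) is precisely the point the paper is relying on when it invokes additivity of Poisson processes.
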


We will prove \autoref{thm:Greedyanal} in \autoref{sec:Greedy}, \autoref{thm:criticalanal} in \autoref{sec:critical}
and 
\autoref{thm:alphaanal} in \autoref{sec:alphaanal}.

\subsection{Loss of the Greedy Algorithm}
\label{sec:Greedy}
In this part we upper bound $\loss{\algGreedy}$.
We crucially exploit the fact that  $Z_t$ is a Markov
Chain and has a unique stationary distribution, $\pi: \mathbb{N} \to \mathbb{R}_+$ (see \autoref{thm:stationarity} for proof).

Let $\poolsize:=\EE{Z\sim\mu}{Z}$ be the expected size of the pool under the stationary distribution
of the Markov Chain on $Z_t$.
First, observe that if the Markov Chain on $Z_t$ is mixed, then the agents perish at the rate of
$\poolsize$. Roughly speaking, if we run the Greedy algorithm for a sufficiently long time then Markov Chain on size of the pool mixes and we get $\loss{\algGreedy} \approx \frac{\poolsize}{m}$.
This observation is made rigorous in the following lemma. Note that as $T$ and $m$ grow, the first three terms become negligible.

\begin{lemma}
\label{lem:algGreedyEZ}
For any $\eps>0$, and $T>0$,
$$ \loss{\algGreedy} \leq \frac{\tau_{\mix}(\eps)}{T} + 6\eps + \frac{1}{m}2^{-6m} + \frac{\EE{Z\sim \pi}{Z}}{m}.$$
\end{lemma}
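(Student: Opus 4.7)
The plan is to first observe that under the Greedy algorithm, every critical agent in the pool necessarily perishes, since Greedy never matches two agents already in the pool—by \autoref{Z_suff_stat}, the graph $G_t$ is almost always empty. Each of the $Z_t$ pool agents becomes critical at rate $1$, so the instantaneous rate of perishings at time $t$ is exactly $Z_t$. This gives the clean identity
\begin{equation*}
\loss{\algGreedy} \;=\; \frac{1}{mT}\,\E{\int_0^T Z_t\, dt} \;=\; \frac{1}{mT}\int_0^T \E{Z_t}\, dt,
\end{equation*}
where the expectation and integral can be swapped by Tonelli.

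Next I would split the integral at the mixing time $\tau_{\mix}(\eps)$. For $t \leq \tau_{\mix}(\eps)$, I use the pathwise coupling $Z_t \leq \tilde Z_t$ from \eqref{eq:ZtZ} together with \autoref{fact:expectedpoolsize} to conclude $\E{Z_t}\leq (1-e^{-t})m \leq m$. This portion contributes at most $\frac{\tau_{\mix}(\eps)}{T}$ to the loss.

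For $t \geq \tau_{\mix}(\eps)$, I have $\norm{z_t - \pi}_{\TV} \leq \eps$ by definition of the mixing time. The main obstacle here is that $Z_t$ is unbounded, so TV closeness does not immediately imply closeness of expectations. I handle this by truncating at the threshold $K := 6m$ and decomposing
\begin{equation*}
\E{Z_t} \;=\; \sum_{k\leq K} k\,z_t(k) \;+\; \sum_{k>K} k\,z_t(k).
\end{equation*}
The truncated sum is bounded by $\EE{Z\sim\pi}{Z} + K\eps = \EE{Z\sim\pi}{Z} + 6m\eps$ directly from the pointwise TV bound, using that $k \leq K$ on this range. The tail sum is bounded using the pathwise domination: $\sum_{k>K} k\,z_t(k) \leq \E{\tilde Z_t\cdot \1[\tilde Z_t > 6m]}$. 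Since $\tilde Z_t$ is just the occupancy of an $M/M/\infty$ queue started at $0$, it is distributed as $\mathrm{Poisson}\bigl((1-e^{-t})m\bigr)$, hence stochastically dominated by $\mathrm{Poisson}(m)$. A Chernoff estimate for the Poisson distribution then yields $\E{\tilde Z_t\cdot \1[\tilde Z_t > 6m]} \leq 2^{-6m}$ uniformly in $t$.

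Combining the two ranges,
\begin{equation*}
\loss{\algGreedy} \;\leq\; \frac{\tau_{\mix}(\eps)}{T} + \frac{T-\tau_{\mix}(\eps)}{mT}\Bigl(\EE{Z\sim\pi}{Z} + 6m\eps + 2^{-6m}\Bigr) \;\leq\; \frac{\tau_{\mix}(\eps)}{T} + 6\eps + \frac{2^{-6m}}{m} + \frac{\EE{Z\sim\pi}{Z}}{m},
\end{equation*}
which is exactly the asserted bound. The only genuinely technical step is the tail estimate on $\tilde Z_t$; every other step is either a direct use of the paper's earlier results or elementary manipulation.
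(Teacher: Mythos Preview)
Your proof is correct and follows essentially the same approach as the paper's: express the loss as $\frac{1}{mT}\int_0^T \E{Z_t}\,dt$, bound the pre-mixing segment by $m\cdot\tau_{\mix}(\eps)$, and for $t\geq\tau_{\mix}(\eps)$ truncate at $6m$, controlling the bulk via the TV bound and the tail via the domination $Z_t\leq \tilde Z_t$. The only cosmetic differences are that the paper writes $\E{Z_t}=\sum_{i\geq 1}\P{Z_t\geq i}$ and splits that sum (so its tail term is $\E{(\tilde Z_t-6m)_+}$ rather than your slightly larger $\E{\tilde Z_t\,\1[\tilde Z_t>6m]}$), and it bounds $\P{\tilde Z_t=\ell}\leq m^\ell/\ell!$ directly instead of invoking the exact Poisson law; both routes yield the stated $2^{-6m}$ bound via Stirling.
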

\begin{proof}
By \autoref{fact:expectedpoolsize}, $\E{Z_t} \leq m$ for all $t$, so
\begin{eqnarray} \loss{\algGreedy} = \frac{1}{m\cdot T} \E{\int_{t=0}^T Z_t dt}
&=& \frac{1}{mT} \int_{t=0}^T \E{Z_t} dt \nonumber \\
 &\leq&
\frac{1}{mT} m \cdot \tau_{\mix}(\eps) + \frac{1}{mT}\int_{t=\tau_{\mix}(\eps)}^T \E{Z_t} dt
\label{eq:tmixalgGreedy}
\end{eqnarray}
where  the second equality uses the linearity of expectation. 
Let $\tilde{Z}_t$ be the number of agents in the pool at time $t$ when we do not match any pair of agents. 
By \eqref{eq:ZtZ},
$$\P{Z_t\geq i} \leq \P{\tilde{Z}_t \geq i}.$$
Therefore, for $t\geq \tau_{\mix}(\eps)$,
\begin{eqnarray}
\E{Z_t} = \sum_{i=1}^\infty \P{Z_t\geq i} &\leq&
 \sum_{i=0}^{6m} \P{Z_t\geq i} + \sum_{i=6m+1}^\infty \P{\tilde{Z}_t\geq i}\nonumber \\
 &\leq &
 \sum_{i=0}^{6m} (\PP{Z\sim \pi}{Z\geq i} +\eps) + \sum_{i=6m+1}^\infty \sum_{\ell=i}^\infty \frac{m^\ell}{\ell!} \nonumber \\
 &\leq &
 \EE{Z\sim \pi}{Z} + \eps 6m + \sum_{i=6m+1}^\infty 2\frac{m^i}{i!} \nonumber \\
 &\leq& \EE{Z\sim\pi}{Z} + \eps 6m + \frac{4m^{6m}}{(6m)!} \leq \EE{Z\sim \pi}{Z}+\eps 6m + 2^{-6m}.\label{eq:ZptalgGreedy}
\end{eqnarray}
where the second inequality uses $\P{\tilde{Z}_t = \ell} \leq m^\ell/\ell!$ that is proved below
and the last inequality follows by the Stirling's approximation of $(6m)!$\footnote{Stirling's approximation states that
$$ n! \geq \sqrt{2\pi n} \Big(\frac{n}{e}\Big)^n. $$
}.

Putting \eqref{eq:tmixalgGreedy} and \eqref{eq:ZptalgGreedy} proves the lemma.
\begin{claim}
For any $t_0> 0$,
$$ \P{\tilde{Z}_{t_0}=\ell}  \leq \frac{m^\ell}{\ell!}.$$
\end{claim}
\begin{proof}
Let $K$ be a random variable indicating the number agents who enter the pool in the interval $[0,t_0]$. By Bayes rule,
$$ \P{\tilde{Z}_{t_0}=\ell} = \sum_{k=0}^\infty \P{\tilde{Z}_{t_0}=\ell, K=k} = \sum_{k=0}^\infty \P{\tilde{Z}_{t_0}=\ell | K=k} \cdot \frac{(mt_0)^k e^{-mt_0}}{k!},$$
where the last equation follows by the fact that arrival rate of the agents is a Poisson random variable of rate $m$.
In \autoref{fact:expectedpoolsize} we show that condition on an agent $a$ arrives in the interval $[0,t_0]$, the probability that he is in the pool at time $t_0$ is $(1-e^{-t_0})/t_0$.
Therefore, conditioned on $K=k$, the distribution of the number of agents at time $t_0$
is a Binomial random variable $B(k,p)$, where $p:=(1-e^{-t_0})/t_0)$.
So,
\begin{eqnarray*} \P{\tilde{Z}_{t_0}=\ell} &=& \sum_{k=\ell}^\infty {k\choose \ell} \cdot p^\ell \cdot (1-p)^{k-\ell}  \frac{(mt_0)^k e^{-mt_0}}{k!} \\
&= & \sum_{k=\ell}^\infty \frac{m^k e^{-mt_0}}{\ell! (k-\ell)!} (1-e^{-t_0})^\ell (t_0-1+e^{-t_0})^{k-\ell} \\
&\leq & \frac{m^\ell e^{-mt_0}}{\ell!} \sum_{k=\ell}^\infty \frac{(mt_0)^{k-\ell}}{(k-\ell)!} = \frac{m^\ell}{\ell!}.
\end{eqnarray*}
\end{proof}
This completes the proof of \autoref{lem:algGreedyEZ}.
\end{proof}

So, in the rest of the proof we just need to upper bound $\EE{Z\sim \pi}{Z}$.
Unfortunately, we do not have any closed form expression of the stationary distribution, $\pi(.)$.
Instead, we use the balance equations of the Markov Chain defined on $Z_t$ to characterize $\pi(.)$ and upper bound $\EE{Z\sim \pi}{Z}$.

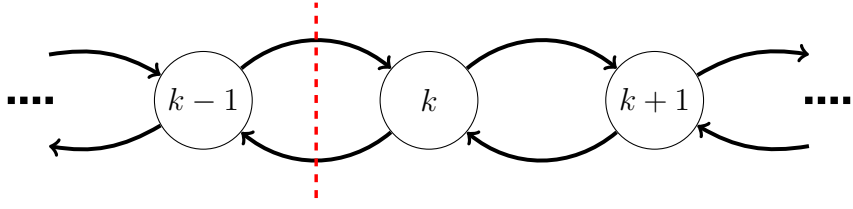
\begin{figure}[t]
\centering
\begin{tikzpicture}[inner sep=0, minimum size=13mm]
\tikzstyle{every node}=[circle,draw];
\node at (0,0) (b) {$k$};
\node at (-3,0) (a) {$k-1$};
\node at (3,0) (c) {$k+1$};
\node at  (-5.7,-.5) [draw=none] (d) {};
\node at  (5.7,.5) [draw=none] (e) {};
\node at  (-5.7,.5) [draw=none] (d1) {};
\node at  (5.7,-.5) [draw=none] (e1) {};
\draw[->,line width=1.5pt]  (a) [bend left=40] edge (b) (a) [bend left=20] edge (d)
  (d1) [bend left=20] edge (a) (e1) [bend left=20] edge (c) (c) [bend left=20] edge (e)
  (b) [bend left=40] edge (a)
  (b) [bend left=40] edge (c)
  (c) [bend left=40] edge (b);
\draw [dotted, line width=2.8pt] (a)+(-2,0) -- ++(-2.6,0);
\draw [dotted, line width=2.8pt] (c)+(2,0) -- ++(2.6,0);
\draw [dashed,line width=1.3,color=red] (b)+(-1.5,-1.3) -- ++(-1.5,1.3);
\end{tikzpicture}
\caption{An illustration of the transition paths of the $Z_t$ Markov Chain under the Greedy algorithm}
\label{fig:Greedybalance}
\end{figure}

Let us rigorously define the transition probability operator of the Markov Chain on $Z_t$.
For any pool size $k$, the Markov Chain transits only to the states $k+1$ or $k-1$. It transits to state $k+1$ if a new agent arrives and the market-maker cannot match him (i.e., the new agent does not have any edge to the agents currently in the pool) and the Markov Chain transits to the state $k-1$ if a new agent arrives and is matched or an agent currently in the pool gets critical.
Thus, the transition rates $r_{k\to k+1}$ and $r_{k\to k-1}$ are defined as follows,
\begin{eqnarray}
	r_{k \to k+1} &:=&
	m\Big(1-\frac{d}{m}\Big)^k  \label{eq:transGreedy1}\\ 
	r_{k\to k-1} &:=& 
	k + m\Big(1-\Big(1-\frac{d}{m}\Big)^k\Big).
	\label{eq:transGreedy2}
\end{eqnarray}
In the above equations we used the fact that agents arrive at rate $m$, they perish at rate 1 and the probability of an acceptable transaction between two agents is $d/m$.

Let us write down the balance equation for the above Markov Chain (see equation \eqref{eq:balanceequations} for the full generality).
Consider the cut separating the states $0,1,2,\ldots,k-1$ from the rest (see \autoref{fig:Greedybalance} for an illustration). It follows that
\begin{equation}
\label{eq:balanceGreedy}
\pi(k-1) r_{k-1\to k} = \pi(k) r_{k\to k-1}
\end{equation}

Now, we are ready to characterize the stationary distribution $\pi(.)$.
In the following proposition we show that there is a number $z^* \leq \log(2)m/d$ such that under the stationary distribution,
the size of the pool is highly concentrated in an interval  of length $O(\sqrt{m/d})$ around $z^*$.
\begin{proposition}\label{prop:Greedyconcentration}
There exists $m/(2d+1)\leq k^* < \log(2)m/d$ 
such that for any $\sigma>1$,
$$ \PP{\pi}{k^*- \sigma\sqrt{2m/d} \leq Z \leq k^*+ \sigma\sqrt{2m/d}} \geq 1-O(\sqrt{m/d})e^{-\sigma^2}.$$
\end{proposition}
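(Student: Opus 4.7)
The plan is to exploit that $Z_t$ under Greedy is a birth--death chain, so $\pi$ is determined explicitly by the balance equations $\pi(k)/\pi(k-1) = r_{k-1\to k}/r_{k\to k-1}$. Set $q := 1 - d/m$ and define
\[
f(k) := \frac{r_{k-1\to k}}{r_{k\to k-1}} = \frac{m q^{k-1}}{k + m - m q^k}, \qquad D(k) := k + m - m q^{k-1}(1+q),
\]
so that $f(k)<1$ iff $D(k)>0$. A one-line computation gives $D(k+1)-D(k) = 1 + d q^{k-1}(1+q) > 0$, so $D$ is strictly increasing. Hence there is a unique integer $k^*$ with $D(k^*) \le 0 < D(k^*+1)$; by \eqref{eq:balanceGreedy}, $\pi$ is unimodal with mode at $k^*$. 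Evaluating $D$ at the two candidate endpoints and using $(1 - d/m)^k \geq e^{-kd/m - kd^2/m^2}$ locates $k^*$ inside $[m/(2d+1),\, \log(2)m/d)$: at the upper endpoint $mq^{k-1}(1+q)$ is roughly $m$ so $D(k)>0$, while at the lower endpoint the same quantity exceeds $m + m/(2d+1) = k+m$, making $D(k) \le 0$.

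Next I would establish geometric decay away from $k^*$. Writing $1-f(k) = D(k)/(k+m-mq^k)$ and iterating the recursion for $D$,
\[
D(k^*+j) = D(k^*) + \sum_{i=1}^{j}\!\bigl(1 + d\, q^{k^*+i-2}(1+q)\bigr).
\]
Since $k^* < \log(2)m/d$, the quantity $q^{k^*}$ is bounded below by a positive constant $c_0$, and for $j = O(\sqrt{m/d})$ the factor $q^{k^*+i-2}$ stays within a constant factor of $q^{k^*}$. This yields $D(k^*+j) \gtrsim j\, d$, and because the denominator $k^*+j+m-mq^{k^*+j}$ is $\Theta(m)$, we obtain $1 - f(k^*+j) \gtrsim j d/m$. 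Multiplying the balance ratios,
\[
\frac{\pi(k^*+j)}{\pi(k^*)} = \prod_{i=1}^{j} f(k^*+i) \le \exp\!\Bigl(-\Omega\!\bigl(j^2 d/m\bigr)\Bigr).
\]
The symmetric argument below $k^*$ uses $1 - 1/f(k^*-i+1) \gtrsim i d/m$, where the lower bound on $-D(k^*-i+1)$ again comes from the recursion (and $q^{k^*-i}$ is likewise a bounded constant for the relevant range of $i$), so $\pi(k^*-j)/\pi(k^*) \le \exp(-\Omega(j^2 d/m))$.

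Finally I would sum the tails. For $\sigma > 1$,
\[
\PP{\pi}{|Z-k^*| > \sigma \sqrt{2m/d}} \le 2\,\pi(k^*)\!\!\sum_{j > \sigma\sqrt{2m/d}}\!\! e^{-c j^2 d/(2m)} \le 2\,\pi(k^*)\int_{\sigma\sqrt{2m/d}}^{\infty} e^{-c x^2 d/(2m)}\,dx,
\]
and the substitution $y = x\sqrt{d/(2m)}$ combined with the standard tail bound $\int_\sigma^\infty e^{-y^2}dy \le e^{-\sigma^2}/(2\sigma)$ produces the $O(\sqrt{m/d})\,e^{-\sigma^2}$ bound (absorbing $\pi(k^*)\le 1$).

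The main obstacle is Step 2: one needs a lower bound on $D(k^*+j)$ that grows like $j d$ rather than merely like $j$, because only that gives a Gaussian rather than a plain exponential tail and hence the width $\sqrt{m/d}$. This hinges on the location of $k^*$ guaranteeing $q^{k^*}$ is $\Theta(1)$, which in turn justifies both Step 1's endpoint computations and the concentration. A secondary concern is that the downward analysis requires $q^{k^*-j}$ to stay $O(1)$ (so $f$ itself is bounded), which is fine since $j \le \sigma\sqrt{2m/d} \ll k^*$ in the regime of interest; beyond that range, $\pi$ is already negligible by the accumulated decay.
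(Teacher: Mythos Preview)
Your approach is essentially the same as the paper's: use the birth--death balance equations to locate the mode $k^*$ between $m/(2d+1)$ and $\log(2)m/d$, show the ratio $\pi(k\pm 1)/\pi(k)$ decays like $e^{-|k-k^*|d/m}$, multiply to get a Gaussian envelope, and sum the tails. The paper's execution is somewhat cleaner in the key step. Rather than working with $D(k)$ and its increments, it defines $k^*$ as the (real) root of $g(x):=r_{x\to x+1}-r_{x\to x-1}=2m(1-d/m)^x-x-m$ and then substitutes the identity $k^*+m=2m(1-d/m)^{k^*}$ directly into the balance ratio to obtain, for all $k\ge k^*$,
\[
\frac{\pi(k)}{\pi(k+1)}=\frac{k+1+m-m(1-d/m)^{k+1}}{m(1-d/m)^k}\ge -(1-d/m)+\frac{2}{(1-d/m)^{k-k^*}}\ge (1-d/m)^{-(k-k^*)}.
\]
This yields $\pi(k+1)/\pi(k)\le e^{-(k-k^*)d/m}$ with no hidden constants and with no restriction on the range of $k$, and an analogous bound below $k^*$. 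Iterating gives $\pi(k)\le e^{-d(k-k^*-1)^2/2m}$ and then the tail sum produces exactly the $O(\sqrt{m/d})e^{-\sigma^2}$ factor.

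Your route via $D(k^*+j)\gtrsim jd$ relies on $q^{k^*+i}$ staying bounded below, which you note holds only for $i$ up to a limited range. Two remarks there. First, your stated range $j=O(\sqrt{m/d})$ is unnecessarily tight: $q^i=(1-d/m)^i$ stays within a constant factor of $1$ for all $i=O(m/d)$, not just $O(\sqrt{m/d})$. Second, and more importantly, the tail sum runs over all $j>\sigma\sqrt{2m/d}$, so you still owe an argument for $j$ beyond that range; as written this is a gap (your ``secondary concern'' paragraph addresses only the downward side). It can be patched (once $j\gtrsim m/d$ one checks $1-f(k^*+j)$ is bounded away from $0$, giving geometric decay), but the paper's direct substitution avoids the case split entirely and also avoids the $\Omega(\cdot)$ constants that in your sketch would yield $e^{-c\sigma^2}$ rather than the stated $e^{-\sigma^2}$.
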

\begin{proof}
Let us define $f:\R\to\R$ as an interpolation of the difference of transition rates over the reals,
$$ f(x):=m(1-d/m)^{x} - (x+m(1-(1-d/m)^x)).$$
In particular, observe that $f(k)=r_{k\to k+1} - r_{k\to k-1}$.
The above function is a decreasing convex function over non-negative reals. We define $k^*$
as the unique root of this function.
Let $k^*_{\min}:=m/(2d+1)$ and $k^*_{\max}:=\log(2)m/d$. We show that $f(k^*_{\min})\geq 0$ and
$f(k^*_{\max}) \leq 0$. This shows that $k^*_{\min} \leq k^*< k^*_{\max}$. 
\begin{eqnarray*}
f(k^*_{\min}) &\geq & -k^*_{\min} - m +2m(1-d/m)^{k^*_{\min}} \geq 2m\Big(1-\frac{k^*_{\min} d}{m}\Big) - k^*_{\min} = 0,\\
 f(k^*_{\max}) &\leq & -k^*_{\max} - m +2m(1-d/m)^{k^*_{\max}} \leq -k^*_{\max} - m + 2m e^{-(k^*_{\max}) d/m}  = -k^*_{\max} \leq 0.
 \end{eqnarray*}
In the first inequality we used equation \eqref{eq:bernoulli}.

It remains to show that $\pi$ is highly concentrated around $k^*$.
We prove this in several steps

\begin{claim}\label{cl:Greedy}
For any integer $k\geq k^*$
$$ \frac{\pi(k+1)}{\pi(k)} \leq e^{-(k-k^*)d/m}.$$
And, for any $k\leq k^*$, $\pi(k-1)/\pi(k) \leq e^{-(k^*-k+1)d/m}$.
\end{claim}
\begin{proof}
For $k\geq k^*$, by \eqref{eq:transGreedy1}, \eqref{eq:transGreedy2}, \eqref{eq:balanceGreedy},
$$ \frac{\pi(k)}{\pi(k+1)} = \frac{(k+1) + m(1-(1-d/m)^{k+1})}{m(1-d/m)^{k}} = \frac{k-k^*+1-m(1-d/m)^{k+1} +2m(1-d/m)^{k^*} }{m(1-d/m)^{k}}$$
where we used the definition of $k^*$.
Therefore,
\begin{eqnarray*}
\frac{\pi(k)}{\pi(k+1)} \geq -(1-d/m) + \frac{2}{(1-d/m)^{k-k^*}} \geq \frac{1}{(1-d/m)^{k-k^*}} \geq e^{-(k-k^*)d/m} 
\end{eqnarray*}
where the last inequality uses $1-x \leq e^{-x}$.  Multiplying across the inequality yields the claim.
Similarly, we can prove the second conclusion. For $k\leq k^*$,
\begin{eqnarray*}
\frac{\pi(k-1)}{\pi(k)} &=& \frac{k-k^*-m(1-d/m)^{k}+2m(1-d/m)^{k^*}}{m(1-d/m)^{k-1}}\\
&\leq & -(1-d/m)+2(1-d/m)^{k^*-k+1} \leq (1-d/m)^{k^*-k+1} \leq e^{-(k^*-k+1)d/m},
\end{eqnarray*}
where the second to last inequality uses $k\leq k^*$.
\end{proof}

By repeated application of the above claim, for any integer $k\geq k^*$, we get\footnote{$\lceil k^*\rceil$ indicates the smallest integer larger than $k^*$.}

\begin{equation}
\label{eq:Greedyprobupper}
 \pi(k) \leq \frac{\pi(k)}{\pi(\lceil k^*\rceil)} \leq \exp\Big(-\frac{d}{m}\sum_{i=\lceil k^* \rceil}^{k-1} (i-k^*)\Big) \leq \exp(-d(k-k^*-1)^2/2m).
 \end{equation}

We are almost done.

For any $\sigma>0$,
\begin{eqnarray*}
\sum_{k=k^*+1+\sigma\sqrt{2m/d}}^\infty \pi(k) 
\leq  \sum_{k=k^*+1+\sigma\sqrt{2m/d}}^\infty e^{-d(k-k^*-1)^2/2m} &=& \sum_{k=0}^{\infty} e^{-d(k+\sigma \sqrt{2m/d})^2/2m}\\
&\leq & \frac{e^{-\sigma^2}}{\min\{1/2,\sigma\sqrt{d/2m}\}}
\end{eqnarray*}
The last inequality uses equation \eqref{eq:sumsqexp}.
We can similarly upper bound $\sum_{k=0}^{k^*-\sigma\sqrt{2m/d}} \pi(k)$.
\end{proof}

The above proposition shows that the probability that the size of the pool falls outside an interval of length $O(\sqrt{m/d})$ around $k^*$ drops exponentially fast as the market  size grows.
We also remark that the upper bound on $k^*$ becomes tight as $d$ goes to infinity.

\begin{lemma}\label{prop:Greedyexpectation}
For $k^*$ as in \autoref{prop:Greedyconcentration} ,
$$\EE{Z\sim \pi}{Z} \leq k^*+O(\sqrt{m/d}\log(m/d)). $$
\end{lemma}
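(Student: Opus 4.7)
The plan is to split the expectation around $k^*$ and use the tail bound on $\pi$ that was already established inside the proof of \autoref{prop:Greedyconcentration}. Since $Z \geq 0$, I can write
$$\EE{Z\sim\pi}{Z} - k^* = \EE{Z\sim\pi}{(Z-k^*)^+} - \EE{Z\sim\pi}{(k^*-Z)^+} \leq \sum_{k > k^*} (k-k^*)\pi(k),$$
so the task reduces to upper-bounding a one-sided moment.

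Next, I would recycle inequality \eqref{eq:Greedyprobupper} from the previous proof, which gives, for every integer $k\geq \lceil k^*\rceil$,
$$\pi(k) \leq \exp\!\bigl(-d(k-k^*-1)^2/(2m)\bigr).$$
I would then split the sum at the threshold $L := c\sqrt{m/d}\,\log(m/d)$ for a constant $c>0$ to be chosen. The ``head'' contribution, for $k^*<k\leq k^*+L$, is bounded trivially:
$$\sum_{k^*<k\leq k^*+L}(k-k^*)\pi(k) \leq L\sum_{k}\pi(k) \leq L = O(\sqrt{m/d}\log(m/d)).$$

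For the ``tail'' $k>k^*+L$, substituting $j=k-k^*-1$ and comparing with the Gaussian integral yields
$$\sum_{k>k^*+L}(k-k^*)\,e^{-d(k-k^*-1)^2/(2m)} \leq \int_{L-1}^{\infty}(x+1)\,e^{-dx^2/(2m)}\,dx = O\!\Bigl(\tfrac{m}{d}\,e^{-dL^2/(2m)}\Bigr).$$
With $L = c\sqrt{m/d}\,\log(m/d)$, the exponent equals $c^2\log^2(m/d)/2$, so the tail is super-polynomially small in $m/d$ and therefore dominated by the head. Combining the two contributions gives $\EE{Z\sim\pi}{Z}\leq k^* + O(\sqrt{m/d}\log(m/d))$, as claimed.

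There is no real obstacle here: the concentration work has already been done in \autoref{prop:Greedyconcentration}, and this lemma is just the standard step of turning a Gaussian-type tail bound into a bound on the mean. The only minor bookkeeping item is handling the difference between $k^*$ (possibly non-integer) and $\lceil k^*\rceil$, which contributes at most $1$ and is absorbed into the $O(\sqrt{m/d}\log(m/d))$ error term.
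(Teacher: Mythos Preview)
Your proposal is correct and follows essentially the same route as the paper. Both arguments invoke the pointwise Gaussian-type bound \eqref{eq:Greedyprobupper}, cut the sum at a threshold of order $\sqrt{m/d}\log(m/d)$, and show the remaining tail is $o(1)$; the only cosmetic difference is that the paper bounds $\sum_{i>k^*+\Delta} i\,\pi(i)$ (carrying the extra factor of $k^*$ through, which forces it to use both auxiliary inequalities \eqref{eq:sumsqexp} and \eqref{eq:sumexpexp}), whereas you more cleanly bound $\sum_{k>k^*}(k-k^*)\pi(k)$ directly.
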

\begin{proof}
Let 
Let $\Delta\geq 0$ be a parameter that we fix later.
\begin{equation}
\label{eq:Greedyexpfirst}
 \EE{Z\sim \pi}{Z} \leq k^*+\Delta+ \sum_{i=k^* + \Delta+1}^\infty i\pi(i).
 \end{equation}
By 
equation \eqref{eq:Greedyprobupper},
\begin{eqnarray}
\sum_{i=k^*+\Delta+1}^\infty i\pi(i) 
&=& \sum_{i=\Delta+1}^\infty e^{-d(i-1)^2/2m} (i+k^*)\nonumber \\
&=& \sum_{i=\Delta}^\infty e^{-d i^2/2m}(i-1) +\sum_{i=\Delta}^\infty e^{-di^2/2m}(k^*+2)\nonumber \\
&\leq & \frac{e^{-d(\Delta-1)^2/2m}}{d/m} + (k^*+2)\frac{e^{-d\Delta^2/2m}}{\min\{1/2,d\Delta/2m\}},
 \label{eq:Greedyexplast}
\end{eqnarray}
where we used equations \eqref{eq:sumsqexp} and \eqref{eq:sumexpexp}.
Letting $\Delta:=1+2\sqrt{m/d}\log(m/d)$ in the above equation, the right hand side is at most $1$. The lemma follows from \eqref{eq:Greedyexpfirst} 
and the above equation.
\end{proof}

Now, \autoref{thm:Greedyanal}.
 follows immediately  by \autoref{lem:algGreedyEZ} and \autoref{prop:Greedyexpectation} because we have,

$$ \loss{\algGreedy} \le  \frac{1}{m}(k^* + O(\sqrt{m}\log m)) \le \frac{\log(2)}{d} + o(1) $$

\subsection{Loss of the Patient Algorithm}
\label{sec:critical}
%

Throughout this section we use $Z_t$ to denote  the size of the pool under $\algcritical$.  Let $\pi: \mathbb{N} \to \mathbb{R}_+$ be the unique stationary distribution of the Markov Chain on $Z_t$, and let $\poolsize:=\EE{Z\sim\pi}{Z}$ be the expected size of the pool under that distribution.

 By \autoref{Z_suff_stat}, at any point in time $G_t$ is an Erd\"os-R\'eyni random graph. So once an agent becomes critical, he has at least one acceptable transaction with probability $1 - (1-d/m)^{Z_t-1}$. Since each agent becomes critical with rate $1$, if we run $\algcritical$ for a sufficiently long time, then $\loss{\algcritical}\approx \poolsize(1-d/m)^{\poolsize-1}$. The following lemma makes the above discussion rigorous.


\begin{lemma}\label{lem:criticallemma}
For any $\epsilon > 0$ and $T > 0$,
$$ \loss{\algcritical} \leq \frac{1}{m}\EE{Z\sim\pi}{Z(1-d/m)^{Z-1}} + \frac{\tau_{\mix}(\eps)}{T} + \frac{\eps m}{d^2}. $$
\end{lemma}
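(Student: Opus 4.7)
The plan is to express $\loss{\algcritical}$ as the time-integrated expected perishing rate, split the integral at the mixing time, and use the stationary distribution on the tail.

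I would first identify the instantaneous perishing rate under $\algcritical$. An agent $a\in A_t$ perishes exactly when her own critical clock ticks while $N_t(a)=\emptyset$; her clock fires at rate $1$, and by \autoref{Z_suff_stat}, conditional on $Z_t$ the graph $G_t$ is distributed as an Erd\H{o}s--R\'enyi graph with edge probability $d/m$, so $\P{N_t(a)=\emptyset\mid Z_t}=(1-d/m)^{Z_t-1}$. Summing over the $Z_t$ agents currently in the pool, the expected instantaneous perishing rate is $\E{Z_t(1-d/m)^{Z_t-1}}$. Since $\E{|A|}=mT$, this gives
\[
\loss{\algcritical} \;=\; \frac{1}{mT}\int_0^T \E{Z_t(1-d/m)^{Z_t-1}}\,dt.
\]

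Next, I would split the integral at $\tau := \tau_{\mix}(\eps)$. For the burn-in segment $[0,\tau]$, use $(1-d/m)^{Z_t-1}\le 1$ together with $Z_t\le \tilde Z_t$ from \eqref{eq:ZtZ} and $\E{\tilde Z_t}\le m$ from \autoref{fact:expectedpoolsize}. This bounds the burn-in contribution by $\tau/T=\tau_{\mix}(\eps)/T$.

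For the tail $[\tau,T]$, I would apply the mixing guarantee from \autoref{thm:stationarity}: for every $t\ge \tau$, $\|z_t-\pi\|_{\TV}\le \eps$, so for any bounded $f\ge 0$,
\[
\bigl|\E{f(Z_t)}-\EE{Z\sim\pi}{f(Z)}\bigr|\;\le\;\eps\,\|f\|_\infty.
\]
Applied with $f(z)=z(1-d/m)^{z-1}$, the main term contributes at most $\frac{1}{m}\EE{Z\sim\pi}{Z(1-d/m)^{Z-1}}$ and the residual error is at most $\frac{\eps\,\|f\|_\infty}{m}$. The main technical step is bounding $\|f\|_\infty$: calculus places the maximum at $z^\star=-1/\log(1-d/m)\le m/d$, where $(1-d/m)^{z^\star}=e^{-1}$ and hence $f(z^\star)=z^\star/(e(1-d/m))$; under the standing assumption $m\ge 10d$ this is at most $m^2/d^2$, so integrating over $[\tau,T]$ and dividing by $mT$ gives a TV-error contribution of at most $\eps m/d^2$.

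Adding the three contributions yields the claim. The only delicate point is reading off the perishing rate, which relies essentially on the Erd\H{o}s--R\'enyi conditional structure of $G_t$ supplied by \autoref{Z_suff_stat}; the subsequent split-and-TV argument is routine. I expect the cleanest arithmetic obstacle to be the $\|f\|_\infty$ estimate, but the loose bound $\|f\|_\infty\le m^2/d^2$ is already enough to match the stated $\eps m/d^2$ term.
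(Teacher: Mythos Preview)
Your proof is correct and follows essentially the same skeleton as the paper: write $\loss{\algcritical}$ as the time-integral of $\E{Z_t(1-d/m)^{Z_t-1}}$, bound the burn-in segment trivially by $\tau_{\mix}(\eps)/T$, and control the tail via mixing.

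The one place you diverge is in how you extract the $\eps m/d^2$ error from the TV bound. You use $|\E{f(Z_t)}-\EE{\pi}{f(Z)}|\le \eps\,\|f\|_\infty$ and then spend some calculus showing $\|f\|_\infty\le m^2/d^2$ for $f(z)=z(1-d/m)^{z-1}$. The paper instead uses the cruder pointwise consequence $z_t(i)\le \pi(i)+\eps$ of the TV bound, yielding an error of $\eps\sum_{i\ge 0} i(1-d/m)^{i-1}=\eps\, m^2/d^2$ directly from the derivative-of-geometric-series identity. This avoids the optimization entirely and does not need the $m\ge 10d$ assumption. Conversely, your route is sharper in principle (your actual $\|f\|_\infty$ is of order $m/(ed)$, not $m^2/d^2$), so if you wanted a tighter lemma you could get the error down to $O(\eps/d)$; you only loosened it to match the stated bound.
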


\begin{proof}
By linearity of expectation,
\begin{eqnarray*}
\loss{\algcritical} = \frac{1}{m\cdot T} \E{\int_{t=0}^T Z_t(1 - d/m)^{Z_t-1} dt} &=& \frac{1}{m\cdot T}\int_{t=0}^T \E{Z_t(1 - d/m)^{Z_t-1}} dt.
\end{eqnarray*}
Since for any $t\geq 0$, $\E{Z_t (1-d/m)^{Z_t-1}} \leq \E{Z_t}\leq \E{\tilde{Z}_t} \leq m$,
we can write
\begin{eqnarray*}
\loss{\algcritical} &\leq& \frac{\tau_{\mix}(\eps)}{T} + \frac{1}{m\cdot T}\int_{t=\tau_{\mix}(\eps)}^{T}\sum_{i=0}^\infty (\pi(i)+\eps) i(1 - d/m)^{i-1} dt  \\
&\leq & \frac{\tau_{\mix}(\eps)}{T} + \frac{\EE{Z\sim\pi}{Z(1-d/m)^{Z-1}}}{m} + \frac{\eps m}{d^2}
\end{eqnarray*}
where the last inequality uses the identity
$ \sum_{i=0}^\infty i(1-d/m)^{i-1} = m^2/d^2.$
\end{proof}

So in the rest of the proof we just need to lower bound $\EE{Z\sim\pi}{Z(1-d/m)^{Z-1}}$. Similar to the Greedy case, we do not have a closed form expression for the stationary distribution, $\pi(.)$. Instead, we use the balance equations of the Markov Chain on $Z_t$ to show that
$\pi$ is highly concentrated around a number $k^*$ where $k^*\in [m/2,m]$.

Let us start by defining the transition probability operator of the Markov Chain on $Z_t$. For any pool size $k$, the Markov Chain transits only to states $k+1$, $k-1$, or $k-2$. The Markov Chain transits to state $k+1$ if a new agent arrives, to the state $k-1$ if an agent gets critical and the the Planner cannot match him, and it transits to state $k-2$ if an agent gets critical and  the Planner  matches him.

Remember that agents arrive with the rate $m$, they become critical with the rate of 1 and the probability of an acceptable transaction between two agents is $d/m$. Thus, the transition rates $r_{k\to k+1}$, $r_{k\to k-1}$, and $r_{k\to k-2}$ are defined as follows,
\begin{eqnarray}
	\label{eq:transcritical1}
	r_{k \to k+1} &:=&
	m\\ 
	r_{k\to k-1} &:=& 
	k\Big(1-\frac{d}{m}\Big)^{k-1}
	\label{eq:transcritical2} \\
	r_{k\to k-2} &:=&
	k\Big(1- \Big(1-\frac{d}{m}\Big)^{k-1}\Big).
	\label{eq:transcritical3}
\end{eqnarray}

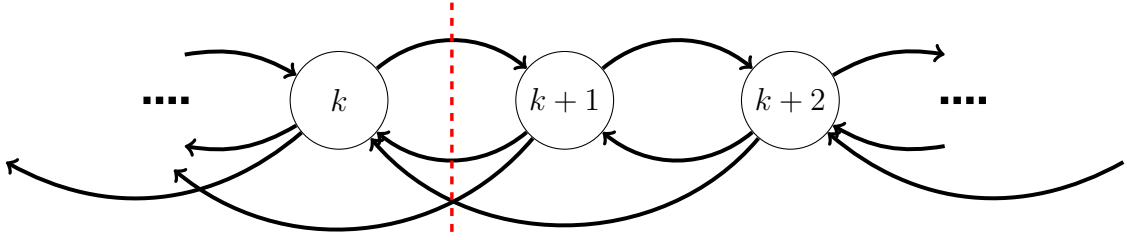
\begin{figure}[t]
\centering
\begin{tikzpicture}[inner sep=0, minimum size=13mm]
\tikzstyle{every node}=[circle,draw];
\node at (0,0) (b) {$k+1$};
\node at (-3,0) (a) {$k$};
\node at (3,0) (c) {$k+2$};
\node at  (-5.7,-.5) [draw=none] (d) {};
\node at  (5.7,.5) [draw=none] (e) {};
\node at  (-5.7,.5) [draw=none] (d1) {};
\node at  (5.7,-.5) [draw=none] (e1) {};
\node at  (-8,-.5) [draw=none] (d2) {};
\node at  (8,-.5) [draw=none] (e2) {};

\draw[->,line width=1.5pt]  (a) [bend left=40] edge (b) (a) [bend left=20] edge (d)
  (d1) [bend left=20] edge (a) (e1) [bend left=20] edge (c) (c) [bend left=20] edge (e)
  (b) [bend left=40] edge (a)
  (b) [bend left=40] edge (c)
  (c) [bend left=40] edge (b)
  (c) [bend left=50] edge (a)
  (a) [bend left=35] edge (d2)
  (e2) [bend left=35] edge (c)
  (b) [bend left=45] edge (d);

\draw [dotted, line width=2.8pt] (a)+(-2,0) -- ++(-2.6,0);
\draw [dotted, line width=2.8pt] (c)+(2,0) -- ++(2.6,0);
\draw [dashed,line width=1.3,color=red] (b)+(-1.5,-1.75) -- ++(-1.5,1.3);
\end{tikzpicture}
\caption{An illustration of the transition paths of the $Z_t$ Markov Chain under the Patient Algorithm}
\label{fig:criticalbalance}
\end{figure}

Let us write down the balance equation for the above Markov Chain (see equation \eqref{eq:balanceequations} for the full generality).
Consider the cut separating the states $0,1,2,\ldots,k$ from the rest (see \autoref{fig:criticalbalance} for an illustration). It follows that
\begin{equation}
\label{eq:balancecritical}
\pi(k) r_{k\to k+1} = \pi(k+1)r_{k+1\to k} + \pi(k+1)r_{k+1\to k-1} + \pi(k+2)r_{k+2\to k}
\end{equation}

Now we can characterize $\pi(.)$. We show that 
under the stationary distribution, the size of the pool is highly concentrated around a number $k^*$ where  $k^*\in [m/2,m]$. Remember that under the Greedy algorithm, the concentration was around $k^* \in [\frac{m}{2d+1}, \frac{\log(2)m}{d}]$, whereas here it is at least $m/2$.

\begin{proposition}[Patient Concentration]\label{prop:criticalconcentration}
There exists a number $m/2-2 \leq k^* \leq m-1$ such that for any $\sigma\geq 1$,
$$ \PP{\pi}{k^*- \sigma\sqrt{4m} \leq Z} \geq 1-2\sqrt{m} e^{-\sigma^2}, \text{  } \P{Z\leq  k^*+ \sigma\sqrt{4m}} \geq 1-8\sqrt{m} e^{-\frac{\sigma^2\sqrt{m}}{2\sigma+\sqrt{m}}}.$$
\end{proposition}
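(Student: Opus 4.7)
My plan is to mirror the strategy used in \autoref{prop:Greedyconcentration}, adapted to the three-state transition structure of the Patient chain. First, I would identify the critical state $k^*$ by studying the drift of the chain. The expected infinitesimal change of $Z_t$ at state $k$ is
$$f(k) = m - 2k + k(1-d/m)^{k-1},$$
obtained by combining the arrival rate $m$ with the one- and two-step downward rates in \eqref{eq:transcritical2}--\eqref{eq:transcritical3}. I would show that $f$ is decreasing on the relevant range, and verify that $f(m/2-2) \geq 0$ while $f(m-1) \leq 0$, pinning down a (nearest) integer $k^* \in [m/2-2, m-1]$ where the sign of the drift changes.

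Next, I would derive a recursion from the balance equation \eqref{eq:balancecritical}. Separating $\{0,\ldots,k\}$ from the rest, the only out-rate is $m\pi(k)$, while the in-rate comes from one-step and two-step drops from $k+1$ and a two-step drop from $k+2$, yielding
$$m\,\pi(k) \;=\; (k+1)\pi(k+1) \;+\; (k+2)\bigl(1-(1-d/m)^{k+1}\bigr)\pi(k+2).$$
Using this, I would bound the ratio $a_k := \pi(k+1)/\pi(k)$. For $k \geq k^*$, the definition of $k^*$ makes the right-hand side exceed $m\pi(k)$ unless $a_k \leq 1$, and quantitatively an excess of $(k-k^*)$ in the drift forces $a_k \leq 1 - \Theta((k-k^*)/m)$. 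Iterating gives geometric decay $\pi(k^* + j) \lesssim \prod_{i=1}^{j}(1-\Theta(i/m))$, which telescopes to a Gaussian-type bound $\exp(-\Theta(j^2/m))$ for $j \lesssim \sqrt{m}$, and to an exponential bound $\exp(-\Theta(j))$ once $j \gtrsim \sqrt{m}$ (because $(k+1)/m$ ceases to shrink further). Summing over a geometric tail produces the stated $8\sqrt{m}\exp(-\sigma^2\sqrt{m}/(2\sigma + \sqrt m))$ bound: the transition between Gaussian and exponential regimes is exactly what that awkward exponent interpolates. For $k \leq k^*$, a symmetric argument using $1/a_k$ gives the cleaner $2\sqrt{m}e^{-\sigma^2}$ lower-tail bound (no ``large deviations'' regime is needed because $k \geq 0$ forces the lower tail to be short).

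The main obstacle is step three: unlike the Greedy balance equation \eqref{eq:balanceGreedy}, which is a clean two-term recursion, here $\pi(k)$, $\pi(k+1)$, and $\pi(k+2)$ are all entangled, so comparing consecutive ratios is not immediate. I plan to handle this by rewriting the balance equation as
$$a_k \;=\; \frac{m}{(k+1) + (k+2)\bigl(1-(1-d/m)^{k+1}\bigr)\,a_{k+1}},$$
observing that the right side is a decreasing function of $a_{k+1}$, and using the boundary behavior $a_k \to 0$ as $k \to \infty$ (together with the trivial bound $a_k \leq m/(k+1)$ from dropping the nonnegative second term in the denominator) to deduce that $a_k$ is squeezed between $k^*$-dependent bounds. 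The resulting ratio estimates, together with the explicit form of $f$, produce the sharper quadratic decay for $|k-k^*| \lesssim \sqrt{m}$ and the weaker linear decay beyond, matching the two distinct tail rates announced in the proposition.
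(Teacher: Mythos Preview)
Your overall plan is sound and matches the paper's architecture: locate $k^*$ via a sign change of a drift-like function, derive the three-term balance equation, and then show geometric decay of $\pi$ away from $k^*$. You also correctly identify the main obstacle, namely that the balance equation couples $\pi(k)$, $\pi(k+1)$, and $\pi(k+2)$, so one cannot directly read off the ratio $a_k=\pi(k+1)/\pi(k)$.

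Where you diverge from the paper is in how you resolve this obstacle. You propose to analyze the nonlinear recursion
\[
a_k \;=\; \frac{m}{(k+1) + (k+2)\bigl(1-(1-d/m)^{k+1}\bigr)\,a_{k+1}}
\]
via monotonicity in $a_{k+1}$ and boundary behavior. This could be made to work, but extracting the quantitative bound $a_k \le 1-\Theta((k-k^*)/m)$ from such a fixed-point argument is delicate, and your sketch does not explain how the squeeze actually closes. The paper sidesteps this entirely with a simpler trick: in the balance equation it replaces \emph{both} $\pi(k+1)$ and $\pi(k+2)$ by $\max\{\pi(k+1),\pi(k+2)\}$ (for $k\le k^*$) or by $\min\{\pi(k+1),\pi(k+2)\}$ (for $k\ge k^*$). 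Because the two coefficients $(k+1)$ and $(k+2)(1-(1-d/m)^{k+1})$ sum to exactly $m$ at $k=k^*$ (this is how the paper \emph{defines} $k^*$, via $f(x)=m-(x+1)-(x+2)(1-(1-d/m)^{x+1})$, not via your drift function), one immediately gets
\[
\frac{\pi(k)}{\max\{\pi(k+1),\pi(k+2)\}} \le 1-\frac{k^*-k}{m}\le e^{-(k^*-k)/m},
\]
and the symmetric bound above $k^*$. To telescope, the paper then builds a sequence $n_0=k$, $n_{i+1}=\argmax\{\pi(n_i+1),\pi(n_i+2)\}$ (or $\argmin$ on the other side), so that consecutive $n_i$ differ by at most $2$; summing the exponents gives the Gaussian-type bound $\pi(k)\le e^{-(k^*-k)^2/4m}$, and the analogous estimate with denominator $4(m+k-k^*)$ on the right. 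This max/min device is the key idea you are missing, and it is what converts the awkward three-term relation into a usable one-step inequality without ever needing to control $a_{k+1}$.

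A minor related point: your $f(k)=m-2k+k(1-d/m)^{k-1}$ is the true infinitesimal drift, but the paper's $f$ is deliberately shifted so that its root makes the balance-equation coefficients sum to $m$ exactly. The two roots differ by $O(1)$, which is harmless for the statement, but using the paper's $f$ is what makes the claim above come out with no error term.
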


\begin{proof}
The proof idea is similar to \autoref{prop:Greedyconcentration}.
First,  let us rewrite \eqref{eq:balancecritical} by replacing transition probabilities from \eqref{eq:transcritical1}, \eqref{eq:transcritical2}, and \eqref{eq:transcritical3}:
\begin{equation}\label{eq:balancecritical2}
m\pi(k) = 	(k+1) \pi(k+1) + (k+2)\Big(1 - \Big(1-\frac{d}{m}\Big)^{k+1} \Big) \pi(k+2)
\end{equation}
Let us define a continous $f:\R\to\R$ as follows,
\begin{equation}
\label{eq:rootcritical}
 f(x) := m - (x+1) - (x+2)(1-(1-d/m)^{x+1}).
 \end{equation}
It follows that 
$$f(m-1)\leq 0,  f(m/2-2)  >0,$$
so $f(.)$ has a root $k^*$ such that $m/2-2 < k^* < m$.
In the rest of the proof we show that the states that are far from $k^*$ have very small probability in the stationary distribution. 


\begin{claim}
For any integer $k\leq k^*$, 
$$\frac{\pi(k)}{\max\{\pi(k+1),\pi(k+2)\}} \leq e^{-(k^*-k)/m}.$$
Similarly, for any integer $k\geq k^*$, $\frac{\min\{\pi(k+1),\pi(k+2)\}}{\pi(k)} \leq e^{(k-k^*)/(m+k-k^*)}$.
\end{claim}
\begin{proof}
For $k\leq k^*$, by equation \eqref{eq:balancecritical2},
\begin{eqnarray*}
 \frac{\pi(k)}{\max\{\pi(k+1),\pi(k+2)\}}
& \leq& \frac{(k+1) + (k+2) (1-(1-d/m)^{k+1})}{m} \\
& \leq & \frac{(k-k^*) + (k^*+1) + (k^*+2)(1-(1-d/m)^{k^*+1})}{m}\\
&=&  1 - \frac{k^*-k}{m} \leq e^{-(k^*-k)/m},
\end{eqnarray*}
where the last equality follows by the definition of $k^*$
and the last inequality uses $1-x\leq e^{-x}$.
The second conclusion can be proved similarly. 
For $k\geq k^*$,
\begin{eqnarray*}
\frac{\min\{\pi(k+1),\pi(k+2)\}}{\pi(k)} &\leq& \frac{m}{(k+1)+(k+2)(1-(1-d/m)^{k+1})}\\
&\leq & \frac{m}{(k-k^*) + (k^*+1)+(k^*+2)(1-(1-d/m)^{k^*+1})}\\
&= & \frac{m}{m+k-k^*} = 1-\frac{k-k^*}{m+k-k^*} \leq e^{-(k-k^*)/(m+k-k^*)}.
\end{eqnarray*}
where the equality follows by the definition of $k^*$. 
\end{proof}
Now, we use the above claim to upper-bound $\pi(k)$ for values $k$ that are far from $k^*$.
First, fix $k\leq k^*$. Let $n_0,n_1,\ldots$ be sequence of integers defined as follows:  $n_0=k$, and $n_{i+1} := \argmax\{\pi(n_i+1), \pi(n_i+2)\}$ for $i\geq 1$. It follows that,
\begin{eqnarray}
\pi(k) \leq \prod_{i: n_i \leq k^*} \frac{\pi(n_i)}{\pi(n_{i+1})} \leq   \exp\Big( -\sum_{i:n_i \leq k^*} \frac{k^*-n_i}{m}\Big)  
\leq  \exp\Big(-\sum_{i=0}^{(k^*-k)/2} \frac{2i}{m}\Big)
\leq  e^{-(k^*-k)^2/4m},
\label{eq:Greedyleftsidecon}
\end{eqnarray}
where the second to last inequality uses $|n_i-n_{i-1}| \leq 2$.

Now, fix $k\geq k^*+2$. In this case we construct the following sequence of integers,
$n_0=\lfloor k^*+2\rfloor$, and $n_{i+1}:=\argmin\{\pi(n_i+1),\pi(n_i+2)\}$ for $i\geq 1$. 
Let $n_j$ be the largest number in the sequence that is at most $k$ (observe that $n_j=k-1$ or $n_j=k$). We upper-bound $\pi(k)$ by upper-bounding $\pi(n_j)$,
\begin{align}
\pi(k)\leq \frac{m\cdot \pi(n_j)}{k} \leq 2\prod_{i=0}^{j-1} \frac{\pi(n_i)}{\pi(n_{i+1})} 
&\leq 2\exp\Big(-\sum_{i=0}^{j-1} \frac{n_i-k^*}{m+n_i-k^*}\Big) \nonumber \\
&\leq 2\exp\Big(-\sum_{i=0}^{(j-1)/2} \frac{2i}{m+k-k^*} \Big)
\leq 2\exp\Big(\frac{-(k-k^*-1)^2}{4(m+k-k^*)}\Big).
\label{eq:Greedyrightsidecon}
\end{align}
To see the first inequality note that if $n_j=k$, then there is nothing to show; otherwise we have $n_j=k-1$. In this case by equation \eqref{eq:balancecritical2}, $m\pi(k-1)\geq k\pi(k)$.
The second to last inequality uses the fact that $|n_i-n_{i+1}|\leq 2$. 

We are almost done. The proposition follows from \eqref{eq:Greedyrightsidecon} and \eqref{eq:Greedyleftsidecon}. 
First, for $\sigma\geq 1$, let $\Delta=\sigma\sqrt{4m}$, then by equation \eqref{eq:sumsqexp}
\begin{eqnarray*}
 \sum_{i=0}^{k^*-\Delta} \pi(i) \leq
 \sum_{i=\Delta}^\infty e^{-i^2/4m}
\leq \frac{e^{-\Delta^2/4m}}{\min\{1/2,\Delta/4m\}} \leq 2\sqrt{m} e^{-\sigma^2}.
\end{eqnarray*}
Similarly, 
\begin{eqnarray*}
\sum_{i=k^*+\Delta}^\infty \pi(i) \leq 2\sum_{i=\Delta+1}^\infty e^{-(i-1)^2/4(i+m)}  
&\leq& 2\sum_{i=\Delta}^\infty e^{-i/(4+\sqrt{4m}/\sigma)}\\
&\leq& 2\frac{e^{-\Delta/(4+\sqrt{4m}/\sigma)}}{1-e^{-1/(4+\sqrt{4m})}} \leq 8\sqrt{m}e^{\frac{-\sigma^2\sqrt{m}}{2\sigma+\sqrt{m}}}
\end{eqnarray*}
This completes the proof of \autoref{prop:criticalconcentration}.
\end{proof}

\begin{lemma}\label{prop:criticalexpectation}
For any $d\geq 0$ and sufficiently large $m$, 
$$\EE{Z\sim \pi}{Z(1-d/m)^Z} \leq \max_{z\in [m/2,m]} (z+\tilde{O}(\sqrt{m}))(1-d/m)^z + 2.$$
\end{lemma}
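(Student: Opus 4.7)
The plan is to follow the template of \autoref{prop:Greedyexpectation}: use the concentration of $\pi$ around $k^*$ proved in \autoref{prop:criticalconcentration} to localize the expectation, then bound the localized contribution by the pointwise value of $g(z):=z(1-d/m)^z$ near $k^*$. Fix $\Delta:=\sqrt{4m}\log m$ so that \autoref{prop:criticalconcentration} applied with $\sigma=\log m$ gives $\P{|Z-k^*|>\Delta}\le e^{-\Omega(\log^2 m)}$, which is super-polynomially small. Decompose
$$\EE{Z\sim\pi}{Z(1-d/m)^Z}=S_L+S_M+S_R,$$
corresponding to the sums over $k<k^*-\Delta$, $|k-k^*|\le\Delta$, and $k>k^*+\Delta$ respectively.

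For the tails, I would use $S_L\le m\cdot\P{Z<k^*-\Delta}=o(1)$, while for $S_R$ the explicit pointwise decay $\pi(k^*+j)\le 2\exp(-(j-1)^2/(4(m+j)))$ proven inside \autoref{prop:criticalconcentration} (cf.\ equation \eqref{eq:Greedyrightsidecon}) suffices; splitting at $j=m$ into a Gaussian regime and a geometric regime gives $S_R=o(1)$. Together $S_L+S_R\le 1$ for sufficiently large $m$, which accounts for part of the ``$+2$.'' For the middle, since $\sum\pi\le 1$ I get $S_M\le \max_{k\in[k^*-\Delta,\,k^*+\Delta]}k(1-d/m)^k$; let $k^\dagger$ attain this maximum, and I compare $g(k^\dagger)$ with $g(z^\star)$ at an appropriate $z^\star\in[m/2,m]$.

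I would split on the magnitude of $d$. When $d\le 2\log m$, the defining equation $f(k^*)=0$ from \eqref{eq:rootcritical} directly forces $k^*\ge m/2$, so setting $z^\star:=\min(\max(k^\dagger,m/2),m)$ and writing $g(k^\dagger)=g(z^\star)\cdot (k^\dagger/z^\star)\cdot(1-d/m)^{k^\dagger-z^\star}$, the multiplicative slack $(1-d/m)^{k^\dagger-z^\star}\le\exp(O(d\Delta/m))=1+o(1)$ since $d\Delta/m=O(\log^2 m/\sqrt m)\to 0$; the $\tilde O(\sqrt m)$ additive slack on the coefficient of $(1-d/m)^{z^\star}$ then absorbs the residual shift $|k^\dagger-z^\star|\le\Delta$. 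When $d>2\log m$, using $k^*\ge m/2-2$ and $|k-k^*|\le\Delta=o(m)$ we get $(1-d/m)^{k}\le\exp(-d/2+O(1))$ uniformly on the window, so $S_M\le m\cdot e^{-d/2+O(1)}=o(1)$, again absorbed by the $+2$.

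The hard part will be the intermediate regime $d\asymp\log m$, where $k^*$ sits near the lower boundary $m/2-2$, the window maximizer $k^\dagger$ may drift slightly below $m/2$, and one has to verify that pushing the comparison point $z$ into $[m/2,m]$ only costs a $(1-d/m)^{-O(\Delta)}=1+o(1)$ multiplicative factor. This calibration is exactly $d\Delta/m=o(1)$ and drives the choice $\Delta=\Theta(\sqrt{m}\log m)$—large enough to make the tail probabilities super-polynomially small, yet small enough to keep the multiplicative discrepancy vanishing.
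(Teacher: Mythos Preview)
Your approach is essentially the same as the paper's: localize via the concentration in \autoref{prop:criticalconcentration}, then control the localized maximum of $z(1-d/m)^z$ by a case split that amounts to whether $d$ is small or large relative to $\log m$. The paper centers its window at the fixed interval $[m/2-\Delta,\,m+\Delta]$ (using only $k^*\in[m/2-2,m-1]$) rather than at $[k^*-\Delta,k^*+\Delta]$, and phrases the dichotomy as whether $(1-d/m)^{-\Delta}\le 1+\Delta'/m$ with $\Delta'=4(\log(2m)+1)\Delta$; when that fails it shows the entire localized maximum is $\le 1$, exactly parallel to your large-$d$ branch. Two small repairs are needed in your write-up: (i) the assertion that $f(k^*)=0$ forces $k^*\ge m/2$ whenever $d\le 2\log m$ is not quite true and is also unnecessary---the universal bound $k^*\ge m/2-2$ from \autoref{prop:criticalconcentration} already gives $|k^\dagger-z^\star|\le\Delta+2$, which your multiplicative-slack argument handles identically; (ii) in the large-$d$ branch the exponent correction is $d(2+\Delta)/m$, not $O(1)$, so at the threshold $d\approx 2\log m$ one only gets $S_M\le 1+o(1)$ rather than $o(1)$---still comfortably absorbed by the ``$+2$'' together with $S_L+S_R=o(1)$.
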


\begin{proof}
Let $\Delta:=3\sqrt{m}\log(m)$,
and let $\beta:=\max_{z\in [m/2-\Delta,m+\Delta]} z(1-d/m)^z$.
\begin{eqnarray}
\EE{Z\sim \pi}{Z(1-d/m)^Z} \leq \beta + \sum_{i=0}^{m/2-\Delta-1} \frac{m}{2}\pi(i) (1-d/m)^i + \sum_{i=m+\Delta}^{\infty} i\pi(i)  (1-d/m)^m
\label{eq:expcriticalallterms}
\end{eqnarray}
We upper bound each of the terms in the right hand side separately. We start with upper bounding $\beta$. Let $\Delta':=4(\log(2m)+1)\Delta$.
\begin{eqnarray}
 \beta &\leq& \max_{z\in [m/2,m]} z(1-d/m)^z + m/2 (1-d/m)^{m/2} ((1-d/m)^{-\Delta} - 1) + (1-d/m)^m \Delta\nonumber \\
 &\leq & \max_{z\in [m/2,m]} (z+\Delta'+\Delta) (1-d/m)^z + 1.
 \label{eq:expcriticalfirst}
 \end{eqnarray}
To see the last inequality we consider two cases. If $(1-d/m)^{-\Delta} \leq 1+\Delta'/m$ then the inequality obviously holds. Otherwise, (assuming $\Delta'\leq m$),
$$(1-d/m)^\Delta \leq \frac{1}{1+\Delta'/m} \leq 1-\Delta'/2m,$$
By the definition of $\beta$,
$$ \beta \leq (m+\Delta) (1-d/m)^{m/2-\Delta} \leq 2m(1-\Delta'/2m)^{m/2\Delta-1} \leq 2me^{\Delta'/4\Delta-1} \leq 1.$$


It remains to upper bound the second and the third term in \eqref{eq:expcriticalallterms}.
We start with the second term. By \autoref{prop:criticalconcentration},
\begin{eqnarray}
 \sum_{i=0}^{m/2-\Delta-1} \pi(i) 
 \leq \frac{1}{m^{3/2}}.
\label{eq:expcriticalsecond}
\end{eqnarray}
where we used equation \eqref{eq:sumsqexp}.
On the other hand, by equation \eqref{prop:criticalconcentration},
\begin{eqnarray}
\sum_{i=m+\Delta}^\infty i\pi(i) \leq 
 e^{-\Delta/(2+\sqrt{m})}(\frac{m}{1-e^{-1/(2+\sqrt{m})}} + \frac{ 2\Delta+4}{1/(2+\sqrt{m})^2}) \leq \frac{1}{\sqrt{m}}.
\label{eq:expcriticalthird}
\end{eqnarray}
where we used equation \eqref{eq:sumexp}.

The lemma follows from \eqref{eq:expcriticalallterms}, \eqref{eq:expcriticalfirst}, \eqref{eq:expcriticalsecond} and \eqref{eq:expcriticalthird}.
\end{proof}

\noindent Now the proof of  \autoref{thm:criticalanal} follows immediately by combining \autoref{lem:criticallemma} and \autoref{prop:criticalexpectation}.

\subsection{Loss of the $\apatient{\alpha}$ Algorithm}\label{sec:alphaanal}


Our idea is to \emph{slow down} the process and use \autoref{thm:criticalanal} to analyze the $\apatient{\alpha}$ algorithm. More precisely, instead of analyzing $\apatient{\alpha}$ algorithm on a $(m,d,1)$ market we analyze the Patient algorithm on a $(m/\bar\alpha,d/\bar\alpha,1)$ market. First we need to prove a lemma on the equivalence of markets with different criticality rates. 

\begin{definition}[Market Equivalence]\label{def:equivalence}
An \emph{$\alpha$-scaling} of a dynamic matching market $(m,d,\lambda)$ is defined as follows.
Given any realization of this market, i.e., given $A^c_t,A^n_t,E$ for any $0\leq t\leq \infty$, we construct another realization with the same set of acceptable transactions and the sets $A^c_{\alpha\cdot t},A^n_{\alpha \cdot t}$. 
We say two dynamic matching markets $(m,d,\lambda)$ and $(m',d',\lambda')$
are {\em equivalent} if one is an $\alpha$-scaling of the other. 
%
\end{definition}

It turns out that for any $\alpha\geq 0$, and any time $t$, any of the Greedy, Patient or $\apatient{\alpha}$ algorithms (and in general any time-scale independent online algorithm) the set of matched agents at time $t$ of a realization of a $(m,d,\lambda)$ matching market is the same as the set of matched agents at time $\alpha t$ of  an $\alpha$-scaling of that realization. 
The following fact makes this rigorous.

\begin{proposition}\label{lem:equivalence}
For any $m,d,\lambda$ the $(m/\lambda,d/\lambda,1)$ matching market is equivalent to the $(m,d,\lambda)$ matching market. 
\end{proposition}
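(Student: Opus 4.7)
The plan is to exhibit the $\alpha$-scaling explicitly with $\alpha = \lambda$, and then verify that time-stretching a Poisson/exponential process by a factor of $\lambda$ transforms its rate by $1/\lambda$. Because both markets share the same per-pair acceptable-transaction probability $d/m = (d/\lambda)/(m/\lambda)$, the only thing left is to match the arrival and criticality distributions.

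Concretely, I would start with a realization of the $(m,d,\lambda)$ market, specified by $A^n_t$, $A^c_t$ for $t\ge 0$ and the persistent set $E$ of acceptable transactions. Define the $\lambda$-scaled realization by $\widetilde A^n_s := A^n_{s/\lambda}$ and $\widetilde A^c_s := A^c_{s/\lambda}$, and keep the same edge set $E$ on the same agents. By \autoref{def:equivalence} this is a $\lambda$-scaling, so to prove equivalence it suffices to check that the law of $(\widetilde A^n_s, \widetilde A^c_s, E)$ coincides with the law of the primitives of a $(m/\lambda, d/\lambda, 1)$ market.

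For the arrival process: if $N_t$ is a Poisson process of rate $m$ on $[0,\infty)$, then $\widetilde N_s := N_{s/\lambda}$ is a Poisson process of rate $m/\lambda$, since for any $0\le s_1 < s_2$, $\widetilde N_{s_2}-\widetilde N_{s_1}$ is Poisson with mean $m(s_2-s_1)/\lambda$, and disjoint intervals map to disjoint intervals. For the criticality clocks: each agent $a$'s criticality is triggered by an independent exponential with rate $\lambda$ attached to her sojourn time in the original market; under the $\lambda$-scaling her sojourn is stretched by $\lambda$, and if $X\sim \mathrm{Exp}(\lambda)$ then $\lambda X \sim \mathrm{Exp}(1)$, so each agent's criticality clock becomes independently $\mathrm{Exp}(1)$ in the scaled realization. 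Finally, the edge probability $d/m$ is literally the same number, and it equals $(d/\lambda)/(m/\lambda)$, which is exactly the edge probability prescribed by the parameters of the $(m/\lambda, d/\lambda, 1)$ market.

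Combining the three checks, the pushforward under the $\lambda$-scaling of the law of the $(m,d,\lambda)$ market equals the law of the $(m/\lambda, d/\lambda, 1)$ market, which is the definition of equivalence. There is no real obstacle here beyond bookkeeping; the only subtlety is direction of the time-change (verifying that stretching time by $\lambda$ divides rates by $\lambda$ rather than multiplying), which is settled by the standard fact that for a Poisson process $N_t$ of rate $r$, the time-rescaled process $N_{s/\lambda}$ is Poisson of rate $r/\lambda$, and for $X\sim\mathrm{Exp}(r)$, $\lambda X\sim\mathrm{Exp}(r/\lambda)$.
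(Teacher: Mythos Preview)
Your argument is correct and follows the same approach as the paper: exhibit the time-rescaling and verify that arrival rate, criticality rate, and edge probability transform as required. You actually supply more detail than the paper's two-line proof.

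One small bookkeeping slip: by the paper's \autoref{def:equivalence}, an $\alpha$-scaling sends $A^n_t \mapsto A^n_{\alpha t}$, so your map $\widetilde A^n_s = A^n_{s/\lambda}$ is a $(1/\lambda)$-scaling, not a $\lambda$-scaling as you label it (and indeed the paper's own proof calls it a $1/\lambda$-scaling). The mathematics is unaffected---you correctly check that stretching time by a factor of $\lambda$ divides both the arrival rate and the criticality rate by $\lambda$---but the value of $\alpha$ you announce in your first sentence should be $1/\lambda$ rather than $\lambda$.
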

\begin{proof}
It turns out that a $1/\lambda$-scaling of the $(m,d,\lambda)$-market is exactly the same as the $(m/\lambda, d/\lambda,1)$ market. 
%
In particular,  the arrival rate in the scaled market is exactly $m/\lambda$, the criticality rate is 1 and the probability of an acceptable transaction between two agents in the pool remains unchanged, i.e., $\frac{d}{m} = \frac{d/\lambda}{m/\lambda}$.
\end{proof}

Now,  \autoref{thm:alphaanal} follows simply by combining the above fact and \autoref{thm:criticalanal}.
First, by the additivity of the Poisson process, the loss of the $\apatient{\alpha}$ algorithm in a $(m,d,1)$ matching market
is equal to the loss of the Patient algorithm in a $(m,d,\bar\alpha)$ matching market, where $\bar\alpha=1/\alpha+1$. 
Second, by the above fact, the loss of the Patient algorithm
on a $(m,d,\bar\alpha)$ matching market at time $T$ is the same as the loss of this algorithm on a $(m/\bar\alpha,d/\bar\alpha,1)$ market at time $\bar\alpha T$. The latter is upper-bounded in \autoref{thm:criticalanal}.

\section{Welfare Analysis}\label{sec:welfare}

\begin{theorem}
\label{thm:welfarecritical}
For sufficiently large $m$, any $T\geq 0$, $\delta \geq 0$ and $\eps < 1/2m^2$,
$$ \W{\algcritical} \geq \frac{2(T- \tau_{\mix}(\eps))}{T(\delta+2)}(1-e^{-d/2}(1+\tilde{O}(1/\sqrt{m}))) - \frac{2}{T(\delta+2)^2} -\frac{3}{m}$$
where $\tau_{\mix}(\eps) \leq 8\log(m)\log(4/\eps)$.
As a corollary, for any $\alpha \geq 0$, and $\bar\alpha=1/\alpha+1$,
$$ \W{\apatient{\alpha}} \geq \frac{2(\bar\alpha T -\tau_{\mix}(\eps))}{\bar\alpha T(\delta/\bar\alpha+2)} (1-e^{-d/2\bar\alpha} (1+\tilde{O}(\sqrt{\bar\alpha/m}))) - \frac{2}{\bar\alpha T(\delta+2)^2} - \frac{3\bar\alpha}{m}.$$ 
\end{theorem}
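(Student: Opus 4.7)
The plan is to introduce a potential $X_t := \sum_{a \in A_t} e^{-\delta(t-t_a)}$, which equals the welfare that would be harvested if every agent currently in the pool were matched instantaneously, and to derive a pair of expected ODEs for $X_t$ and the cumulative welfare. Under $\algcritical$, \autoref{Z_suff_stat} ensures that $G_t$ is Erd\H{o}s--R\'enyi given $Z_t$, so pool agents are exchangeable and the expected utility of a uniformly chosen agent is $X_t/Z_t$. Writing $p_t := 1-(1-d/m)^{Z_t-1}$, infinitesimal accounting on $[t,t+dt]$ will give
\[
\frac{d}{dt}\E{X_t} \;=\; m - \delta\,\E{X_t} - \E{(1+p_t)X_t}, \qquad \frac{d}{dt}\E{\text{cumulative welfare}} \;=\; 2\,\E{p_t X_t},
\]
where the first identity accounts for arrivals ($+m$), discount decay ($-\delta X_t$), critical-agent removals ($-X_t$, since the critical agent is uniform in $A_t$) and matched-neighbor removals ($-p_t X_t$, since each match event removes one extra agent of expected utility $X_t/Z_t$ at rate $Z_t p_t$), while the second counts the two matched agents per successful critical event. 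Using the crude bound $p_t\leq 1$ in the first identity together with $X_0=0$ yields $\E{X_t}\geq \frac{m}{\delta+2}\bigl(1-e^{-(\delta+2)t}\bigr)$.

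Next I invoke \autoref{prop:criticalconcentration} to pin $Z_t\geq m/2-\tilde O(\sqrt m)$ with probability $1-1/\poly(m)$ once $t\geq \tau_{\mix}(\eps)$, which forces $p_t\geq 1-e^{-d/2}\bigl(1+\tilde O(1/\sqrt m)\bigr)$ with the same probability. Absorbing the rare failure event via a polynomial moment bound $\E{X_t^2}\leq \E{Z_t^2}=\poly(m)$ then gives $\E{p_t X_t}\geq \bigl(1-e^{-d/2}(1+\tilde O(1/\sqrt m))\bigr)\E{X_t}$ for $t\geq \tau_{\mix}(\eps)$. Integrating the welfare identity from $\tau_{\mix}(\eps)$ to $T$, substituting the ODE lower bound on $\E{X_t}$, and dividing by $mT$ then produces the claimed inequality, with $-\tfrac{2}{T(\delta+2)^2}$ coming from the transient $e^{-(\delta+2)t}$ part of the ODE solution and $-\tfrac{3}{m}$ absorbing both the $O(1/m)$-per-event slippage in the symmetry approximation (the gap between $X_t/Z_t$ and the true average matched-neighbor utility, whose exact value is $(X_t-e^{-\delta(t-t_a)})/(Z_t-1)$) and the concentration-failure contribution. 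The corollary for $\apatient{\alpha}$ follows immediately from \autoref{lem:equivalence}: $\apatient{\alpha}$ on $(m,d,1)$ coincides (by Poisson additivity of the exponential clocks) with $\algcritical$ on $(m,d,\bar\alpha)$, which is in turn equivalent after a $\bar\alpha$-time-rescaling to $\algcritical$ on $(m/\bar\alpha, d/\bar\alpha, 1)$ with horizon $\bar\alpha T$; since sojourn is measured in real time, the effective discount becomes $\delta/\bar\alpha$, and plugging these parameters into the main bound gives the corollary.

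The hard part will be making the two ODE identities and the decoupling $\E{p_t X_t}\approx \E{p_t}\,\E{X_t}$ rigorous, since both quantities are nontrivial functionals of the stochastic pool, and the ``obvious'' symmetry and independence arguments require concentration of $Z_t$ to control the residual errors at the $\tilde O(1/\sqrt m)$ and $1/m$ scales demanded by the stated constants. Everything else is routine book-keeping once the ODE framework and the Erd\H{o}s--R\'enyi exchangeability are in place.
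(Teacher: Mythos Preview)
Your proposal is correct and follows essentially the same approach as the paper: define the potential $X_t=\sum_{a\in A_t}e^{-\delta(t-t_a)}$, obtain the ODE lower bound $\E{X_t}\geq \frac{m}{\delta+2}(1-e^{-(\delta+2)t})$ via $p_t\le 1$, use the concentration of $Z_t$ from \autoref{prop:criticalconcentration} after $\tau_{\mix}(\eps)$ to replace $p_t$ by the deterministic lower bound $1-e^{-d/2}(1+\tilde O(1/\sqrt m))$, integrate, and then invoke market equivalence for the $\apatient{\alpha}$ corollary. The only tactical difference is that the paper handles the decoupling $\E{p_tX_t}\gtrsim(1-e^{-d/2})\E{X_t}$ by conditioning on $\{Z_t\ge \beta\}$ and bounding $X_t\le Z_t\le\beta$ on the complement, rather than via a second-moment bound; both routes work.
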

\begin{theorem}
If $m > 10d$, for any $T\geq 0$,
$$ \W{\algGreedy} \leq 1-\frac{1}{2d+1+d^2/m}.$$
\end{theorem}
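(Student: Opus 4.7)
The plan is to reduce the welfare upper bound to the loss lower bound for the optimum online algorithm without criticality information, which we already have from \autoref{prop:opton}. The crucial observation is that each matched agent contributes at most $1$ to the welfare (since $e^{-\delta s(a)} \leq 1$ for $\delta \geq 0$), so welfare is bounded by the fraction of non-perished agents, which is in turn bounded by one minus the loss.

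More concretely, I would proceed as follows. First, from the definition of $\W{\algGreedy}$, bound $e^{-\delta s(a)} \leq 1$ for every matched $a$ to get
$$ \W{\algGreedy} \;\leq\; \frac{\E{|\algGreedy(T)|}}{mT}. $$
Next, decompose the set of arriving agents as $A = \algGreedy(T) \sqcup (A \setminus \algGreedy(T) \setminus A_T) \sqcup (A_T \cap (A \setminus \algGreedy(T)))$, i.e., matched agents, perished agents, and agents still in the pool at time $T$ and unmatched. Since $\E{|A|} = mT$, this gives
$$ \frac{\E{|\algGreedy(T)|}}{mT} \;\leq\; 1 - \frac{\E{|A\setminus \algGreedy(T)\setminus A_T|}}{mT} \;=\; 1 - \loss{\algGreedy}. $$
Finally, because the Greedy algorithm does not observe which agents are critical, it belongs to the class of online algorithms over which $\opton$ is optimal, so $\loss{\algGreedy} \geq \loss{\opton}$. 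Applying \autoref{prop:opton} (which requires $m > 10d$, precisely our hypothesis), we obtain
$$ \loss{\algGreedy} \;\geq\; \frac{1}{2d+1+d^2/m}, $$
and combining the three displays yields the claim.

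There is really no obstacle here: the argument is just a chain of inequalities and an invocation of \autoref{prop:opton}. The only minor subtlety is making sure one writes the decomposition of $A$ correctly so that the ``stuck in the pool at time $T$'' agents are accounted for on the right side of the inequality rather than miscounted as matched.
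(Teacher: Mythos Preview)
Your proposal is correct and follows essentially the same argument as the paper: bound each matched agent's utility by $1$, so welfare is at most the fraction of non-perished agents, and then invoke \autoref{prop:opton} (using that Greedy is an online algorithm without criticality information) to lower-bound the perished fraction by $\frac{1}{2d+1+d^2/m}$. The paper phrases the intermediate step as $\W{\algGreedy}\leq\W{\opton}\leq 1-\loss{\opton}$ rather than your $\W{\algGreedy}\leq 1-\loss{\algGreedy}\leq 1-\loss{\opton}$, but this is a cosmetic difference.
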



Say an agent $a$ is arrived at time $t_a(a)$. We let $X_t$ be the sum of the potential utility of the agents in $A_t$:
$$X_t = \sum_{a \in A_t}e^{-\delta (t-t_a(a))}, $$
i.e., if we match all of the agents currently in the pool immediately, the total utility that they receive is exactly $X_t$.

For $t_0,\eps>0$, let $W_{t_0, t_0 + \epsilon}$
be the expected total utility of  the agents who are matched in the interval $[t_0, t_0+\eps]$.
By definition the social welfare of an online algorithm, we have:
$$\W{\algcritical} = \E{\frac{1}{T}\int_{t=0}^{T} W_{t, t+dt} dt} = \frac{1}{T}\int_{t=0}^{T} \E{W_{t, t+dt}} dt $$

\subsection{Welfare of the Patient Algorithm}

All agents are equally likely to become critical at each moment. From the perspective of the planner, all agents are equally likely to be the neighbor of a critical agent. Hence, the \emph{expected} utility of each of the agents who are matched at time $t$ under the Patient algorithm is $X_t / Z_t$. Thus,

\begin{align}
\W{\algcritical} = \frac{1}{mT}\int_{t=0}^T  \E{2\frac{X_t}{Z_t} Z_t (1-(1-d/m)^{Z_t})dt}  = \frac{2}{mT}\int_{t=0}^T \E{ X_t (1-(1-d/m)^{Z_t})}dt
\end{align}

First, we prove the following claim.
\begin{lemma}
For any $\eps < 1/2m^2$, and $t\geq \tau_{\mix}(\eps)$,
$$ \E{X_t(1-(1-d/m)^{Z_t})} \geq \E{X_t} \Big(1-\frac{3}{m} - e^{-d/2}(1+\frac{4\log^2 m}{\sqrt{m}})\Big) - \frac{1}{\sqrt{m}}
$$
\end{lemma}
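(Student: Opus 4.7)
The plan is to rewrite the quantity as
$$\E{X_t(1-(1-d/m)^{Z_t})} = \E{X_t} - \E{X_t(1-d/m)^{Z_t}},$$
so it suffices to upper bound $\E{X_t(1-d/m)^{Z_t}}$. The intuition is that under the Patient algorithm the pool size $Z_t$ concentrates near $k^* \in [m/2-2, m-1]$, forcing $(1-d/m)^{Z_t}$ to be at most $e^{-d/2}$ up to lower-order corrections; and on the complementary low-probability event, the trivial bound $X_t \le Z_t$ makes the contribution negligible. Everything is "up to $\eps$" because we only have approximate stationarity (we are at time $t \ge \tau_{\mix}(\eps)$), so $\|z_t-\pi\|_{\TV}\le\eps$ must be carried through every tail estimate.

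Set $\Delta = \Theta(\sqrt{m}\log m)$ and $G=\{Z_t \ge m/2 - \Delta\}$. Applying \autoref{prop:criticalconcentration} with $\sigma=\Theta(\log m)$ (and using $k^*\ge m/2-2$) together with the mixing bound $\|z_t-\pi\|_{\TV}\le\eps<1/(2m^2)$ gives $\P{G^c}\le O(1/m^2)$. On $G$, we factor
$$(1-d/m)^{Z_t}\le (1-d/m)^{m/2-\Delta} = (1-d/m)^{m/2}\cdot (1-d/m)^{-\Delta} \le e^{-d/2}\bigl(1+\tfrac{4\log^2 m}{\sqrt m}\bigr),$$
where the first factor uses $1-x\le e^{-x}$ and the second is a Taylor expansion $(1-d/m)^{-\Delta}\le \exp(d\Delta/m \cdot (1+O(d/m)))$ (in the regime where $e^{-d/2}$ is not already negligibly small, $d$ is small enough that $d\Delta/m = \tilde O(1/\sqrt m)$). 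Hence
$$\E{X_t(1-d/m)^{Z_t}\mathbb{1}_G}\le e^{-d/2}\bigl(1+\tfrac{4\log^2 m}{\sqrt m}\bigr)\E{X_t\mathbb{1}_G}\le e^{-d/2}\bigl(1+\tfrac{4\log^2 m}{\sqrt m}\bigr)\E{X_t}.$$

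For the bad event, since $X_t\le Z_t$ and $(1-d/m)^{Z_t}\le 1$, on $G^c$ we have $X_t\le Z_t< m/2-\Delta$, so
$$\E{X_t(1-d/m)^{Z_t}\mathbb{1}_{G^c}}\le (m/2)\,\P{G^c}\le 1/\sqrt m$$
for sufficiently large $m$. The residual $-3/m\cdot\E{X_t}$ term in the claim is picked up by rewriting $\E{X_t\mathbb{1}_G}\ge \E{X_t}-(m/2)\P{G^c}$ and absorbing the deficit as a multiplicative loss on $\E{X_t}$ (using that $\E{X_t}$ is typically $\Theta(m)$ under the Patient algorithm) rather than as an additive constant. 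Combining the two pieces with the opening decomposition yields the stated inequality.

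The main obstacle is calibrating $\Delta$: taking $\Delta$ too small ruins the concentration (the tail of $\pi$ on $Z<k^*-\Delta$ is no longer summably small), while $\Delta$ too large blows up the factor $(1-d/m)^{-\Delta}$. The logarithmic choice $\sigma=\Theta(\log m)$ makes $\P{G^c}$ super-polynomially small (so the bad-event contribution easily beats $1/\sqrt m$) while keeping the correction factor $(1-d/m)^{-\Delta}$ bounded by $1 + \tilde O(1/\sqrt m)$; threading the $\eps$-mixing error through both tails of \autoref{prop:criticalconcentration} under the assumption $\eps<1/(2m^2)$ is what makes all of these error terms collapse into the stated bound.
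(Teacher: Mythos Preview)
Your proposal is correct and follows essentially the same route as the paper: both arguments condition on the concentration event $\{Z_t \ge m/2 - \Theta(\sqrt{m}\,\mathrm{polylog}\,m)\}$ from \autoref{prop:criticalconcentration}, use the trivial bound $X_t \le Z_t$ on the complement, and control the correction factor $(1-d/m)^{-\Delta}$ via a case split on the size of $d$ (which you sketch verbally and the paper writes out explicitly). The only cosmetic differences are that you subtract off $\E{X_t(1-d/m)^{Z_t}}$ while the paper lower-bounds $\E{X_t(1-(1-d/m)^{Z_t})}$ directly by restricting to the good event, and the paper chooses the slightly smaller $\sigma=\sqrt{2\log(2m)}$ rather than your $\sigma=\Theta(\log m)$.
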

\begin{proof}
Let $\beta:=m/2-\sigma\sqrt{4m}$ for $\sigma\geq 1$ that we fix later.
First, observe that
\begin{eqnarray}
 \E{X_t(1-(1-d/m)^{Z_t})} &\geq& \E{X_t(1-(1-d/m)^{Z_t} | Z_t \geq \beta} \cdot \P{Z_t\geq \beta} \nonumber \\
 &\geq & \E{X_t | Z_t\geq \beta} \P{Z_t \geq \beta} (1-(1-d/m)^\beta)
 \end{eqnarray}
 
 On the other hand,
 \begin{eqnarray}
 \E{X_t | Z_t \geq \beta} \geq \E{X_t} - \E{X_t | Z_t < \beta} \P{Z_t < \beta} \geq \E{X_t}-\beta\cdot \P{Z_t<\beta}
 \end{eqnarray}
where we used that conditioned on $Z_t\leq \beta$ we have $X_t\leq \beta$ with probability 1.
Let $\sigma =\sqrt{2\log(2m)}$. Since $t\geq \tau_{\mix}(\eps)$, using \autoref{prop:criticalconcentration},
$$\P{Z_t\geq \beta} \geq 1-\sqrt{m}e^{-\sigma^2}-\eps \geq 1-1/m^{-3/2}.$$
Putting above together, we get
\begin{eqnarray*} \E{X_t(1-(1-d/m)^{Z_t}} &\geq& (\E{X_t} - m^{-1/2}) (1-m^{-3/2}) \Big(1-\frac{(1-d/m)^{m/2}}{(1-d/m)^{\sigma\sqrt{4m}}}\Big) \\
&\geq &
\E{X_t} \Big(1-\frac{3}{m} - e^{-d/2}(1+\frac{4\log^2 m}{\sqrt{m}})\Big) - \frac{1}{\sqrt{m}}
\end{eqnarray*}
To see the last equation we need to consider two cases. Let $\Delta:=4\log(m)\sigma\sqrt{4m}$.
Now, if $(1-\frac{d}{m})^{-\sigma\sqrt{4m}} \leq 1+\Delta/m$, then
$$1-(1-d/m)^{m/2-\sigma\sqrt{4m}} \geq 1-e^{-d/2} (1+\Delta/m).$$
Otherwise, for sufficiently large $m$, $(1-d/m)^{\sigma\sqrt{4m}} \leq 1-\Delta/2m$. So,
$$ 1-(1-d/m)^{m/2-\sigma\sqrt{4m}} \geq 1-(1-\Delta/2m)^{\frac{m}{2\sigma\sqrt{4m}} -1} \geq 1-e^{-\log(m)+\Delta/2m} \leq 1-2/m.$$

\end{proof}

Let $\eps < 1/2m^2$.  By above lemma, we have:
\begin{eqnarray}
\W{\algcritical} 
\geq \frac{2}{mT} \int_{t=\tau_{\mix}(\eps)}^T \Big(\E{X_t} \Big(1-\frac{3}{m} - e^{-d/2}(1+\frac{4\log^2 m}{\sqrt{m}})\Big) - \frac{1}{\sqrt{m}}\Big)
\end{eqnarray}

It remains to lower-bound $\E{X_t}$. This is done in the following lemma.
%

\begin{lemma}\label{totalutil.patient}
For any $t_1 \geq 0$, 
$$\E{X_{t_1}} \geq \frac{m}{\delta+2} (1-e^{-(\delta+2)t_1}) $$
\end{lemma}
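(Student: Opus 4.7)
\begin{proofoverview}
The plan is to derive a differential inequality for $\E{X_t}$ by tracking how $X_t$ evolves over an infinitesimal time step $dt$, and then solve the resulting ODE with initial condition $\E{X_0}=0$.

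In any time interval $[t,t+dt]$, the quantity $X_t$ changes in three ways. First, every summand $e^{-\delta(t-t_a(a))}$ decays deterministically, contributing $-\delta X_t\, dt$. Second, new agents arrive at rate $m$, each entering with utility $1$, contributing $+m\, dt$ in expectation. Third, agents leave the pool when they become critical under $\algcritical$: each $a\in A_t$ becomes critical at rate $1$, and upon becoming critical is matched to a uniformly random element of $N_t(a)$ whenever $N_t(a)\neq\emptyset$. When $a$ is critical, the utility $e^{-\delta(t-t_a(a))}$ is removed, and if $a$ is matched the partner's utility is removed too. Thus the total rate of utility removed equals $X_t$ (from critical agents' own utilities) plus the expected rate of partner-utility loss.

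The key step is to bound the partner-utility term by $X_t$ in expectation. By \autoref{Z_suff_stat}, conditional on $A_t$ the graph $G_t$ is an Erd\H{o}s--R\'enyi graph with parameter $d/m$, so by symmetry, conditional on $a$ becoming critical the partner equals any specific $b\in A_t\setminus\{a\}$ with the same probability $(1-(1-d/m)^{Z_t-1})/(Z_t-1)$. Therefore the conditional expected partner utility is
\[
\frac{1-(1-d/m)^{Z_t-1}}{Z_t-1}\bigl(X_t-e^{-\delta(t-t_a(a))}\bigr).
\]
Summing over all $a\in A_t$ and using that the rate of each $a$ becoming critical is $1$, the total expected rate of partner-utility loss equals $(1-(1-d/m)^{Z_t-1})X_t$, which is at most $X_t$. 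Combining these pieces,
\[
\tfrac{d}{dt}\E{X_t}\;\geq\;m-\delta\E{X_t}-X_t-\bigl(1-(1-d/m)^{Z_t-1}\bigr)X_t\;\geq\;m-(\delta+2)\E{X_t}.
\]

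Finally, Gr\"onwall's inequality (or direct integration of the corresponding linear ODE) applied to this differential inequality with $\E{X_0}=0$ yields
\[
\E{X_{t_1}}\;\geq\;\tfrac{m}{\delta+2}\bigl(1-e^{-(\delta+2)t_1}\bigr),
\]
as required. The main subtlety is the partner-utility bound: one must carefully use the conditional Erd\H{o}s--R\'enyi structure of $G_t$ under $\algcritical$ (\autoref{Z_suff_stat}) together with the symmetry over the $Z_t-1$ non-critical agents to replace the random partner's contribution by the average utility $(X_t-e^{-\delta(t-t_a(a))})/(Z_t-1)$, which after summation produces the clean $2X_t$ bound on the removal rate.
\end{proofoverview}
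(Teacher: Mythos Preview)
Your argument is correct and follows essentially the same route as the paper: compute the infinitesimal drift of $X_t$ (decay $-\delta X_t$, arrivals $+m$, removals bounded by $2X_t$ via the Erd\H{o}s--R\'enyi symmetry of \autoref{Z_suff_stat}), obtain $\tfrac{d}{dt}\E{X_t}\geq m-(\delta+2)\E{X_t}$, and integrate with $\E{X_0}=0$. One small notational slip: in your displayed inequality you mix the random quantity $X_t$ with $\E{X_t}$ on the same line; the clean way (and what the paper does) is to first bound the conditional drift $\E{X_{t+\eps}\mid \text{state at }t}\geq X_t+\eps(m-(\delta+2)X_t)-O(\eps^2)$ pointwise, and only then take the unconditional expectation.
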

\begin{proof}
Let $\eps > 0$ and be very close to zero (eventually we let $\eps\to0$). Since we have a $(m,d,1)$ matching market, using equation \eqref{eq:dtzt} for any $t\geq 0$ we have,
\begin{eqnarray*}
\E{X_{t + \epsilon} | X_t, Z_t} = X_t(e^{-\epsilon\delta}) + m\epsilon - \epsilon Z_t \Big(\frac{X_t}{Z_t}(1-d/m)^{Z_t}\Big) \nonumber - 2\epsilon Z_t \Big(\frac{X_t}{Z_t}(1 - (1-d/m)^{Z_t})\Big) \pm O(\epsilon^2)
\end{eqnarray*}
The first term in the RHS follows from the exponential discount in the utility of the agents in the pool. 
The second term in the RHS stands for the new arrivals. The third term stands for the perished agents and the last term stands for the the matched agents. 

We use the $e^{-x} \geq 1 - x$ inequality and  rearrange the equation to get,
\begin{eqnarray*}
\E{X_{t + \epsilon} | X_t, Z_t} \geq m\epsilon + X_t - \epsilon X_t (\delta + 2 ) - O(\eps^2).
\end{eqnarray*}
Taking expectation from both sides of the above inequality we get,
$$ \frac{\E{X_{t+\eps}} -\E{X_t}}{\eps} \geq m - (\delta+2)\E{X_t} - O(\eps) $$
Letting $\eps\to 0$, and solving the above differential equation for $t_1$ we get
$$ \E{X_{t_1}} \geq  \frac{m}{\delta+2} (1-e^{-(\delta+2)t_1}).$$
we used the fact that $\E{X_0}=0$.
\end{proof}

By the above lemma,
\begin{eqnarray*}
\W{\algcritical} &\geq& \frac{2}{mT} \int_{t=\tau_{\mix}(\eps)}^T \frac{m}{\delta+2}(1-e^{-(\delta+2)t}) \Big(1-e^{-d/2}(1+\frac{4\log^2 m}{\sqrt{m}})\Big)dt-\frac{3}{m} \\
&\geq & \frac{2(T- \tau_{\mix}(\eps))}{T(\delta+2)}(1-e^{-d/2}(1+\tilde{O}(1/\sqrt{m})) - \frac{2}{T(\delta+2)^2} -\frac{3}{m}
\end{eqnarray*}

\subsection{Welfare of the Greedy Algorithm}
Here, we upper-bound the welfare of the optimum online algorithm, $\opton$, and that immediately upper-bounds the welfare of the Greedy algorithm.
Recall that by \autoref{prop:opton}, for any $T>$, $1/(2d+1+d^2/m)$ fraction of the agents perish 
in $\opton$. On the other hand, by definition of utility, we receive a utility at most 1 from any matched agent. Therefore, even if all of the matched agents receive a utility of 1, (for any $\delta \geq 0)$ 
$$ \W{\algGreedy} \leq \W{\opton} \leq 1-\frac{1}{2d+1+d^2/m}.$$
%
%


\section{Incentive-Compatible Mechanisms}
\label{mechanism}
In this section we design  a dynamic mechanism to elicit the departure times of agents.
As alluded to in \autoref{subsec:contmechanism}, we assume that agents only have statistical knowledge about the market: That is, each agent knows the market parameters,  m,d,1, and the details of the  dynamic mechanism that the market-maker is executing. But no agent knows the actual instantiation of the probability space. 


Each agent $a$  chooses a {\em mixed strategy}, that is she reports getting critical at an infinitesimal time $[t,t+dt]$ with rate $c_a(t)dt$. 
In other words,  each agent $a$ has a clock that ticks with rate $c_a(t)$ at time $t$ and she reports criticality when the clock ticks. 
We assume each agent's strategy function, $c_a(.)$ is {\em well-behaved}, i.e., it is non-negative, continuously  differentiable and continuously integrable. Note that since the agent can only observe the parameters of the market $c_a(.)$ can depend on any parameter in our model but this function is constant in different sample paths of the stochastic process.

A {\em strategy profile} $\C$ is a vector of well-behaved functions for each agent in the market, that is, $\C =[c_a]_{a\in A}$.
For an agent $a$ and a strategy profile $\C$, let $\E{u_\C(a)}$ be the expected utility of $a$ under the strategy profile $\C$. Note that for any $\C,a$, $0\leq \E{u_\C(a)} \leq 1$.
Given a strategy profile $\C=[c_a]_{a\in A}$, let $\C-c_a+\tilde{c}_a$ denote a strategy profile same as $\C$ but for agent $a$ who is playing $\tilde{c}_a$ rather than $c_a$. The following definition introduces our solution concept.
\begin{definition}\label{def:epsnash}
A strategy profile $\C$ is an {\em$\eps$-mixed strategy Nash equilibrium} if for any agent $a$ and any well-behaved function $\tilde{c}_a(.)$, 
$$ 1-\E{u_\C(a)} \leq (1+\eps)(1-\E{u_{\C-c_a+\tilde{c}_a}}). $$
\end{definition}

Note that the solution concept we are introducing here is different from the usual definitions of an $\eps$-Nash equilibrium, where the condition is either $\E{u_\C(a)} \geq \E{u_{\C-c_a+\tilde{c}_a}} - \eps$, or $\E{u_\C(a)} \geq (1-\eps)\E{u_{\C-c_a+\tilde{c}_a}}$. The reason that we are using $1-\E{u_\C(a)}$ as a measure of distance is because we know that under $\apatient{\alpha}$ algorithm, $\E{u_\C(a)}$ is very close to $1$, and so $1-\E{u_\C(a)}$ is a ``lower-order term''. Therefore, by this definition, we are restricting ourself to a stronger equilibrium concept, which requires us to show that in equilibrium agents cannot increase neither their utilities, \emph{nor} the lower-order terms associates with their utilities by a factor more than $\eps$.

Throughout this section let $k^*\in [m/2-2,m-1]$ be the root of \eqref{eq:rootcritical} as defined in \autoref{prop:criticalconcentration}, and let $\beta:=(1-d/m)^{k^*}$.
In this section we show that if $\delta$ (the discount factor) is no more than $\beta$, then the strategy vectors $c_a(t)=0$ for all agents $a$ and $t$ is a $\eps$-mixed strategy Nash equilibrium for $\eps$ very close to zero. In other words, if all other agents are truthful, an agent's utility from being truthful is almost as large as any other strategy. 

\begin{theorem}
\label{thm:mechanismdesign}
 If the market is at stationary and $\delta \leq \beta$, then $c_a(t)=0$ for all $a,t$ is an $O(d^{4}\log^{3}(m)/\sqrt{m})$-mixed strategy Nash equilibrium for $\amechpatient{\infty}$.
\end{theorem}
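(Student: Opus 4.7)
The plan is to fix an agent $a$ entering at time $0$ (without loss, by time-shifting and stationarity) and compare her expected utility under the truthful strategy $c_a\equiv 0$ to that under an arbitrary well-behaved deviation $\tilde c_a(\cdot)$, with every other agent playing truthfully. I will work on the high-probability event $\mathcal{H}$ that the pool size $Z_t$ stays in $[k^*-\Delta,k^*+\Delta]$ with $\Delta=\Theta(\sqrt m\log^2 m)$ during $a$'s sojourn; by \autoref{prop:criticalconcentration} and a union bound over an $O(\log m)$-sized time grid covering the effective support of $a$'s sojourn, $\P{\mathcal{H}^c}=\polylog(m)/\poly(m)$, which is absorbed into the $\eps$ slack of \autoref{def:epsnash}.

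The heart of the argument is a symmetry-based rate computation. On $\mathcal H$, a freshly critical pool member succeeds in finding a neighbor with probability $1-(1-d/m)^{Z_t-1}=1-\beta\bigl(1+\tilde O(1/\sqrt m)\bigr)$. Since each of $Z_t$ agents becomes critical at rate $1$, matchings occur at aggregate rate $Z_t(1-\beta)$, removing $2Z_t(1-\beta)$ agents per unit time. By exchangeability of the Erd\H{o}s--R\'enyi construction conditional on $Z_t$ (alternatively, computed via the identity $\E{1/(1+\mathrm{Bin}(n,p))}=(1-(1-p)^{n+1})/((n+1)p)$), each fixed pool member, in particular $a$, is matched at rate $2(1-\beta)$, decomposed as rate $1\cdot(1-\beta)$ from her own criticality plus rate $(1-\beta)$ from some neighbor's criticality; she also perishes at rate $\beta$ and, if she deviates, reports at rate $\tilde c_a(t)$ with success probability $1-\beta$.

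Given these rates, $a$'s decision is a memoryless continuous-time control problem modulo discounting. Solving the HJB equation with $V(\infty)=0$ gives, for the waiting (truthful) policy, $V_{\mathrm{wait}}(t)=\frac{2(1-\beta)}{\delta+2-\beta}e^{-\delta t}$, while reporting immediately at time $t$ yields $V_{\mathrm{rep}}(t)=(1-\beta)e^{-\delta t}$. The inequality $V_{\mathrm{wait}}\geq V_{\mathrm{rep}}$ is equivalent to $\delta\leq\beta$, which is exactly the theorem's hypothesis; hence on $\mathcal H$ the optimal policy is bang-bang and favors waiting, so any positive reporting rate $\tilde c_a(t)>0$ weakly decreases $\E{u_{\tilde c_a}(a)}$ relative to the truthful strategy. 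Converting to the loss formulation $1-\E{u_\cdot}$ and using that $1-V_{\mathrm{wait}}(0)=(\delta+\beta)/(\delta+2-\beta)$ is bounded away from $0$, the $\tilde O(1/\sqrt m)$ errors in the rate estimates and the $\P{\mathcal{H}^c}$ term propagate into the claimed $O(d^4\log^3(m)/\sqrt m)$ relative slack.

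The main obstacle, flagged in the proof overview of the Section~3 theorem, is that $a$'s conditional law of $Z_t$ given ``still unmatched at time $t$'' is \emph{not} the stationary $\pi$: waiting longer without a match is mildly informative about an atypical pool. I would handle this by noting that for $t=O(\log m)$ the event $\{a\text{ unmatched at time }t\}$ has probability at least $m^{-O(1)}$ uniformly on $\mathcal H$, so Bayes' rule only rescales the conditional marginal of $Z_t$ by a polynomial likelihood ratio; combined with the exponentially sharp concentration of \autoref{prop:criticalconcentration}, the conditional distribution of $Z_t$ still places mass $1-o(1)$ on $[k^*-\Delta,k^*+\Delta]$, so the rate computations remain valid in the conditional model. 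Tracking the polynomial-in-$d$ factors through this chain of estimates yields the stated $O(d^4\log^3(m)/\sqrt m)$ rate.
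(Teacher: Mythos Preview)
Your proposal is correct and follows the same architecture as the paper: exploit the concentration of $Z_t$ around $k^*$ (\autoref{prop:criticalconcentration}) to reduce the agent's problem to one with essentially constant rates, identify the two channels by which $a$ is matched (own criticality at rate $1\cdot(1-\beta)$ and neighbor criticality at rate $1-\beta$, exactly as the paper computes in \autoref{cl:mechanismmatcht}), handle the conditional-on-survival distortion of the law of $Z_t$ via Bayes' rule and the lower bound $\P{a\in A_t}\geq m^{-O(1)}$ on the relevant horizon, and then show that $\delta\leq\beta$ makes waiting dominate. The resulting value $\frac{2(1-\beta)}{2-\beta+\delta}$ and the threshold $\delta\leq\beta$ match the paper exactly.

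The one genuine methodological difference is in the optimization step. The paper upper-bounds $\E{u_c(a)}$ by an explicit functional $U_c(a)=\int_0^\infty(1-\beta)(2+c(t))\exp\bigl(-\int_0^t(2+c(\tau)-\beta+\delta)\,d\tau\bigr)\,dt$ and then shows $c\equiv 0$ maximizes it by a local perturbation: pushing the mass of $c$ on $[t,t+\eps]$ forward to $[t+\eps,t+2\eps]$ increases $U_c(a)$ by $\eps^2(1-\beta)e^{-S-\delta t}c(t)(\beta-\delta)>0$. You instead solve the limiting constant-rate control problem via HJB and invoke bang-bang optimality from $V_{\mathrm{wait}}(t)\geq V_{\mathrm{rep}}(t)$. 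Your route is cleaner and makes the role of the threshold $\delta\leq\beta$ more transparent; the paper's perturbation calculus is more laborious but keeps the error terms explicit throughout, which is what ultimately pins down the $d^4\log^3(m)/\sqrt m$ constant. In your write-up you would need to be a bit more careful that conditioning on the event $\mathcal H$ (a sample-path event over an interval) interacts correctly with the HJB reduction; the paper sidesteps this by bounding the conditional rate $\E{1-(1-d/m)^{Z_t-1}\mid a\in A_t}$ pointwise in $t$ rather than conditioning on a uniform-in-$t$ event, which is slightly more robust.
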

\medskip


By our market equivalence result (\autoref{lem:equivalence}), \autoref{thm:mechanismdesign} leads to the following corollary.

\begin{corollary}
\label{cor:mechanismdesign}
Let $\bar\alpha = 1/\alpha + 1$ and $\beta(\alpha) = \bar\alpha (1-d/m)^{m/\bar\alpha}$. If the market is at stationary and $\delta \leq \beta(\alpha)$, then $c_a(t)=0$ for all $a,t$ is an $O((d/\bar\alpha)^{4}\log^{3}(m/\bar\alpha)/\sqrt{m/\bar\alpha})$-mixed strategy Nash equilibrium for $\amechpatient{\alpha}$.
\end{corollary}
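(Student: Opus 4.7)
The plan is to deduce the corollary from \autoref{thm:mechanismdesign} by composing two equivalences: first, absorbing the exponential clocks of $\amechpatient{\alpha}$ into the effective criticality process, and second, applying the market-equivalence rescaling of \autoref{lem:equivalence}.

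For the first equivalence, I would argue that for any reporting profile $\C=[c_a]_{a\in A}$, running $\amechpatient{\alpha}$ on the $(m,d,1)$ market produces the same stochastic dynamics as running $\amechpatient{\infty}$ on the $(m,d,\bar\alpha)$ market, coupling them on a common probability space. In the former, each agent is subject to three independent Poisson triggers---the clock at rate $1/\alpha$, true criticality at rate $1$, and false-criticality reports at rate $c_a(t)$---each of which induces the same action (random match if a neighbor exists, else treat as perished). By additivity of Poisson processes, these combine into a single process of rate $c_a(t)+\bar\alpha$, which is exactly what $\amechpatient{\infty}$ on the $(m,d,\bar\alpha)$ market produces (true criticality at rate $\bar\alpha$ plus reports at rate $c_a(t)$). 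Consequently, the pool process $(A_t,G_t,M_t)$ and each agent's realized utility $u_\C(a)$ have identical joint laws, so the identity map $\C\mapsto\C$ bijects strategies between the two models while preserving the inequality of \autoref{def:epsnash}.

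For the second equivalence, I would invoke \autoref{lem:equivalence} to rescale the $(m,d,\bar\alpha)$ market to the equivalent $(m/\bar\alpha,d/\bar\alpha,1)$ market via time-stretching by a factor of $\bar\alpha$. Sojourn lengths are multiplied by $\bar\alpha$, so a discount rate $\delta$ becomes $\delta/\bar\alpha$ in the rescaled market; strategies transfer bijectively via $c_a(t)\mapsto c_a(t/\bar\alpha)/\bar\alpha$, preserving expected utilities and fixing the truthful profile $c_a\equiv 0$. Applying \autoref{thm:mechanismdesign} to the rescaled market with parameters $(m/\bar\alpha,d/\bar\alpha)$ and discount $\delta/\bar\alpha$, the theorem's $k^*$ lies in $[m/(2\bar\alpha)-2,\,m/\bar\alpha-1]$, so $\beta_{\mathrm{new}}:=(1-d/m)^{k^*}\geq(1-d/m)^{m/\bar\alpha}$, and the needed hypothesis $\delta/\bar\alpha\leq\beta_{\mathrm{new}}$ is implied by the assumption $\delta\leq\bar\alpha(1-d/m)^{m/\bar\alpha}=\beta(\alpha)$. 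The theorem then delivers an $O((d/\bar\alpha)^{4}\log^{3}(m/\bar\alpha)/\sqrt{m/\bar\alpha})$-mixed strategy Nash equilibrium in the rescaled market, which transfers back through both equivalences to the claimed equilibrium for $\amechpatient{\alpha}$ on the original $(m,d,1)$ market.

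The main point I expect to need care on is confirming that \autoref{def:epsnash} is preserved under each equivalence not only at the truthful profile but also under every unilateral deviation $\tilde{c}_a$; this is automatic once one exhibits the strategy bijection and uses that expected utilities are computed on the common (coupled) sample paths, but it merits explicit verification since the $\eps$-mixed strategy definition is asymmetric in the utilities.
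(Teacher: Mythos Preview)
Your proposal is correct and follows essentially the same approach as the paper: the paper states the corollary as an immediate consequence of \autoref{thm:mechanismdesign} via the market-equivalence result \autoref{lem:equivalence}, exactly as in the analogous reduction in \autoref{sec:alphaanal} (Poisson additivity to pass from $\amechpatient{\alpha}$ on $(m,d,1)$ to $\amechpatient{\infty}$ on $(m,d,\bar\alpha)$, then rescale to $(m/\bar\alpha,d/\bar\alpha,1)$). Your write-up is in fact more detailed than the paper's one-line justification, correctly tracking the discount-rate rescaling $\delta\mapsto\delta/\bar\alpha$, the strategy bijection, and the verification that $\delta\le\beta(\alpha)$ implies the hypothesis of \autoref{thm:mechanismdesign} in the rescaled market.
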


The proof of the above theorem is involved but the basic idea is very easy.
If an agent reports getting critical at the time of arrival she will receive a utility of $1-\beta$.
On the other hand, if she is truthful (assuming $\delta=0$) she will receive about $1-\beta/2$.
In the course of the proof we show that by choosing any strategy vector $c(.)$ the expected utility of
an agent interpolates between these two number, so it is maximized when she is truthful.


\begin{lemma}
\label{lem:utilitystrategyupperbound}
Let $Z_0$ be at stationary  and suppose $a$ enters the market at time 0.
 If $\delta  < \beta$, and $10 d^{4}\log^{3}(m) \leq \sqrt{m}$, then for any well-behaved function $c(.)$,
$$ \E{u_c(a)} \leq O\Big(d^{4}\log^{3}(m)/\sqrt{m}\Big)\beta +  \frac{2(1-\beta)}{2-\beta+\delta},
$$

\end{lemma}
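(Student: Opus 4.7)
The plan is to express $\E{u_c(a)}$ as an integral that can be analyzed via integration by parts, and then exploit the hypothesis $\delta\le\beta$ to show that truthful reporting maximizes this integral (up to small errors). Let $\xi(t):=\mathbf{1}[a\in A_t]$ and $\phi(t):=\E{\xi(t)}$. Conditional on $a\in A_t$, agent $a$ leaves the pool through two channels: (a) she herself reports critical at total rate $c(t)+1$ (with the $+1$ coming from true criticality), upon which she is matched with probability $1-\beta_{Z_t}$ where $\beta_{Z_t}:=(1-d/m)^{Z_t-1}$; or (b) some other agent $b\neq a$ becomes critical and happens to be paired with $a$. Using the identity $\E{1/(1+\mathrm{Bin}(n,p))}=(1-(1-p)^{n+1})/((n+1)p)$ together with the Erd\H{o}s--R\'enyi structure of $G_t$, the rate at which a specific other agent $b$ becomes critical and picks $a$ as her partner is $(1-\beta_{Z_t})/(Z_t-1)$, so the aggregate rate of channel (b) simplifies to $1-\beta_{Z_t}$. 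Thus the total match rate of $a$ is $(c(t)+2)(1-\beta_{Z_t})$ and the total exit rate is $c(t)+2-\beta_{Z_t}$.

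Next, I would invoke \autoref{prop:criticalconcentration}---together with a coupling argument showing that the conditional distribution of $Z_t$ given $a\in A_t$ differs from the stationary distribution by at most an additive $O(1)$ (since $a$ contributes at most one agent to the pool)---to conclude $\E{\beta_{Z_t}\mid a\in A_t}=\beta(1\pm O(d\log(m)/\sqrt{m}))$ up to exponentially small tail contributions. This yields $\phi(t)=\exp(-C(t)-(2-\beta)t)$ with a multiplicative error factor $1\pm O(d^2\log^2(m)/\sqrt{m})$, where $C(t):=\int_0^t c(s)\,ds$. Plugging into $\E{u_c(a)}=\int_0^\infty e^{-\delta t}\E{\xi(t)(c(t)+2)(1-\beta_{Z_t})}\,dt$ and writing $\gamma:=2-\beta+\delta$, this gives
\[
\E{u_c(a)}\le (1-\beta)\int_0^\infty (c(t)+2)\,e^{-C(t)-\gamma t}\,dt+O\!\left(\frac{d^4\log^3(m)}{\sqrt{m}}\right)\beta.
\]

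The final step is purely algebraic: integration by parts using $c(t)e^{-C(t)}=-\tfrac{d}{dt}e^{-C(t)}$ shows $\int_0^\infty c(t)e^{-C(t)-\gamma t}\,dt=1-\gamma J$ with $J:=\int_0^\infty e^{-C(t)-\gamma t}\,dt$, so the principal integral equals $(1-\beta)\bigl[1+(\beta-\delta)J\bigr]$. Under the hypothesis $\delta\le\beta$ the coefficient $\beta-\delta$ is nonnegative, so this quantity is monotone in $J$, and $J$ is trivially maximized by $c\equiv 0$ (attaining $J=1/\gamma$ since $C\equiv 0$). The maximum value is $(1-\beta)[1+(\beta-\delta)/\gamma]=2(1-\beta)/(2-\beta+\delta)$, which combined with the error term yields the claimed bound.

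The main obstacle is the careful bookkeeping of approximation errors. Specifically: (i) the conditional distribution of $Z_t$ given $a$ is still in the pool need not equal the stationary distribution, and the perturbation must be controlled via a coupling with the Markov chain run without $a$; (ii) the multiplicative error $1\pm O(d\log(m)/\sqrt{m})$ in $\beta_{Z_t}$ accumulates through the integration and must be traded against the exponential decay $e^{-\gamma t}$, which is why the effective time horizon is $O(1)$; and (iii) the strategy $c(\cdot)$ is allowed to be arbitrary, so one must handle both rapid reporting regimes (where $\phi$ decays quickly, limiting error accumulation) and patient regimes uniformly. The hypothesis $10 d^{4}\log^{3}(m)\le\sqrt{m}$ is calibrated precisely so that all these contributions collapse into the stated $O(d^{4}\log^{3}(m)/\sqrt{m})\beta$ slack.
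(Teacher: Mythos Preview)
Your overall architecture is correct and in fact your optimization step is cleaner than the paper's. The paper obtains the same closed-form surrogate
\[
U_c(a)=\int_{0}^{\infty}(1-\beta)(2+c(t))\exp\Big(-\int_0^t(2+c(\tau)-\beta)\,d\tau\Big)e^{-\delta t}\,dt
\]
but then shows $c\equiv 0$ is optimal by a \emph{local perturbation}: it moves the mass of $c$ on a small interval $[t,t+\eps]$ to $[t+\eps,t+2\eps]$ and checks directly that the change in $U_c(a)$ has sign $\beta-\delta$. Your integration-by-parts reduction to $(1-\beta)\bigl[1+(\beta-\delta)J\bigr]$ with $J=\int_0^\infty e^{-C(t)-\gamma t}\,dt$ accomplishes the same thing globally and more transparently; it also makes the role of the hypothesis $\delta\le\beta$ completely explicit.

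Where you diverge from the paper, and where your sketch is shakier, is the treatment of the conditioning on $\{a\in A_t\}$. Your claim that this conditioning perturbs the law of $Z_t$ by ``at most an additive $O(1)$ via coupling, since $a$ contributes at most one agent'' is not correct as stated: conditioning on $a$'s survival up to time $t$ is a \emph{survivorship} event that biases the entire trajectory $\{Z_s\}_{s\le t}$ (large pools mean $a$ is more likely to have been selected as someone else's partner), not merely a one-vertex perturbation of the current state. The paper does not attempt any such coupling. Instead it introduces a data-dependent cutoff $t^*$ defined by $\int_0^{t^*}(2+c(t))\,dt=2\log(m/\beta)$, so that for $t\le t^*$ one has the crude lower bound $\P{a\in A_t}\ge \beta^2/m^2$; the concentration tail from \autoref{prop:criticalconcentration} is then simply \emph{divided} by this lower bound to control $\P{Z_t\notin[k^*\pm\sigma\sqrt{4m}]\mid a\in A_t}$. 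Beyond $t^*$ the residual utility is at most $\beta/m$ and is absorbed into the error term. Your remark that ``the effective time horizon is $O(1)$'' gestures toward this, but $e^{-\gamma t}$ alone does not control the conditional tail since $\phi(t)$ can decay much faster than $e^{-\gamma t}$ when $c$ is large, and in that regime $\P{a\in A_t}$ is tiny while the conditional distribution of $Z_t$ could in principle be far from stationary. The $t^*$ device handles both regimes uniformly because it is defined through $\int(2+c)$ rather than through $t$ itself. If you replace your coupling sentence with this cutoff argument, the rest of your proof goes through.
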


We prove the lemma by  first  to writing a closed form expression for the utility of $a$ and
then upper-bounding that expression. We start by studying the expected gain of $a$ in a tiny interval $[t,t+\eps]$.

In the following claim we study the probability $a$ is matched in the interval $[t,t+\eps]$ and the probability that it leaves the market in that interval.
\begin{claim}
\label{cl:mechanismmatcht}
For any time $t\geq 0$, and $\eps>0$,
\begin{align}
\P{a \in M_{t,t+\eps}} &= \eps\cdot \P{a\in A_t} (2+c(t))\E{1-(1-d/m)^{Z_t}-1| a\in A_t} \pm O(\eps^2)\\
\P{a\notin A_{t+\eps},a\in A_t} &=  \P{a\in A_t}  (1-\eps(1 + c(t) + \E{1-(1-d/m)^{Z_t-1} | a\in A_t})  \pm O(\eps^2))
\end{align}
\end{claim}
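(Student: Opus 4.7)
The plan is to decompose $\{a \in M_{t,t+\eps}\}$ into the disjoint events indexed by which reporting event triggers the match, compute the instantaneous rate of each, and sum.  As a preliminary, I would extend \autoref{Z_suff_stat} to $\amechpatient{\infty}$: the mechanism queries an edge only when one of its endpoints reports critical, so by a lazy-revelation argument, together with truthfulness of all agents other than $a$, conditional on $a \in A_t$ and $Z_t$ the edges among the survivors remain Erd\H{o}s-R\'enyi with parameter $d/m$.  In particular, any fixed vertex has at least one neighbor in $A_t$ with probability $1-(1-d/m)^{Z_t-1}$.

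Three disjoint triggers can cause $a$ to be matched in $[t,t+\eps]$: (i) $a$ truly becomes critical (rate $1$); (ii) $a$'s strategy clock ticks (rate $c(t)$); and (iii) some other $b \in A_t \setminus \{a\}$ reports critical (rate $1$ per such $b$) and picks $a$ as her uniformly-random neighbor.  Triggers (i) and (ii) each convert to a match with probability $1-(1-d/m)^{Z_t-1}$, jointly contributing rate $(1+c(t))(1-(1-d/m)^{Z_t-1})$.  The main subcalculation is (iii): conditional on $a,b \in A_t$ and $Z_t$, we have $a \in N_t(b)$ with probability $d/m$, and given $a \in N_t(b)$ the remaining degree $|N_t(b)|-1$ is distributed as $\operatorname{Bin}(Z_t-2, d/m)$.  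Applying the identity $\E{1/(1+X)}=(1-(1-p)^n)/(np)$ for $X \sim \operatorname{Bin}(n-1,p)$ (a one-line consequence of $\int_0^1 x^k\,dx = 1/(k+1)$) yields
\begin{equation*}
\P{a \text{ matched to } b \mid b \text{ reports critical},\, a,b \in A_t,\, Z_t} \;=\; \frac{d}{m} \cdot \frac{m\bigl(1-(1-d/m)^{Z_t-1}\bigr)}{d(Z_t-1)} \;=\; \frac{1-(1-d/m)^{Z_t-1}}{Z_t-1}.
\end{equation*}
Summing over the $Z_t-1$ other agents, trigger (iii) contributes rate $1-(1-d/m)^{Z_t-1}$, so the total instantaneous matching rate for $a$ is $(2+c(t))(1-(1-d/m)^{Z_t-1})$.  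Multiplying by $\eps$, taking expectation conditional on $a \in A_t$, and absorbing the probability of two events in the interval into $O(\eps^2)$ yields the first equation.

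For the second equation the rate at which $a$ leaves $A_t$ equals the matching rate plus the perishing-unmatched rate.  Perishing unmatched happens exactly when $a$ reports critical (rate $1+c(t)$) but has no neighbors in $A_t$ (probability $(1-d/m)^{Z_t-1}$), contributing $(1+c(t))(1-d/m)^{Z_t-1}$.  Adding to the matching rate and simplifying,
\begin{equation*}
\text{departure rate} \;=\; (2+c(t))\bigl(1-(1-d/m)^{Z_t-1}\bigr) + (1+c(t))(1-d/m)^{Z_t-1} \;=\; (1+c(t)) + \bigl(1-(1-d/m)^{Z_t-1}\bigr).
\end{equation*}
Once more multiplying by $\eps$, taking the appropriate conditional expectation, and expanding $1-\eps\cdot(\text{departure rate})$ produces the second equation.

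The hard part will be justifying the extension of \autoref{Z_suff_stat} to conditioning on $a \in A_t$ (rather than only on $Z_t$) in the presence of $a$'s non-truthful strategy $c(\cdot)$.  The key observation is that the past matching history, which determines whether $a$ is still unmatched at time $t$, depended only on edges incident to previously-critical agents and on $a$'s strategy clock (which is independent of the graph); by exchangeability of the other non-critical survivors, this history is independent of the unrevealed edges among the current pool $A_t$.  This is the same lazy-revelation observation that powers \autoref{Z_suff_stat}, adapted to one distinguished agent playing $c(\cdot)$.  Once this is in place, the remaining steps are routine Poisson-process bookkeeping.
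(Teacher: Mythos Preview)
Your decomposition into the three triggers---$a$'s true criticality, $a$'s strategy clock, and another agent $b$ becoming critical and selecting $a$---is exactly the paper's approach; the paper simply asserts the key probability $\bigl(1-(1-d/m)^{Z_t-1}\bigr)/(Z_t-1)$ without your binomial-identity derivation, and dismisses the second equation as ``simple algebraic manipulations.'' Your explicit attention to why the Erd\H{o}s--R\'enyi structure survives conditioning on $\{a\in A_t\}$ under a non-truthful $c(\cdot)$ addresses a point the paper glosses over entirely, so your proposal is, if anything, more complete than the original.
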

\begin{proof}
The claim follows from two simple observations. First, $a$ becomes critical in the interval $[t,t+\eps]$
with probability $\eps\cdot \P{a\in A_t} (1+c(t))$ and if he is critical he is matched with probability $\E{(1-(1-d/m)^{Z_t-1} | a\in A_t}$. Second, $a$ may also get matched (without getting critical) in the interval $[t,t+\eps]$. Observe that  if an agent $b\in A_t$ where $b\neq a$ gets critical she will be matched with $a$ with probability $ 1-(1-d/m)^{Z_t-1} / (Z_t-1)$,.
Therefore, the probability that $a$ is matched at $[t,t+\eps]$ without getting critical is
\begin{align*} \P{a\in A_t} &\cdot \E{\eps\cdot (Z_t-1)\frac{1-(1-d/m)^{Z_t-1}}{Z_t-1} | a\in A_t} \\
&= \eps\cdot \P{a\in A_t} \E{1-(1-d/m)^{Z_t-1} | a\in A_t}
\end{align*}
The claim follows from simple algebraic manipulations. 
\end{proof}

We need to  study the conditional expectation $\E{1-(1-d/m)^{Z_t-1} | a\in A_t}$ to use the above claim. This is not easy in general; although the distribution of $Z_t$ remains  stationary, the distribution of $Z_t$ conditioned on $a\in A_t$ can be a very different distribution. So, here we prove simple upper and lower bounds on $\E{1-(1-d/m)^{Z_t-1} | a\in A_t}$ using the concentration properties of $Z_t$.
By the assumption of the lemma $Z_t$ is at stationary at any time $t\geq 0$. 
Let $k^*$ be the number defined in \autoref{prop:criticalconcentration},  and $\beta=(1-d/m)^{k^*}$.
Let $\sigma:=\sqrt{6\log(8m/\beta)}$. By \autoref{prop:criticalconcentration}, for any $t\geq 0$,
\begin{align}
 \E{1-(1-d/m)^{Z_t-1} | a\in A_t} &\leq \E{1-(1-d/m)^{Z_t-1} | Z_t < k^*+\sigma\sqrt{4m},a\in A_t}  \nonumber \\
&~~+\P{Z_t\geq k^*+\sigma\sqrt{4m} | a\in A_t}\nonumber \\
&\leq 1-(1-d/m)^{k^*+\sigma\sqrt{4m}}+ \frac{\P{Z_t \geq k^*+\sigma\sqrt{4m}}}{\P{a\in A_t}}  \nonumber \\
& \leq 1-\beta +  \beta(1-(1-d/m)^{\sigma\sqrt{4m}})+ \frac{8\sqrt{m} e^{-\sigma^2/3}}{\P{a\in A_t}}\nonumber \\
&\leq 1-\beta+\frac{2 \sigma d \beta }{\sqrt{m}} + \frac{\beta}{m^2 \cdot \P{a\in A_t}}
 \label{eq:mechanismmatchupper}
 \end{align}
  In the last inequality we used \eqref{eq:bernoulli} and the definition of $\sigma$.
 Similarly,
 \begin{align}
 \E{1-(1-d/m)^{Z_t-1} | a\in A_t}  &\geq \E{1-(1-d/m)^{Z_t-1} | Z_t \geq k^*-\sigma\sqrt{4m},a\in A_t}\nonumber \\
 &~~\cdot \P{Z_t \geq k^*-\sigma\sqrt{4m} | a\in A_t}\nonumber\\
 &\geq (1-(1-d/m)^{k^*-\sigma\sqrt{4m}}) \frac{\P{a\in A_t} - \P{Z_t < k^*-\sigma\sqrt{4m}}}{\P{a\in A_t}}\nonumber \\
&\geq 1-\beta -\beta((1-d/m)^{-\sigma\sqrt{4m}}-1) -\frac{2\sqrt{m}e^{-\sigma^2}}{\P{a\in A_t}}\nonumber \\
&\geq  1-\beta -  \frac{4d\sigma \beta}{\sqrt{m}} -\frac{\beta^3}{m^3 \cdot \P{a\in A_t}}
\label{eq:mechanismmatchlower}
\end{align}
where in the last inequality we used \eqref{eq:bernoulli}, the assumption that $2d\sigma\leq \sqrt{m}$
and the definition of $\sigma$.

Next, we write a closed form upper-bound for $\P{a\in A_t}$. 
Choose $t^*$ such that $\int_{t=0}^{t^*} (2+c(t)) dt = 2\log(m/\beta)$. 
Observe that $t^* \leq \log(m/\beta) \leq \sigma^2/6$. Since $a$ leaves the market with rate at least $1+c(t)$ and at most $2+c(t)$,
we can write  
\begin{equation}
\label{eq:tstar}
\frac{\beta^2}{m^2} = \exp\Big(-\int_{t=0}^{t^*} (2+c(t))dt\Big) \leq \P{a\in A_{t^*}} \leq \exp\Big(-\int_{t=0}^{t^*} (1+c(t))dt\Big)\leq  \frac{\beta}{m}
\end{equation}  
Intuitively, $t^*$ is a moment where the expected utility of that $a$ receives in the interval $[t^*,\infty)$ is negligible, i.e., in the best case it is at most $\beta/m$. 

By  \autoref{cl:mechanismmatcht} and \eqref{eq:mechanismmatchlower}, for any $t\leq t^*$,
\begin{eqnarray*} 
\frac{\P{a\in A_{t+\eps}} - \P{a\in A_t}}{\eps}
& \leq& -\P{a\in A_t}\Big(2 + c(t) - \beta-\frac{4d\sigma\beta}{\sqrt{m}} -\frac{\beta^3}{m^3 \cdot \P{a\in A_t}} \pm O(\eps)\Big)\nonumber\\
&\leq &-\P{a\in A_t}\Big(2+c(t)-\beta-\frac{5d\sigma\beta}{\sqrt{m}}\pm O(\eps)\Big)
\end{eqnarray*}
where in the last inequality we used \eqref{eq:tstar}.
Letting $\eps\to 0$, for $t\leq t^*$, the above differential equation yields,
\begin{equation}
\label{eq:ptupper}
 \P{a\in A_t} \leq \exp\Big(-\int_{\tau=0}^t \Big(2+c(\tau)-\beta -\frac{5d\sigma\beta}{\sqrt{m}}\Big) d\tau\Big) \leq \exp\Big(-\int_{\tau=0}^t (2+c(\tau)-\beta)d\tau\Big) + \frac{2 d\sigma^3\beta}{\sqrt{m}}.
 \end{equation}
where in the last inequality we used $t^*\leq \sigma^2/6$, $e^x\leq 1+2x$ for $x\leq 1$ and lemma's assumption $5 d\sigma^2\leq \sqrt{m}$ . 

Now, we are ready to upper-bound the utility of $a$.
By \eqref{eq:tstar} the expected utility that $a$ gains after $t^*$ is no more than $\beta/m$.
Therefore,
\begin{eqnarray*} \E{u_c(a)} &\leq & \frac{\beta}{m} +\int_{t=0}^{t^*} (2+c(t)) \E{1-(1-d/m)^{Z_t-1}|a\in A_t} \P{a\in A_t} e^{-\delta t}dt \\
&\leq & \frac{\beta}{m} +\int_{t=0}^{t^*} (2+c(t)) ((1-\beta)\P{a\in A_t}+\beta/\sqrt{m}) e^{-\delta t}dt\\
&\leq & \frac{\beta}{m} + \int_{t=0}^{t^*} (2+c(t)) \Big((1-\beta) \exp\Big(-\int_{\tau=0}^t (2+c(\tau)-\beta)d\tau \Big) + \frac{3 d\sigma^3}{\sqrt{m}}\beta\Big) e^{-\delta t}dt \\
&\leq& \frac{2d \sigma^5}{\sqrt{m}}\beta + \int_{t=0}^{\infty} (1-\beta) (2+c(t)) \exp\Big(-\int_{\tau=0}^t (2+c(\tau)-\beta)d\tau\Big) e^{-\delta t}dt.
\end{eqnarray*}
 In the first inequality we used equation \eqref{eq:mechanismmatchupper}, 
 in second inequality we used equation \eqref{eq:ptupper}, and in the last 
 inequality we use the definition of $t^*$. 
 We have finally obtained a closed form upper-bound on the expected utility of $a$.

Let $U_c(a)$ be the right hand side of the above equation.
Next, we show that $U_c(a)$ is maximized by letting $c(t)=0$ for all $t$.
This will complete the proof of \autoref{lem:utilitystrategyupperbound}.
Let $c$ be a function that maximizes $U_c(a)$ which is not equal to zero. Suppose $c(t)\neq 0$ for some $t\geq 0$. We define a function $\tilde{c}:\R_+\to\R_+$ and we show that if $\delta < \beta$, then $U_{\tilde{c}}(a) > U_c(a)$. 
Let $\tilde{c}$ be the following function,
$$ \tilde{c}(\tau)=\begin{cases}
c(\tau) & \text{if $\tau<t$},\\
0 & \text{if $t\leq \tau\leq t+\eps$,}\\
c(\tau)+c(\tau-\eps) & \text{if $t+\eps\leq\tau\leq t+2\eps$,}\\
c(\tau) & \text{otherwise}.
\end{cases}$$
In words, we push the mass of $c(.)$ in the interval $[t,t+\eps]$ to the right. We remark that the above function $\tilde{c}(.)$ is not necessarily continuous so we need to smooth it out. The latter  can be done without introducing any errors and we do not describe the details here. Let $S:=\int_{\tau=0}^t (1+c(t)+\beta)d\tau$. Assuming
$\tilde{c}'(t) \ll 1/\eps$, we have
\begin{eqnarray*}
U_{\tilde{c}}(a)-U_c(a) &\geq & -\eps \cdot c(t) (1-\beta) e^{-S} e^{-\delta t} + \eps\cdot c(t) (1-\beta)  e^{-S-\eps(2-\beta)} e^{-\delta(t+\eps)} \\
&&+ \eps (1-\beta)(2+c(t+\eps)) (e^{-S-\eps(2-\beta)} e^{-\delta (t+\eps)}-e^{-S-\eps(2+c(t)-\beta)}e^{-\delta(t+\eps)}) \\
&=& -\eps^2 \cdot c(t)(1-\beta)e^{-S-\delta t}  (2-\beta+\delta) + \eps^2(1-\beta)(2+c(t+\eps)) e^{-S-\delta t} c(t)\\
&\geq & \eps^2\cdot(1-\beta) e^{-S-\delta t} c(t) (\beta-\delta).
\end{eqnarray*}
Since $\delta < \beta$ by the lemma's assumption, the maximizer of $U_c(a)$ is the all zero function. 
Therefore, for any well-behaved function $c(.)$,
\begin{eqnarray*} \E{u_c(a)} &\leq& \frac{2d\sigma^5}{\sqrt{m}}\beta+ \int_{t=0}^\infty 2(1-\beta)\exp\Big(-\int_{\tau=0}^t (2-\beta)d\tau\Big) e^{-\delta t}dt \\
&\leq & O(\frac{ d^{4}\log^{3}(m)}{\sqrt{m}})\beta + \frac{2(1-\beta)}{2-\beta+\delta}.
\end{eqnarray*}
In the  last inequality we used that $\sigma=O(\sqrt{\log(m/\beta)})$ and $\beta\leq e^{-d}$.
This completes the proof of \autoref{lem:utilitystrategyupperbound}.

The proof of \autoref{thm:mechanismdesign} follows simply from the above analysis.
\begin{proofof}{\autoref{thm:mechanismdesign}}
All we need to do is to lower-bound the expected utility of an agent $a$ if she is truthful.
We omit the details as they are essentially similar. So, if all agents are truthful, 
$$\E{u(a)} \geq \frac{2(1-\beta)}{2-\beta+\delta} - O\Big(\frac{d^{4}\log^{3}(m)}{\sqrt{m}}\Big) \beta.$$
This shows that the strategy vector corresponding to truthful agents is an $O(d^{4}\log^{3}(m)/\sqrt{m})$-mixed  strategy Nash equilibrium. 
\end{proofof}

\section{Concluding Remarks}\label{conclusion}

In this paper, we developed a new framework to model dynamic matching markets.  This paper innovates by accounting for stochastic departures and analyzing the problem under a variety of information conditions.  Rather than modeling market thickness via a fixed match-function, it explicitly accounts for the network structure that affects the Planner's options.  This allows market thickness to emerge as an endogenous phenomenon, responsive to the underlying constraints. 

There are many real market design problems where the timing of transactions must be decided by a policymaker.  These include paired kidney exchanges, dating agencies, and online labor markets such as oDesk.  In such markets, policymakers face a trade-off between the speed of transactions and the thickness of the market.  It is natural to ask, ``Does it matter \emph{when} transactions occur?  How much does it matter?''  The first insight of this paper is that waiting to thicken the market can yield substantial welfare gains, and this result is quite robust to the presence of waiting costs.

The second insight of this paper relates to optimization.  Because our approach takes seriously the network structure of the planner's constraints, the resulting Markov decision process is combinatorially complex, and not tractable via standard dynamic programming methods.  Surprisingly, we find that na\"{i}ve local algorithms with different waiting properties can come close to optimal benchmarks that exploit the whole graph structure.

The third insight of this paper is that information and waiting time are complements; even short-horizon information about agents' departure times yields large gains that can be exploited by simple waiting algorithms.  When the urgency of individual cases is private information, we exhibit a mechanism without transfers that elicits such information from sufficiently patient agents.

These results suggest that the dimension of time is a first-order concern in many matching markets, with welfare implications that static models do not capture.  They also suggest that policymakers would reap large gains from acquiring timing information about agent departures, such as by paying for predictive diagnostic testing or monitoring agents' outside options.  

A key technical contribution of this paper is that we show how to characterize dynamic matching markets as analytically tractable Markov processes.  We develop new techniques to prove concentration bounds on stochastic processes; in particular, to show that for a large $t$, a given function is concentrated in an interval whose size does not depend on $t$.  These enable us to deliver analytic comparisons of different algorithms, even when their stationary distributions do not have simple closed-form characterizations.  We hope these techniques are helpful to other researchers interested in dynamic matching problems.

We suggest some promising extensions.  First, one could generalize the model to allow multiple \emph{types} of agents, with the probability of an acceptable transaction differing across type-pairs.  This could capture settings where certain agents are known \emph{ex ante} to be less likely to have acceptable trades than other agents, as is the case for patients with high Panel Reactive Antibody (PRA) scores in paired kidney exchanges.  The multiple-types framework also contains bipartite markets as a special case.

Second,  one could adopt more gradual assumptions about agent departure processes; agents could have a range of individual \emph{states} with state-dependent perishing rates, and an independent process specifying transition rates between states.  Our model, in which agents transition to a critical state at rate $\lambda$ and then perish imminently, is a limit case of the multiple-states framework.

Third, it would be interesting to enrich the space of preferences in the model, such as by allowing matches to yield a range of payoffs.  We conjecture that this would reinforce our existing results, since waiting to thicken the market could allow planners to make better matches, in addition to increasing the size of the matching.  Further insights may come by making explicit the role of price in dynamic matching markets.  It is not obvious how to do so, but similar extensions have been made for static models \cite{KC82,HM05}, and the correct formulation may seem obvious in retrospect.

Much remains to be done in the theory of dynamic matching.  As market design expands its reach, re-engineering markets from the ground up, economists will increasingly have to answer questions about the timing and frequency of transactions.  Many dynamic matching markets have important features (outlined above) that we have not modeled explicitly.  We offer this paper as a step towards systematically understanding matching problems that take place across time.

\appendix

\bibliographystyle{alpha}
\bibliography{references}

\appendix
\section{Auxiliary Inequalities}
In this section we prove several inequalities that are used throughout the paper.
For any $a,b \geq 0$, 
\begin{eqnarray}
\sum_{i=a}^\infty e^{-bi^2} = \sum_{i=0}^\infty e^{-b(i+a)^2}  \leq \sum_{i=0}^\infty e^{-ba^2 -2iab}
&=& e^{-ba^2} \sum_{i=0}^\infty (e^{-2ab})^i \nonumber \\
&=& \frac{e^{-ba^2}}{1-e^{-2ab}} \leq \frac{e^{-ba^2}}{\min\{ab,1/2\}}.
\label{eq:sumsqexp}
\end{eqnarray}
The last inequality can be proved as follows: If $2ab\leq 1$, then $e^{-2ab} \leq ab$, otherwise
$e^{-2ab} \leq 1/2$.

For any $a,b\geq 0$,
\begin{equation}
\label{eq:sumexpexp}
\sum_{i=a}^\infty (i-1)e^{-bi^2}  \leq \int_{a-1}^\infty xe^{-bx^2}dx  = \frac{-1}{2b} e^{-bx^2}\mid^\infty_{a-1} = \frac{e^{-b(a-1)^2}}{2b}.
\end{equation}

For any $a \geq 0$ and $0\leq b\leq 1$,
\begin{eqnarray}
\label{eq:sumexp}
\sum_{i=a}^\infty ie^{-bi} = e^{-ba} \sum_{i=0}^\infty (i+a)e^{-bi} = e^{-ba} \Big(\frac{a}{1-e^{-b}} + \frac{1}{(1-e^{-b})}\Big) \leq \frac{e^{-ba} (2ba+4)}{b^2}.
\end{eqnarray}

The Bernoulli inequality states that for any $x\leq 1$, and any  $n\geq 1$,
\begin{equation}
\label{eq:bernoulli}
 (1-x)^n \geq 1-xn.
 \end{equation}
Here, we prove for integer $n$. The above equation can be proved by a simple induction on $n$. It trivially holds for $n=0$. Assuming it holds for $n$ we can write,
 $$ (1-x)^{n+1} = (1-x)(1-x)^n \geq (1-x) (1-xn) = 1-x(n+1)+x^2n \geq 1-x(n+1).$$

\end{document}